\documentclass[final,11pt]{article}

% options for the file macros.tex

\def\showauthornotes{0}
\def\showkeys{0}
\def\showdraftbox{0}

% {{{ common2 }}}

\usepackage{xspace,xcolor,enumerate}

\usepackage[full]{textcomp}

% {{{ amsmath }}}

\usepackage{amsmath}

% {{{ boldmath }}}

% fix for "too many math alphabets" problem
 % default 3

\usepackage{bm}

% {{{ amsthm }}}
\usepackage{amsthm}

\newtheorem{theorem}{Theorem}[section]
\newtheorem*{theorem*}{Theorem}

\newtheorem{claim}[theorem]{Claim}
\newtheorem*{claim*}{Claim}

\newtheorem{proposition}[theorem]{Proposition}
\newtheorem*{proposition*}{Proposition}
\newtheorem{lemma}[theorem]{Lemma}
\newtheorem*{lemma*}{Lemma}
\newtheorem{corollary}[theorem]{Corollary}
\newtheorem{conjecture}[theorem]{Conjecture}
\newtheorem*{conjecture*}{Conjecture}

\newtheorem{fact}[theorem]{Fact}
\newtheorem*{fact*}{Fact}
\newtheorem{hypothesis}[theorem]{Hypothesis}
\newtheorem*{hypothesis*}{Hypothesis}

\theoremstyle{definition}
\newtheorem{definition}[theorem]{Definition}

\newtheorem{problem}[theorem]{Problem}

\newtheorem{remark}[theorem]{Remark}

% {{{ fullpage }}}

\usepackage{fullpage}

% {{{ txfonts }}}

\usepackage[varg]{txfonts} % varg - uses nicer g,v,y,w
\renewcommand{\mathbb}{\varmathbb}

% {{{ showkeys }}}

\ifnum\showkeys=1
\usepackage[color]{showkeys}
\fi

% {{{ hyperref }}}
\usepackage[colorlinks,linkcolor=blue,filecolor=blue,citecolor=blue,urlcolor=blue]{hyperref}

% {{{ prettyref }}}
\usepackage{prettyref}

% From manual:
% \newrefformat{eq}{\textup{(\ref{#1})}}
% \newrefformat{lem}{Lemma~\ref{#1}}
% \newrefformat{thm}{Theorem~\ref{#1}}
% \newrefformat{cha}{Chapter~\ref{#1}}
% \newrefformat{sec}{Section~\ref{#1}}
% \newrefformat{tab}{Table~\ref{#1} on page~\pageref{#1}}
% \newrefformat{fig}{Figure~\ref{#1} on page~\pageref{#1}}

\newcommand{\savehyperref}[2]{\texorpdfstring{\hyperref[#1]{#2}}{#2}}

\newrefformat{eq}{\savehyperref{#1}{\textup{(\ref*{#1})}}}
\newrefformat{lem}{\savehyperref{#1}{Lemma~\ref*{#1}}}
\newrefformat{def}{\savehyperref{#1}{Definition~\ref*{#1}}}
\newrefformat{thm}{\savehyperref{#1}{Theorem~\ref*{#1}}}
\newrefformat{cor}{\savehyperref{#1}{Corollary~\ref*{#1}}}
\newrefformat{cha}{\savehyperref{#1}{Chapter~\ref*{#1}}}
\newrefformat{sec}{\savehyperref{#1}{Section~\ref*{#1}}}
\newrefformat{app}{\savehyperref{#1}{Appendix~\ref*{#1}}}
\newrefformat{tab}{\savehyperref{#1}{Table~\ref*{#1}}}
\newrefformat{fig}{\savehyperref{#1}{Figure~\ref*{#1}}}
\newrefformat{hyp}{\savehyperref{#1}{Hypothesis~\ref*{#1}}}
\newrefformat{alg}{\savehyperref{#1}{Algorithm~\ref*{#1}}}
\newrefformat{rem}{\savehyperref{#1}{Remark~\ref*{#1}}}
\newrefformat{item}{\savehyperref{#1}{Item~\ref*{#1}}}
\newrefformat{step}{\savehyperref{#1}{step~\ref*{#1}}}
\newrefformat{conj}{\savehyperref{#1}{Conjecture~\ref*{#1}}}
\newrefformat{fact}{\savehyperref{#1}{Fact~\ref*{#1}}}
\newrefformat{prop}{\savehyperref{#1}{Proposition~\ref*{#1}}}
\newrefformat{claim}{\savehyperref{#1}{Claim~\ref*{#1}}}
\newrefformat{relax}{\savehyperref{#1}{Relaxation~\ref*{#1}}}
\newrefformat{red}{\savehyperref{#1}{Reduction~\ref*{#1}}}
\newrefformat{part}{\savehyperref{#1}{Part~\ref*{#1}}}

% {{{ sref }}}

% short section reference
\newcommand{\Sref}[1]{\hyperref[#1]{\S\ref*{#1}}}

% {{{ nicefrac }}}
% commands for fractions 
\usepackage{nicefrac}
% poor man's fraction

\let\nfrac=\nicefrac
\let\ffrac=\flatfrac
% similar commands: tfrac,dfrac

\newcommand{\half}{\nicefrac12}

% {{{ microtype }}}

\usepackage{microtype}

% {{{ authornotes }}}
\ifnum\showauthornotes=1
\newcommand{\Authornote}[2]{{\sffamily\small\color{red}{[#1: #2]}}}
\newcommand{\Authorcomment}[2]{{\sffamily\small\color{gray}{[#1: #2]}}}
\newcommand{\Authorstartcomment}[1]{\sffamily\small\color{gray}[#1: }

\newcommand{\Authorfnote}[2]{\footnote{\color{red}{#1: #2}}}
\newcommand{\Authorfixme}[1]{\Authornote{#1}{\textbf{??}}}
\newcommand{\Authormarginmark}[1]{\marginpar{\textcolor{red}{\fbox{\Large #1:!}}}}
\else
\newcommand{\Authornote}[2]{}
\newcommand{\Authorcomment}[2]{}
\newcommand{\Authorstartcomment}[1]{}

\newcommand{\Authorfnote}[2]{}
\newcommand{\Authorfixme}[1]{}
\newcommand{\Authormarginmark}[1]{}
\fi

\newcommand{\Dcomment}{\Authorcomment{D}}

\newcommand{\Pnote}{\Authornote{P}}

\newcommand{\Mnote}{\Authornote{M}}

% {{{ boxedminipage }}}
\usepackage{boxedminipage}

\newenvironment{mybox}
{\center \noindent\begin{boxedminipage}{1.0\linewidth}}
{\end{boxedminipage}
\noindent
}

% {{{ parentheses }}}
% various bracket-like commands
% round parentheses
\newcommand{\paren}[1]{(#1)}
\newcommand{\Paren}[1]{\left(#1\right)}
\newcommand{\bigparen}[1]{\big(#1\big)}
\newcommand{\Bigparen}[1]{\Big(#1\Big)}
% square brackets

\newcommand{\Brac}[1]{\left[#1\right]}

% absolute value 
\newcommand{\abs}[1]{\lvert#1\rvert}

% cardinality
\newcommand{\card}[1]{\lvert#1\rvert}

% set
\newcommand{\set}[1]{\{#1\}}
\newcommand{\Set}[1]{\left\{#1\right\}}

% norm
\newcommand{\norm}[1]{\lVert#1\rVert}

% 2-norm

% 2-norm squared

% norm squared
\newcommand{\snorm}[1]{\norm{#1}^2}

% 1-norm

% infty-norm

% inner product
\newcommand{\iprod}[1]{\langle#1\rangle}

% {{{ probability }}}
% expectation, probability, variance
\newcommand{\Esymb}{\mathbb{E}}
\newcommand{\Psymb}{\mathbb{P}}

\DeclareMathOperator*{\E}{\Esymb}

\DeclareMathOperator*{\ProbOp}{\Psymb}

\renewcommand{\Pr}{\ProbOp}

% TODO: make case distinction if optional argument is not set

\newcommand{\Prob}[2][]{\Pr_{{#1}}\Set{#2}}

\newcommand{\Ex}[2][]{\E_{{#1}}\Brac{#2}}

\newcommand{\given}{\;\middle\vert\;}

% {{{ mathfonts }}}

\usepackage{dsfont}
\usepackage{yfonts}
\usepackage{mathrsfs}

% {{{ miscmacros }}}

% middle delimiter in the definition of a set
\newcommand{\suchthat}{\;\middle\vert\;}

% tensor product
\newcommand{\tensor}{\otimes}

% add explanations to math displays

\newcommand{\textparen}[1]{\text{(#1)}}

\ifx\because\undefined
\newcommand{\because}[1]{\textparen{because #1}}
\else
\renewcommand{\because}[1]{\textparen{because #1}}
\fi

% spectral order (Loewner order)

% smallest and largest eigenvalue

% symmetric difference

% set of bits
\newcommand{\bits}{\{0,1\}}

% no stupid bullets for itemize environmentx

% control white space of list and display environments

% short for emptyset
%\newcommand{\eset}{\emptyset}
% moved to mathabbreviations

% short for epsilon
%\newcommand{\e}{\epsilon}
% moved to mathabbreviations

% super index with parentheses
\newcommand{\super}[2]{#1^{\paren{#2}}}

% multiplicative inverse
\newcommand{\inv}[1]{{#1^{-1}}}

% dual element

% subset
%\newcommand{\sse}{\subseteq}
% moved to mathabbreviations

% vertical space in math formula
\newcommand{\vbig}{\vphantom{\bigoplus}}

% setminus
\newcommand{\sm}{\setminus}

% define something by an equation (display)
\newcommand{\defeq}{\stackrel{\mathrm{def}}=}     

% define something by an equation (inline)
\newcommand{\seteq}{\mathrel{\mathop:}=}

% declare function f by $f \from X \to Y$
\newcommand{\from}{\colon}

% big middle separator (for conditioning probability spaces)

% better vector definition and some variations

% punctuation at the end of a displayed formula
\newcommand{\mper}{\,.}
\newcommand{\mcom}{\,,}

% inner product for matrices
\newcommand\bdot\bullet

% transpose

% indicator function / vector
\ifx\mathds\undefined % use double stroke fonts if available
\newcommand{\Ind}{\mathbb I}
\else
\newcommand{\Ind}{\mathds 1}
\fi

% place a qed symbol inside display formula
%\qedhere 

% {{{ textabbreviations }}}

% some abbreviations

\newcommand{\etal}{et al.\xspace}

% {{{ mathabbreviations }}}

\newcommand{\la}{\leftarrow}
\newcommand{\sse}{\subseteq}

\newcommand{\e}{\epsilon}
\newcommand{\eps}{\epsilon}

% {{{ superscripts }}}

% {{{ mathoperators }}}

\DeclareMathOperator{\Inf}{Inf}

\DeclareMathOperator{\opt}{opt}
\DeclareMathOperator{\vol}{vol}
\DeclareMathOperator{\poly}{poly}

% smaller summation symbols 

% {{{ numbersets }}}
% number sets

\newcommand{\N}{\mathbb N}
\newcommand{\R}{\mathbb R}

% {{{ problems }}}

% macros to denote computational problems

% use texorpdfstring to avoid problems with hyperref (can use problem
% macros also in headings
\newcommand{\problemmacro}[1]{\texorpdfstring{\textsc{#1}}{#1}\xspace}

% list of problems
\newcommand{\uniquegames}{\problemmacro{Unique Games}}
\newcommand{\maxcut}{\problemmacro{Max Cut}}

\newcommand{\vertexcover}{\problemmacro{Vertex Cover}}
\newcommand{\balancedseparator}{\problemmacro{Balanced Separator}}

\newcommand{\sparsestcut}{\problemmacro{Sparsest Cut}}

\newcommand{\smallsetexpansion}{\problemmacro{Small-Set Expansion}}
\newcommand{\minbisection}{\problemmacro{Min Bisection}}

% {{{ alphabet }}}

\newcommand{\cA}{\mathcal A}

\newcommand{\cE}{\mathcal E}

\newcommand{\cG}{\mathcal G} 
 
\newcommand{\cI}{\mathcal I}

\newcommand{\cP}{\mathcal P}

\newcommand{\cV}{\mathcal V}

\newcommand{\cZ}{\mathcal Z}

% {{{ names }}}
% Hungarian/Polish/East European names 

% {{{ leqslant }}}
% slanted lower/greater equal signs
\renewcommand{\leq}{\leqslant}
\renewcommand{\le}{\leqslant}
\renewcommand{\geq}{\geqslant}
\renewcommand{\ge}{\geqslant}

% {{{ draftbox }}}
\ifnum\showdraftbox=1
\newcommand{\draftbox}{\begin{center}
  \fbox{%
    \begin{minipage}{2in}%
      \begin{center}%
%        \begin{Large}%
          \Large\textsc{Working Draft}\\%
%        \end{Large}\\
        Please do not distribute%
      \end{center}%
    \end{minipage}%
  }%
\end{center}
\vspace{0.2cm}}
\else
\newcommand{\draftbox}{}
\fi

% {{{ varepsilon }}}

\let\epsilon=\varepsilon

% {{{ numberequationwithinsection }}}
\numberwithin{equation}{section}

% {{{ paragraphperiod }}}

\let\origparagraph\paragraph
\renewcommand{\paragraph}[1]{\origparagraph{#1.}}

% {{{ restate }}}

% \Dstore{A}{B} stores assigns variable A value B
\newcommand{\DSstore}[2]{%
  %  [DSstore: #1 $\leftarrow$ #2] % uncomment to debug
  \global\expandafter \def \csname DSMEMORY #1 \endcsname{#2}%
}

% \Dload{A} outputs value stored for variable A
\newcommand{\DSload}[1]{%
  \csname DSMEMORY #1 \endcsname%
}

% new label command, stores current label in \DScurrentlabel
\newcommand{\DSnewlabel}[1]{%
  % label: #1 % uncomment to debug
  % alternative \def\DScurrentlabel{#1}
  \newcommand\DScurrentlabel{#1}%
  \DSoldlabel{#1}%
}

% new label command that doesn't do anything
\newcommand{\DSdummylabel}[1]{}

\newcommand{\torestate}[1]{%
  % overwrite label command
  \let\DSoldlabel\label%
  \let\label\DSnewlabel%
  #1%
  % current label: \DScurrentlabel % uncomment to debug
  \DSstore{\DScurrentlabel}{#1}%
  % restore old label command
  \let\label\DSoldlabel%
}

\newcommand{\restatetheorem}[1]{%
  % overwrite label command with dummy
  \let\DSoldlabel\label
  \let\label\DSdummylabel
  \begin{theorem*}[Restatement of \prettyref{#1}]
    \DSload{#1}
  \end{theorem*}
  \let\label\DSoldlabel
}

\newcommand{\restatelemma}[1]{%
  % overwrite label command with dummy
  \let\DSoldlabel\label
  \let\label\DSdummylabel
  \begin{lemma*}[Restatement of \prettyref{#1}]
    \DSload{#1}
  \end{lemma*}
  \let\label\DSoldlabel
}

\newcommand{\restateprop}[1]{%
  % overwrite label command with dummy
  \let\DSoldlabel\label
  \let\label\DSdummylabel
  \begin{proposition*}[Restatement of \prettyref{#1}]
    \DSload{#1}
  \end{proposition*}
  \let\label\DSoldlabel
}

\newcommand{\restatefact}[1]{%
  % overwrite label command with dummy
  \let\DSoldlabel\label
  \let\label\DSdummylabel
  \begin{fact*}[Restatement of \prettyref{#1}]
    \DSload{#1}
  \end{fact*}
  \let\label\DSoldlabel
}

\newcommand{\restate}[1]{%
  % overwrite label command with dummy
  \let\DSoldlabel\label
  \let\label\DSdummylabel
  \DSload{#1}
  \let\label\DSoldlabel
}

\let\pref=\prettyref
\let\mla=\minimumlineararrangement
\let\BalancedSeparator=\balancedseparator

%%%%%%%%%%%%%%%%%%%%%%%%% Enclosures
\newcommand{\inparen}[1]{\left(#1\right)}
\newcommand{\inbrace}[1]{\left\{#1\right\}}
\newcommand{\insquare}[1]{\left[#1\right]}

\newcommand{\rep}[1]{\insquare{#1}}

% paper specific macros
\newcommand{\betabias}{\set{\bot,\top}_\beta}
\newcommand{\B}{\{0,1\}}

\renewcommand{\vol}{\mu}

\newcommand{\glorifiedmarkov}{Glorified Markov Inequality\xspace}

\DeclareMathOperator{\surf}{\partial}
\DeclareMathOperator{\cond}{\Phi}  % 

% tilde alphabet
\newcommand{\tOmega}{\tilde\Omega}
\newcommand{\tC}{\tilde C}
\newcommand{\tA}{\tilde A}
\newcommand{\tB}{\tilde B}
\newcommand{\tU}{\tilde U}

\newcommand{\tv}{\tilde v}

\newcommand{\uginst}{\UG}

\newcommand{\SSE}{\problemmacro{Small-Set Expansion}\xspace}

\newcommand{\ssehard}{\textsf{SSE}-hard\xspace}
\newcommand{\nphard}{\textsf{NP}-hard\xspace}

\newcommand{\NP}{\textsf{NP}\xspace}
\newcommand{\UGCexpand}{Unique Games Conjecture\xspace}

\newcommand{\ugcomp}{\eps}
\newcommand{\ugsound}{\eta}

\newcommand{\yes}{\textsc{Yes}\xspace}

\newcommand{\no}{\textsc{No}\xspace}

\newcommand{\case}{{\sf Case}}
\newcommand{\UG}{\mathcal U}

\newcommand{\SSEH}{Small-Set Expansion Hypothesis\xspace}
%%%%%%%%%%%%%%%%%%%%%%%%%

\title{%
  Reductions Between Expansion Problems
}%

\author{%
  Prasad Raghavendra\thanks{%
Georgia Institute of Technology, Atlanta, GA.  Research done while visiting Princeton University}%
  \and David Steurer\thanks{%
    Microsoft Research New England, Cambridge, MA.  Research done at Princeton University supported by NSF Grants CCF-0832797, 0830673, and 0528414.}%
  \and Madhur Tulsiani\thanks{%
    Princeton University and Institute for Advanced Study, Princeton, NJ. Work supported by NSF grant CCF-0832797
and IAS Sub-contract no. 00001583.
  }%
}

\date{\today}

\begin{document}

\sloppy

\maketitle

\draftbox
\setcounter{page}{1}
\thispagestyle{empty}
\begin{abstract}

The \emph{Small-Set Expansion Hypothesis} (Raghavendra, Steurer, STOC 2010)
is a natural hardness assumption concerning the problem of approximating
the edge expansion of small sets % (of size $\delta n$)
in graphs.
This hardness assumption is closely connected to the \emph{Unique Games
  Conjecture} (Khot, STOC 2002).
In particular,
the Small-Set Expansion Hypothesis implies the Unique Games Conjecture
(Raghavendra, Steurer, STOC 2010).

Our main result is that the Small-Set Expansion Hypothesis
is in fact equivalent to a variant of the Unique Games Conjecture.
More precisely, the hypothesis is equivalent to the Unique Games Conjecture restricted
to instance with a fairly mild condition on the expansion of small sets.
Alongside, we obtain the first strong hardness of approximation results for
the \textsc{Balanced Separator} and \textsc{Minimum Linear Arrangement}
problems.
Before, no such hardness was known for these problems even assuming the
Unique Games Conjecture.  

These results not only establish the Small-Set Expansion Hypothesis as a natural unifying hypothesis
that implies the Unique Games Conjecture, all its consequences and, in addition,
hardness results for other problems like \textsc{Balanced Separator} and
\textsc{Minimum Linear Arrangement}, but our results also show that the
Small-Set Expansion Hypothesis problem lies at the combinatorial heart of the Unique Games Conjecture.

The key technical ingredient is a new way of exploiting the structure of
the  \textsc{Unique Games} instances  obtained from the \SSEH via
(Raghavendra, Steurer, 2010).  
This additional structure allows us to modify standard reductions
in a way that essentially destroys their local-gadget nature.
Using this modification, we can 
argue about the expansion in the graphs produced by the reduction without
relying on expansion properties of the underlying \uniquegames instance (which
would be impossible for a local-gadget reduction).
\end{abstract}

\newpage

\tableofcontents
\thispagestyle{empty}

\clearpage
\setcounter{page}{1}

\section{Introduction}

Finding small vertex or edge separators in a graph is a fundamental
computational task.  Even from a purely theoretical standpoint, the
phenomenon of vertex and edge expansion -- the lack of good vertex and
edge separators, has had numerous implications in all branches of
theoretical computer science.  Yet, the computational complexity of detecting and
approximating expansion, or finding good vertex and edge separators in graphs is not very well understood.  

Among the two notions of expansion, this work will concern mostly with
edge expansion.  For simplicity, let us first consider the case of a
$d$-regular graph $G = (V,E)$.
The edge expansion of a subset of vertices $S \sse V$ measures the
fraction of edges that leave $S$.  Formally, the edge expansion $\cond(S)$
of a (non-empty) subset $S \sse V$ is defined as,
$$ \cond_G(S) = \frac{|E(S, V\sm S)|}{d|S|} \mcom$$
where $E(S,V \sm S)$ denotes the set of edges with one endpoint in $S$ and
the other endpoint in $V \sm S$.  The conductance or the Cheeger's constant associated with the graph
$G$ is the minimum of $\cond(S)$ over all sets $S$ with at most half the
vertices, i.e.,
$$ \cond_G = \min_{|S| \leq \nfrac{n}{2}} \cond_G(S) \mper $$
These notions of conductance can be extended naturally to non-regular graphs, and finally to
arbitrary weighted graphs (see \pref{sec:preliminaries}).  Henceforth, in this section, for a subset of vertices $S$ in a graph $G$
we will use the notation $\mu(S)$ to denote the normalized set size, i.e.,
$\mu(S) = |S|/n$ in a $n$ vertex graph.

The problem of approximating the quantity $\cond_G$ for a graph $G$, also referred to as the
the uniform \sparsestcut (equivalent within a factor of $2$), is among the fundamental problems in
approximation algorithms.  Efforts towards approximating $\cond_G$
have led to a rich body of work with strong connections to spectral
techniques and metric embeddings.

The first approximation for conductance was obtained by discrete
analogues of the Cheeger inequality \cite{Cheeger70} shown
by Alon-Milman \cite{AlonM85} and Alon \cite{Alon86}.  
Specifically, Cheeger's inequality relates the conductance $\cond_G$ to the second eigenvalue of the adjacency matrix of the graph -- an efficiently computable quantity.  This yields an approximation algorithm for $\cond_G$, one that is used heavily in practice for graph partitioning.
However, the approximation for $\cond_G$ obtained via Cheeger's
inequality is poor in terms
of a approximation ratio, especially when the value of $\cond_G$ is small.
An $O(\log n)$ approximation algorithm for $\cond_G$ was obtained by
Leighton and Rao \cite{LeightonR99}.  Later work by Linial \etal
\cite{LinialLR95} and Aumann and Rabani \cite{AumannR98} established a strong connection between the
\sparsestcut problem and the theory of metric spaces, in turn
spurring a large and rich body of literature.  More
recently, in a breakthrough result Arora \etal \cite{AroraRV04}
obtained an $O(\sqrt{\log n})$ approximation for the problem using
semidefinite programming techniques.

\paragraph{Small Set Expansion} It is easy to see that $\cond_G$ is a fairly coarse measure of edge expansion,
in that it is the worst case edge expansion over sets $S$ of all
sizes.  In a typical graph (say a random $d$-regular graph), smaller
sets of vertices expand to a larger extent than sets with half the
vertices.  For instance, all sets $S$ of $n/1000$ vertices in a random
$d$-regular graph have $\cond(S) \geq 0.99$ with very high
probability, while the conductance~$\cond_G$ of the entire
graph is roughly $\half$.

A more refined measure of the edge expansion of a graph is its expansion
profile.  Specifically, for a graph $G$ the expansion profile is given
by the curve 
$$ \cond_{G}(\delta) = \min_{\mu(S) \le \delta} \cond(S) \qquad \qquad
\forall \delta \in [0,\nfrac{1}{2}] \mper$$
The problem of approximating the expansion profile has received much
less attention, and is seemingly far less tractable.  The second eigenvalue $\lambda_2$ fails to approximate the expansion of small sets in
graphs.  On one hand, even with the largest possible spectral gap, the
Cheeger's inequality cannot yield a lower bound greater than
$\nfrac{1}{2}$ for the conductance $\cond_G(\delta)$.
More importantly, there exists graphs such as hypercube where
$\cond_G$ is small (say $\eps$),  yet every sufficiently small set has
near perfect expansion ($\cond(S) \geq
1-\eps$).  This implies that $\cond_G$ (and the second eigenvalue
$\lambda_2$) does not yield any information about expansion of small
sets.

In a recent work, Raghavendra, Steurer, and Tetali~\cite{RaghavendraST10}
give a polynomial-time algorithm based on semidefinite programming for this
problem.
Roughly speaking, the approximation guarantee of their algorithm for
$\cond_G(\delta)$ is similar to the one given by Cheeger's inequality for
$\cond_G$, except with the approximation degrading by a
$\log\nfrac{1}{\delta}$ factor.  In particular, the approximation gets
worse as the size of the sets considered gets smaller.

In the regime when $\Phi_G(\delta)$ tends to zero as a function of the
instance size $n$, an $O(\log n)$-approximation follows from the framework of
R\"acke~\cite{Raecke08}. 
Very recently, this approximation has been improved to a $O(\sqrt{\log
  n\cdot \log (\nfrac 1\delta)})$-approximation \cite{BansalKMNNS10}.
Our work focuses on the regime when $\Phi_G(\delta)$ is not a function of
the instance size $n$.
In this regime, the algorithm of \cite{RaghavendraST10} gives the best
known approximation for the expansion profile $\Phi_g(\delta)$.

In summary, the current state-of-the-art algorithms for approximating the
expansion profile of a graph are still very far from satisfactory.
Specifically, the following hypothesis is consistent with the known
algorithms for approximating expansion profile.
\begin{hypothesis*}[\SSEH, \cite{RaghavendraS10}]
  For every constant $\eta > 0$, there exists sufficiently small $\delta>0$
  such that given a graph $G$ it is \NP-hard to distinguish the cases,
  \begin{description}\item[\yes:] 
    there exists a vertex set $S$ with volume $\mu(S)=\delta$ and expansion 
    $\Phi(S)\le \eta$,
  \item[\no:]  all vertex sets $S$  with volume  $\mu(S)=\delta$ have expansion
     $\Phi(S)\ge 1-\eta$.
  \end{description}
\end{hypothesis*}
For the sake of succinctness, we will refer to the above promise problem as
\smallsetexpansion with parameters $(\eta,\delta)$.  Apart from being a natural
optimization problem, the \smallsetexpansion problem is closely tied to the Unique
Games Conjecture, as discussed in the next paragraph.

Recently, Arora, Barak, and Steurer \cite{AroraBS10} showed that the
problem $\smallsetexpansion(\eta,\delta)$ admits a subexponential algorithm, namely an
algorithm that runs in time $\exp(n^\eta/\delta)$.
However, such an algorithm does not refute the hypothesis that the problem
$\smallsetexpansion(\eta,\delta)$ might be hard for every constant~$\eta>0$ and
sufficiently small $\delta>0$.

\paragraph{Unique Games Conjecture}
The Khot's Unique Games Conjecture~\cite{Khot02a} is among the
central open problems in hardness of approximation.  At the outset, the
conjecture asserts that a certain constraint satisfaction problem called
the Unique Games is hard to approximate in a strong sense.

An instance of \uniquegames consists of a graph with vertex set $V$, a finite set of
labels $[R]$, and a permutation $\pi_{v\la w}$ of the label set for each
edge $(v,w)$ of the graph.  A
labeling $F: V \to [R]$ of the vertices of the graph is said to {\it satisfy} an edge $(v,w)$,
if $\pi_{v \la w}(F(w)) = F(v)$.  The objective is to find a labeling
that satisfies the maximum number of edges.

The Unique Games Conjecture asserts that if the label set is large
enough then even though the input instance has a labeling satisfying
almost all the edges, it is \NP-hard to find a labeling satisfying
any non-negligible fraction of edges.

In recent years, Unique Games Conjecture has been shown to imply optimal
inapproximability results for classic problems like \maxcut
\cite{KhotKMO07}, \vertexcover \cite{KhotR08} \sparsestcut \cite{KhotV05}
and all constraint satisfaction problems \cite{Raghavendra08}.
Unfortunately, it is not known if the converse of any of these implications
holds.  In other words, there are no known polynomial-time {\it reductions}
from these classic optimization problems to \uniquegames, leaving the
possibility that while the its implications are true the conjecture itself
could be false.

Recent work by two of the authors established a {\it reverse}
reduction from the \smallsetexpansion problem to Unique Games
\cite{RaghavendraS10}.  More precisely, their work showed that \SSEH implies the Unique Games Conjecture.  This result suggests
that the problem of approximating expansion of small sets lies at the
combinatorial heart of the Unique Games problem.  In fact, this connection
proved useful in the development of subexponential time algorithms for
Unique Games by Arora, Barak and Steurer \cite{AroraBS10}.  It was
also conjectured in \cite{RaghavendraS10} that Unique Games Conjecture
is equivalent to the \SSEH.

\subsection{Results (Informal Description)}

In this work, we further investigate the connection between \smallsetexpansion
and the Unique Games problem.  The main result of this work is that
the \SSEH  is equivalent to a variant of the \UGCexpand.
More precisely, we show the following:

\begin{theorem*}[Main Theorem, Informal]
  The \SSEH is equivalent to assuming that the
  Unique Games Conjecture holds even when the input instances
  are required to be small set expanders, i.e., sets of roughly $\delta
  n$ vertices for some small constant $\delta$ have expansion
  close to $1$.
\end{theorem*}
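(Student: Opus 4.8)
The plan is to prove the equivalence as a cycle $\SSEH \Rightarrow (\text{variant }\UGCexpand) \Rightarrow \SSEH$; one cannot route through the plain \UGCexpand, since \uniquegames is not known to reduce to \smallsetexpansion. For the first direction I would revisit the reduction of \cite{RaghavendraS10} that turns a \smallsetexpansion instance $G$ into a \uniquegames instance $\mathcal U(G)$, and show that, with parameters chosen appropriately, the constraint graph underlying $\mathcal U(G)$ is a small-set expander \emph{in both the \yes and the \no case}. The key point is that this reduction already smooths $G$ — it composes a move in $G$ with an independent averaging/perturbation on the label space — and it is precisely this smoothing that makes \no-instances of \smallsetexpansion unsatisfiable as \uniquegames instances; the same smoothing forces every set of small volume in the constraint graph of $\mathcal U(G)$ to have expansion close to $1$, essentially regardless of $G$. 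Concretely I would track how the expansion profile of $\mathcal U(G)$ depends on that of $G$ and on the noise rate, and check that once the noise rate is a constant the small-set expansion of $\mathcal U(G)$ is close to $1$. Completeness and soundness of the reduction are inherited from \cite{RaghavendraS10}, so this gives $\SSEH \Rightarrow$ hardness of \uniquegames restricted to small-set expanders.

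For the reverse direction, given a \uniquegames instance $\mathcal U$ on $(V,[R])$ promised to be a small-set expander (in both cases), I would build a graph $H$ on $V\times[R]$ with the natural noisy edge distribution: sample an edge $(v,w)$ of $\mathcal U$ with bijection $\pi=\pi_{w\leftarrow v}$, sample $i\in[R]$ uniformly, let $j$ be a slightly noised copy of $i$, and join $(v,i)$ to $(w,\pi^{-1}(j))$. For completeness, a labeling $F$ satisfying a $(1-\eps)$-fraction of edges gives the set $S=\{(v,F(v)):v\in V\}$ of volume $1/R$ and expansion $O(\eps+\text{noise})$. For soundness, suppose $H$ has a set $S$ of small volume with $\Phi_H(S)\le 1-\eta$; writing $f_v\colon[R]\to\{0,1\}$ for the slice $\{i:(v,i)\in S\}$, non-expansion of $S$ says that on average over edges of $\mathcal U$ the noised $f_v$ correlates well with $f_w\circ\pi$. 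Decompose each $f_v$ into its average ``degree-$0$'' part $\E[f_v]=\mu(S_v)$ and its higher-degree part. The crucial observation is that $H$ acting on the degree-$0$ parts is exactly $\mathcal U$ acting on the column projection $v\mapsto\mu(S_v)$, so the small-set-expansion promise on $\mathcal U$ prevents the low expansion of $S$ from being explained by the averages alone; the remaining correlation must be carried by the higher-degree parts, and a standard influence-decoding (invariance-style) argument extracts from their low-degree influential coordinates a labeling of $\mathcal U$ of value bounded away from $\eps$. Hence \no-instances of the variant map to graphs all of whose small sets expand, which is \SSEH, after an amplification step (e.g.\ tensoring $H$ with itself) to push the \no-case expansion all the way up to $1-\eta$.

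The delicate part is the soundness of this second reduction, specifically the bookkeeping at the interface of the two mechanisms: how much of the mass of $S$ the degree-$0$ / small-set-expansion argument can absorb before the complementary, well-spread part still retains enough mass for the dictatorship-style decoding to go through, \emph{and with the same} $\delta$. Closing this quantitative loop — making one choice of $\delta$ simultaneously valid in the small-set-expansion promise on $\mathcal U$ (the hypothesis) and in the conclusion about $H$, with the dependence on $R$, the noise rate, and $\eta$ all consistent after amplification — is where the real work lies. This is also the precise sense in which the reduction has stopped being a local gadget: its analysis relies on the promised expansion of small sets of $\mathcal U$, rather than on expansion of the (arbitrary) \uniquegames constraint graph, which no local-gadget reduction could ever control.
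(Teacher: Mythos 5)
Your forward direction (\SSEH $\Rightarrow$ expanding \uniquegames) contains the central gap, and it is exactly the gap the paper is built around. You propose to show that the \cite{RaghavendraS10} reduction $\UG_R(G)$ \emph{already} produces a \uniquegames instance whose constraint graph is a small-set expander, attributing this to the ``smoothing'' noise $T_V$ and asserting it holds ``essentially regardless of $G$.'' The paper explicitly warns (footnote in \Sref{sec:struct-uniq-games}) that this is false: the constraint graph of $\UG_R(G)$ is a local gadget over $G^{\tensor R}$ and inherits the global structure of $G$. In particular $G$ can be disconnected into two large pieces while still satisfying the \no-case promise of \SSE (which constrains only sets of volume $\delta$), and then $\UG_R(G)$ is disconnected as well. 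The noise $T_V$ is a small technical perturbation ($\e_V=\e^2$ in the final parameter setting) used to make the spectral/measure bookkeeping behave; it is not what makes \no-instances unsatisfiable (that is the influence-decoding step), and you cannot dial it up to a constant without ruining completeness. The actual forward direction is a genuinely new three-phase construction (\pref{fig:ssereduction}): the RS10 outer reduction on $V^R$, a KKMO-style gadget over $\Omega^R=([q]\times\{\bot,\top\}_\beta)^R$, and --- crucially --- a ``leaked randomness'' operator $M_z$ that re-randomizes the $V$- and $[q]$-coordinates where $z=\bot$. This identifies vertices $(A,x,z)$ and $(A',x',z)$ differing only in $\bot$-coordinates, which is precisely what destroys the local-gadget nature. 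The expansion of the output graph $H$ (and of the \uniquegames instance $H/[q]$, whose alphabet is $q$, not $R$) then follows by \emph{construction} from the spectral gap of the operator $M$ (\pref{lem:leaky-noise-graph}, \pref{lem:variance-bound}): because $M^*M$ has second eigenvalue $\beta$, the slice-averages $\E g_A$ concentrate near $\E f$, which is exactly the failure mode of the naive candidate reduction in \Sref{sec:cand-reduct-from}. None of this is derivable from properties of $\UG_R(G)$ alone, which is what your sketch would need.

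On the reverse direction, your label-extended-graph construction is essentially the content of Theorem 1.9 of \cite{RaghavendraS10}, which the paper simply cites and does not reprove (see the footnote on \pref{hyp:expanding-unique-games}). Your proposed tensoring-amplification at the end is not needed: the whole point of the hypothesis is that small sets of the constraint graph of $\UG$ already expand, so once the degree-$0$ (slice-average) contribution to $\iprod{\Ind_S,H\Ind_S}$ is discharged against that promise, the remaining mass is handled by dictatorship-test soundness at the same alphabet size and volume scale. The genuinely hard part was always the first direction, and there the proposal is missing the paper's key idea.
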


As a corollary, we show that \SSEH implies hardness of
approximation results for \balancedseparator and \mla problems.  The
significance of these results stems from two main reasons.

First, the Unique Games Conjecture is not known to imply hardness results for problems closely tied to graph expansion such as \BalancedSeparator and \mla.  The reason being that the hard
instances of these problems are required to have certain global
structure namely expansion.  Gadget reductions from a unique games
instance preserve the global properties of the unique games instance
such as lack of expansion.  Therefore, showing hardness for
\BalancedSeparator or \mla problems often required a stronger version
of the \UGCexpand, where the instance is guaranteed to have good expansion.
To this end, several such variants of the conjecture for expanding graphs have
been defined in literature, some of which turned out to be false
\cite{AroraKKSTV08}.   Our main result shows that the \SSEH serves as a natural unified
assumption that yields all the implications of \UGCexpand and, in addition, also 
hardness results for other fundamental problems such as \BalancedSeparator.

Second, several results in literature point to the close connection
between \smallsetexpansion problem and the Unique Games problem.  One of the central implications of the \UGCexpand is that
certain semidefinite programs yield optimal approximation for various classes
of problems.  As it turns out, hard instances for semidefinite
programs (SDP integrality gaps) for
\maxcut \cite{FeigeS02a,KhotV05,KhotS09,RaghavendraS09c}, \vertexcover \cite{GeorgiouMPT07},
\uniquegames \cite{KhotV05,RaghavendraS09c} and
\sparsestcut \cite{KhotV05,KhotS09,RaghavendraS09c} all have near-perfect edge expansion for small
sets.  In case of \uniquegames,  not only do all known integrality gap
instances have near-perfect edge expansion of small sets, even the
analysis relies directly on this property.  
All known integrality gap instances for semidefinite programming
relaxations of Unique Games, can be translated in to gap instances for
\smallsetexpansion problem, and are arguably more natural in the latter context.
Furthermore, all the algorithmic results for \smallsetexpansion, including the
latest work of Arora, Barak and Steurer \cite{AroraBS10} extend to
Unique Games as well.  This apparent connection was formalized in the
result of Raghavendra \etal \cite{RaghavendraS10} which showed that \SSEH implies the \UGCexpand.
This work complements that of Raghavendra \etal
\cite{RaghavendraS10} in exhibiting that the \smallsetexpansion problem lies at
the combinatorial heart of the Unique Games problem.  

We also show a ``hardness amplification'' result for \smallsetexpansion proving that
if the \SSEH holds then the current best algorithm for
\smallsetexpansion due to \cite{RaghavendraS10} is optimal within some fixed constant
factor.  One can view the reduction as a ``scale change'' operation for
expansion problems, which starting from the qualitative hardness of a problem 
about expansion of sets with a sufficiently small measure $\delta$, 
gives the optimal quantitative hardness results
for problems about expansion of sets with any desired measure (larger than $\delta$).
This is analogous to (and based on) the results of \cite{KhotKMO07} who gave a
similar alphabet reduction for \uniquegames.
An interesting feature of the reductions in the paper is that they
produce instances whose expansion of small sets closely mimics a
certain graph on the Gaussian space.

\section{Preliminaries}
\label{sec:preliminaries}

\paragraph{Random walks on graphs}
\label{sec:random-walks-graphs}

Consider the natural random walk on $V$ defined by $G$.
We write $j\sim G(i)$ to denote a random neighbor of vertex $i$ in $G$
(one step of the random walk started in $i$).
The stationary measure for the random walk is given by the volume
as defined earlier with $\mu(i) = G(\{i\},V)$.
If $G$ is regular, then $\mu$ is the uniform distribution on $V$. 
In general, $\mu$ is proportional to the degrees of the vertices in~$G$.
We write $i\sim \mu$ to denote a vertex sampled according to the
stationary measure.
If $G$ is clear from the context, we often write $i\sim V$ instead of
$i\sim \mu$.
% %

\paragraph{Spectral gap of graphs}
\label{sec:spectral-gap-graphs}

We identify $G$ with the stochastic matrix of the random walk on $G$.
We equip the vector space $\set{f\from V\to \R}$ with the inner product 
\begin{displaymath}
  \iprod{f,g}
  \defeq \E_{x\sim \mu} f(x)g(x)
  \mper
\end{displaymath}
We define $\norm{f}=\iprod{f,f}^{1/2}$.
As usual, we refer to this (Hilbert) space as $L_2(V)$.
Notice that $G$ is self-adjoint with respect to this inner product,
i.e., $\iprod{f,Gg} = \iprod{Gf,g}$ for all $f,g\in L_2(V)$.
Let $\lambda_1\ge \ldots\ge \lambda_n$ be the eigenvalues of $G$.
The non-zero constants are eigenfunctions of $G$ with eigenvalue
$\lambda_1=1$.

For a vertex set $S\sse V$, let $\Ind_S$ be the $\bits$-indicator
function of $S$.
We denote by $G(S,T) = \iprod{\Ind_S,G \Ind_T}$ the total weight of
all the edges in $G$ that go between $S$ and $T$.
\begin{fact}
\label{fact:second-eigenvalue}
  Suppose the second largest eigenvalue of $G$ is $\lambda$.
  Then, for every function $f\in L_2(V)$,
  \begin{displaymath}
    \iprod{f,Gf } \le (\E f)^2 + \lambda\cdot \Paren{\snorm f - (\E f)^2}
    \mper
  \end{displaymath}
  In particular, $\Phi_G(\delta)\ge 1-\delta-\lambda$ for every
  $\delta>0$.
\end{fact}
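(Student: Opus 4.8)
The plan is to split $f$ into its component along the constant functions and the part orthogonal to it, and to control the orthogonal part using the spectral gap; the ``in particular'' clause then follows by plugging in $f=\Ind_S$.

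\emph{Step 1: orthogonal decomposition.} I would write $f = (\E f)\,\Ind_V + f^{\perp}$ with $f^{\perp} \defeq f - (\E f)\,\Ind_V$, so that $\iprod{\Ind_V, f^{\perp}} = 0$ and $\iprod{\Ind_V,\Ind_V} = \E_{x\sim\mu} 1 = 1$. Since the non-zero constants are eigenfunctions of $G$ with eigenvalue $1$, we have $G\Ind_V = \Ind_V$; and by self-adjointness $\iprod{\Ind_V, Gf^{\perp}} = \iprod{G\Ind_V, f^{\perp}} = \iprod{\Ind_V, f^{\perp}} = 0$, so $Gf^{\perp}$ is again orthogonal to the constants. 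Expanding $\iprod{f,Gf}$ via this decomposition, the cross terms vanish and $\iprod{f,Gf} = (\E f)^2 + \iprod{f^{\perp}, G f^{\perp}}$.

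\emph{Step 2: spectral bound on the orthogonal part.} Because $G$ is self-adjoint for $\iprod{\cdot,\cdot}$, fix an orthonormal eigenbasis $v_1,\dots,v_n$ with eigenvalues $\lambda_1 \ge \cdots \ge \lambda_n$ and $v_1 = \Ind_V$ (possible since $\Ind_V$ is a unit eigenvector with the top eigenvalue $1$). Then $f^{\perp} = \sum_{i\ge 2} c_i v_i$ with $\snorm{f^{\perp}} = \sum_{i\ge2} c_i^2$, and since $\lambda_i \le \lambda_2 = \lambda$ for all $i\ge 2$,
$$\iprod{f^{\perp}, G f^{\perp}} = \sum_{i\ge 2}\lambda_i c_i^2 \le \lambda\sum_{i\ge 2} c_i^2 = \lambda\cdot\snorm{f^{\perp}}\mper$$
By the Pythagorean identity $\snorm{f^{\perp}} = \snorm f - (\E f)^2$, and substituting back into Step 1 gives the claimed inequality. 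For the ``in particular'' part, apply it to $f=\Ind_S$ for an arbitrary nonempty $S\sse V$ with $\mu(S)\le\delta$: here $\E f = \mu(S)$ and $\snorm f = \E f^2 = \mu(S)$ since $\Ind_S$ is $\bits$-valued, while $\iprod{\Ind_S,G\Ind_S}=G(S,S)$. Using $G(S,S)+G(S,V\sm S)=G(S,V)=\mu(S)$ together with $\Phi(S)=G(S,V\sm S)/\mu(S)$, we get $G(S,S)=\mu(S)\bigparen{1-\Phi(S)}$; plugging in and dividing by $\mu(S)>0$ yields $1-\Phi(S)\le \mu(S)+\lambda\bigparen{1-\mu(S)}\le \delta+\lambda$, hence $\Phi(S)\ge 1-\delta-\lambda$, and taking the minimum over all such $S$ gives $\Phi_G(\delta)\ge 1-\delta-\lambda$.

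\emph{Main obstacle.} There is no substantive obstacle; this is routine linear algebra. The only points needing care are that the inner product is the $\mu$-weighted one — so that $\iprod{\Ind_V,\Ind_V}=1$ and $G$ is genuinely self-adjoint, which is exactly what makes the orthogonal eigendecomposition available — and the small bookkeeping identity expressing the internal edge weight $G(S,S)$ in terms of the expansion $\Phi(S)$.
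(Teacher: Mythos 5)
Your proof is correct and is the standard spectral argument; the paper states \pref{fact:second-eigenvalue} without proof, so there is no in-paper argument to compare against, but what you wrote is exactly the argument the authors have in mind. The orthogonal decomposition along constants, the observation that $G$ preserves the orthogonal complement (by self-adjointness plus $G\Ind_V=\Ind_V$), and the Rayleigh-quotient bound $\iprod{f^\perp,Gf^\perp}\le\lambda\snorm{f^\perp}$ are all handled correctly, and the translation to $\Phi(S)$ via $G(S,S)=\mu(S)(1-\Phi(S))$ is right.

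One small point worth flagging in the last step: passing from $1-\Phi(S)\le\mu(S)+\lambda(1-\mu(S))$ to $1-\Phi(S)\le\delta+\lambda$ uses $\mu(S)(1-\lambda)\le\delta(1-\lambda)\le\delta$, and the second inequality needs $\lambda\ge 0$. This is implicit in the Fact as stated (indeed for $\lambda<-\delta$ the conclusion $\Phi_G(\delta)\ge 1-\delta-\lambda>1$ would be false), and is harmless in the paper since the graphs it is applied to are lazy noise operators with nonnegative spectrum; but if you want the write-up to be airtight you should either note the restriction $\lambda\ge 0$, or record the slightly sharper and unconditionally valid bound $\Phi(S)\ge(1-\mu(S))(1-\lambda)$ that your computation actually yields.
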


\paragraph{Gaussian Graphs}
For a constant $\rho \in (-1,1)$, let $\cG(\rho)$ denote the infinite graph
over $\R$ where the weight of an edge $(x,y)$ is the probability that two
standard Gaussian random variables $X,Y$ with correlation $\rho$ equal $x$
and $y$ respectively.
The expansion profile of Gaussian graphs is given by
$\cond_{\cG(\rho)}(\mu) = 1 - \Gamma_{\rho}(\mu)/\mu$ where the quantity
$\Gamma_{\rho}(\mu)$ defined as
\begin{displaymath}
  \Gamma_\rho(\mu)
  \seteq\Prob[(x,y)~\cG_\rho]{x\ge t, y\ge t}\mcom
\end{displaymath}
where
$\cG_\rho$ is the $2$-dimensional Gaussian distribution with
covariance matrix 
\begin{displaymath}
    \Paren{
      \begin{matrix}
        1 & \rho \\
        \rho & 1
      \end{matrix}}
\end{displaymath}
and $t\ge 0$ is such that $\Prob[(x,y)\sim\cG_\rho]{x\ge t}=\mu$.  
A theorem of Borell~\cite{Borell85} shows that for any set $S$ of measure
$\mu$, $(\cG(\rho)) (S,S) \leq \Gamma_{\rho}(\mu)$.  This expansion profile
will be frequently used in the paper to state the results succinctly.

%\vspace{-3 pt}
\paragraph{Noise graphs}
\label{sec:noise-graphs}

For a finite probability space $(\Omega,\nu)$ and $\rho\in[0,1]$, we
define $T=T_{\rho,\Omega}$ to be the following linear operator on
$L_2(\Omega)$,
\begin{displaymath}
  T f(x) 
  = \rho x + (1-\rho) \E_{x'\sim \Omega} f(x')
  \mper
\end{displaymath}

The eigenvalues of $T$ are $1$ (with multiplicity $1$) and $\rho$
(with multiplicity $\card{\Omega}-1$).
The operator $T$ corresponds to the following natural
(reversible) random walk on $\Omega$: with probability $\rho$ stay at
the current position, with probability $(1-\rho)$ move to a random
position sampled according to the measure $\nu$.

\paragraph{Product graphs}
\label{sec:product-graphs}

If $G$ and $G'$ are two graphs with vertex sets $V$ and $V'$, we let
$H=G\tensor G'$ be the \mbox{\emph{tensor product}} of $G$ and $G'$.
The vertex set of $G$ is $V\times V'$.
For $i\in V$ and $i'\in V'$, the distribution $H(i,i')$
is the product of the distributions $G(i)$ and $G'(i')$.
For $R\in\N$, we let $G^{\tensor R}$ denote the \emph{$R$-fold tensor
  product} of~$G$.
Sometimes the power $R$ of the graph is clear from the context.
In this case, we might drop the superscript for the tensor graph.

\section{Results}
Towards stating the results succinctly, we introduce the notion of a
decision problem being \ssehard.  It is the natural notion wherein a
decision problem is \ssehard if the \smallsetexpansion$(\eta,\delta)$ reduces to it by
a polynomial time reduction for some constant $\eta$ and all $\delta>0$
(See \pref{def:ssehard}).

\subsection{Relation to the Unique Games Conjecture}

We show that the \SSEH  is equivalent to a
certain variant of the \UGCexpand with expansion.  Specifically, consider the
following version of the conjecture with near-perfect expansion of sufficiently
small sets.  The hypothesis is as follows: \footnote{%
  The hypothesis in \cite{RaghavendraS10} is not quite the same.
  However, the reduction and its analysis in \cite{RaghavendraS10} also
  work for this
  hypothesis.  
}

\begin{hypothesis}[Unique Games with Small-Set Expansion]
\label{hyp:expanding-unique-games}
  For every $\e,\eta>0$ and $M\in \N$, there exists
  $\delta=\delta(\e,M)>0$ and $q=q(\e,\eta,M)\in\N$ such that it is \nphard to distinguish for a given \uniquegames instance
  $\UG$ with alphabet size $q$ whether
  \begin{description}
  \item[\yes:] The \uniquegames instance $\UG$ is almost
    satisfiable, $\opt(\UG)>1-\e$.
  \item[\no:] The \uniquegames instance $\UG$ satisfies
    $\opt(\UG)<\eta$ and its constraint graph $G$ satisfies
    $\Phi(S)>1-\e$ for every vertex set with $\delta \le \mu(S)\le
    M\delta$.  
\Pnote{this looks like a neat way to state it, since we want both upper and lower bounds on set size}
  \end{description}  
\end{hypothesis}

The main result of the paper is the following reduction from
\smallsetexpansion to \uniquegames on instances with small-set expansion.
\begin{theorem}
\torestate{
\label{thm:expanding-unique-games}
  For every $q\in \N$ and every $\e,\gamma>0$, it is \ssehard to
  distinguish between the following cases for a given \uniquegames
  instance $\UG$ with alphabet size $q$:
  \begin{description}
  \item[\yes:] The \uniquegames instance $\UG$ is almost
    satisfiable, $\opt(\UG)>1-2\e - o(\e)$
  \item[\no:] The optimum of the \uniquegames instance $\UG$ is negligible, and the expansion profile of the instance resembles the Gaussian graph $\cG(1-\e)$.  More precisely, the \uniquegames instance $\UG$ satisfies
    $\opt(\UG)<O\left({q^{-\ffrac{\e}{(2-\e)}}}\right)+\gamma$ and in addition,
    the constraint graph $G$ of $\UG$ satisfies
    \begin{displaymath}
      \forall S\sse V(G).\qquad \cond_G(S) \geq
      \cond_{\cG(1-\e)}\bigparen{\vol(S)} -\nfrac{\gamma}{\vol(S)}
      %\left(1 -  \frac{\gamma}{\vol(S)} \right) \mper
      \mper
    \end{displaymath}
  \end{description}
}
\end{theorem}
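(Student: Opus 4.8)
The plan is to run the \smallsetexpansion-to-\uniquegames reduction of \cite{RaghavendraS10}, in its $q$-ary form (the passage to a constant alphabet being in the spirit of \cite{KhotKMO07}), but applied to a \emph{noised} copy of the instance and composed with a \emph{symmetrization}, and then to analyze the expansion profile of the resulting constraint graph by opening the construction. Given an instance $G=(V,\mu)$ of $\smallsetexpansion(\eta,\delta)$ -- with $\eta$ a small constant depending on $\e,\gamma$, and $\delta>0$ arbitrary -- first replace $G$ by $G'=T_{1-\e}\circ G$ (``with probability $1-\e$ take a step of $G$, else jump to a $\mu$-random vertex''). Then take a quotient of the $R$-fold tensor power $(G')^{\tensor R}$ for a large constant $R=R(\e,\gamma)$, where one quotients by the cyclic group $\Z_R$ of coordinate shifts ($R$ prime), which acts transitively on the $R$ coordinates: the $\Z_R$-orbits are the \uniquegames vertices, the labels index the orbit, and the constraint along an edge records the relative shift; a symmetrized $q$-ary noise-$(1-\e)$ dictatorship gadget then brings the alphabet to the prescribed $q$. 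The two statements about $\opt(\UG)$ are what the analysis of \cite{RaghavendraS10} yields here. In the \yes case, the ``dictatorship'' labeling built from the planted set of $G$ is violated on only a $\approx 2\e$ fraction of edges -- each endpoint of an edge falls out of the planted structure with probability $\approx\e$, and here it is the $\e$-noise, not the $O(\eta)$ expansion of the planted set, that does the damage -- giving $\opt(\UG)>1-2\e-o(\e)$. In the \no case, the small-set-expansion promise on $G$ is exactly what rules out ``junta'' labelings, so the value of any surviving labeling is $\le q\cdot\Gamma_{1-\e}(1/q)+\gamma=O(q^{-\e/(2-\e)})+\gamma$; symmetrizing can only decrease $\opt(\UG)$, so it does not hurt.

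The new ingredient is the expansion bound on the constraint graph $G$ of $\UG$, and for this one must use the structure of the reduction. For $S\sse V(G)$ let $\tilde S\sse V^R$ be the corresponding $\Z_R$-invariant set, so $\vol(S)=\vol(\tilde S)$ and the non-expansion $G(S,S)$ equals $(G')^{\tensor R}(\tilde S,\tilde S)$; it thus suffices to show
\[
  (G')^{\tensor R}(\tilde S,\tilde S)\ \le\ \Gamma_{1-\e}\bigparen{\vol(\tilde S)}+\gamma
\]
for every $\Z_R$-invariant $\tilde S$ (the gadget perturbs the profile by only $o(1)$, absorbed into $\gamma$; and for $\vol(\tilde S)<\gamma$ or $>1-\gamma$ the inequality is immediate, since $(G')^{\tensor R}(\tilde S,\tilde S)\le\vol(\tilde S)$ and $\Gamma_{1-\e}(\mu)\ge 2\mu-1$). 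Write $g=\Ind_{\tilde S}$. Since $T_{1-\e}$ commutes with $G$ (both fix the constants and $T_{1-\e}$ is $1-\e$ times the identity on their orthogonal complement), we have $G'=T_{\sqrt{1-\e}}\circ G\circ T_{\sqrt{1-\e}}$ with self-adjoint factors, and $\norm{G^{\tensor R}}_{\mathrm{op}}\le1$, so
\[
  (G')^{\tensor R}(\tilde S,\tilde S)
  =\Iprod{T_{\sqrt{1-\e}}^{\tensor R}g,\ G^{\tensor R}\,T_{\sqrt{1-\e}}^{\tensor R}g}
  \ \le\ \Bignorm{T_{\sqrt{1-\e}}^{\tensor R}g}^{2}
  =\Iprod{g,\ T_{1-\e}^{\tensor R}g},
\]
which is exactly the noise stability $\mathrm{Stab}_{1-\e}(\tilde S)=\sum_k(1-\e)^k\norm{g^{=k}}^2$ of $\tilde S$ in the product space $(V,\mu)^R$. (In particular the expansion profile comes out for \emph{any} base graph; the small-set-expansion hypothesis is what forces the \uniquegames value down.) Now bound $\mathrm{Stab}_{1-\e}(\tilde S)$: the tail $\sum_{k>D}(1-\e)^k\norm{g^{=k}}^2\le(1-\e)^D\vol(\tilde S)$ is $\le\gamma/2$ once $D=D(\e,\gamma)$ is large; and since $\tilde S$ is $\Z_R$-invariant and $\Z_R$ is transitive on coordinates, all coordinate influences of $g$ are equal, hence each low-degree (smoothed) influence is $O_D(\vol(\tilde S)/R)=o_R(1)$. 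The ``Majority is Stablest'' consequence of the invariance principle (as in \cite{RaghavendraS10,KhotKMO07}), with Borell's theorem \cite{Borell85} supplying $\Gamma_{1-\e}$ as the extremal value, then gives $\mathrm{Stab}_{1-\e}(\tilde S)\le\Gamma_{1-\e}(\vol(\tilde S))+o_R(1)$, and taking $R$ large makes the error $\le\gamma/2$. Dividing by $\vol(S)$ yields $\cond_G(S)\ge\cond_{\cG(1-\e)}(\vol(S))-\gamma/\vol(S)$.

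The step I expect to be the main obstacle is this invariance-principle argument together with the bookkeeping of nested parameters. The subtlety is that the hard instance $G$ is only a \emph{small-set} expander, so a local-gadget reduction would merely pass on its global non-expansion; the high tensor power and, crucially, the symmetrization ``destroy the local-gadget nature'' by turning any large non-expanding set $A\sse V$ of $G$ into either a set of volume $1-(1-\vol(A))^R\approx1$ in the quotient (the trivial regime) or a symmetric ``counting'' set $\set{x:\ \#\set{i:\ x_i\in A}\ge k}$, whose noise stability is by design exactly that of a Gaussian half-space of the same volume -- precisely the quantity $\cG(1-\e)$ measures us against. Checking that $\Z_R$-invariance plus a large enough $R$ really leaves no room for a ``dictator-like'' bad set, getting the order of the quantifiers right ($\eta$ small, then $R$ large, then $\delta\to0$), and pushing the invariance-principle and gadget errors into the single slack $\gamma$, is where the care is needed; once that is in place, the two $\opt$ bounds are imported from \cite{RaghavendraS10}.
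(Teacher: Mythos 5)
Your noise-stability argument for the constraint graph of the \emph{intermediate} (alphabet-$R$) \uniquegames instance is a genuinely different and cleaner route than the paper uses at that level: the observation that $T_{1-\e}$ and $G$ commute, hence $(T_{1-\e}G)^{\tensor R}=T_{\sqrt{1-\e}}^{\tensor R}G^{\tensor R}T_{\sqrt{1-\e}}^{\tensor R}$ and $\iprod{g,(G')^{\tensor R}g}\le\iprod{g,T_{1-\e}^{\tensor R}g}$ \emph{for every base graph $G$}, together with the fact that a $\Z_R$-invariant (or $S_R$-invariant, or $\Pi_k$-invariant) function has all coordinate influences equal and hence all $\le D\cdot\snorm{g}/R$ at degree $\le D$, is a correct and slick way to pin the expansion profile of $(G')^{\tensor R}$ on symmetric sets to $\Gamma_{1-\e}$. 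That part checks out and buys you an unconditional expansion bound on the pre-gadget constraint graph.

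But the theorem is about the constraint graph of the alphabet-$q$ \uniquegames instance, i.e.\ the graph after composing with the $q$-ary long-code gadget and quotienting by the $[q]$-action (what the paper calls $H/[q]$), and your proposal passes over this step with the clause ``the gadget perturbs the profile by only $o(1)$''. That clause is exactly where the paper's main technical contribution lies, and it is not a small-error step. With a \emph{local} long-code gadget the composed graph is roughly (constraint graph of the $R$-ary instance) $\tensor$ (gadget), and the issue the paper explains in \pref{sec:cand-reduct-from} remains: a cut $f(A,x)$ may correlate with the $A$-part so that the induced per-vertex functions $g_A$ have wildly varying means $\E g_A$, and since $\Gamma_\rho$ is not concave one cannot pass from $\E_A\Gamma_\rho(\E g_A)$ to $\Gamma_\rho(\E f)$. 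The $\Z_R$-quotient does not control this: $\Z_R$-invariance of $f$ gives $\E g_{\sigma A}=\E g_A$ for shifts $\sigma$, a symmetry of the map $A\mapsto \E g_A$, not a concentration of it around $\E f$. In the paper this concentration is obtained precisely by re-randomizing coordinates through the operators $M_z$ acting jointly on $V^R$ and $[q]^R$ (\pref{lem:variance-bound}, using \pref{lem:leaky-noise-graph}); without something like $M_z$ there is a family of cuts -- lifts of large cuts of the $R$-ary constraint graph through the gadget -- for which your invariance-principle bound does not kick in because the relevant functions $g_A$ are close to $0$ or $1$ rather than low-influence with mean $\E f$. So the ``main obstacle'' is not the invariance-principle bookkeeping you flag; it is that your construction is still a local-gadget reduction at the gadget stage, and the stated $o(1)$ loss has no justification.

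Two smaller points. First, the symmetrization group matters for the soundness import from~\cite{RaghavendraS10}: they use (essentially) $S_R$, the paper uses $\Pi_k$ (block-wise permutations, chosen so that both the \SSE-to-\uniquegames decoding of \pref{lem:smallset-soundness} and the completeness calculation of \pref{lem:completeness} go through), and you propose $\Z_R$; you would need to re-derive \pref{lem:smallset-soundness} for $\Z_R$, which is not automatic since the decoding there repeatedly uses that a uniformly random coordinate can be moved to any position by the permutation class. Second, the identification ``$V(G)=V^R/\Z_R$'' for the constraint graph of the $q$-ary instance is not correct -- the constraint graph's vertex set includes the gadget coordinates modulo $[q]$ -- so even bookkeeping-wise the set $\tilde S$ you analyze lives in the wrong space. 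The paper instead defines the final \uniquegames instance directly on $H/[q]$ (\pref{rem:expanding-ug-reduction}), proves the expansion bound on all of $H$ (\pref{lem:soundness}, via \pref{lem:graph-form}, \pref{lem:variance-bound}, \pref{lem:expansion-vs-influence}), and then observes that $H/[q]$ inherits it.
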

The proof of the above theorem is presented in \pref{sec:putt-things-together}.  
Together with
Theorem 1.9 from \cite{RaghavendraS10}, \pref{thm:expanding-unique-games}  implies the following
equivalence:
\begin{corollary}
  The \SSEH is equivalent to
  \pref{hyp:expanding-unique-games} (Unique Games with
  Small-Set Expansion).
\end{corollary}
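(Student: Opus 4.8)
The corollary is a formal consequence of \pref{thm:expanding-unique-games} and the converse reduction of \cite{RaghavendraS10} (its Theorem~1.9): the plan is only to check that the quantitative parameters line up in each direction, and there is no new combinatorial content.

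\emph{From \SSEH to \pref{hyp:expanding-unique-games}.} Fix $\e,\eta>0$ and $M\in\N$ as in \pref{hyp:expanding-unique-games}. I would instantiate \pref{thm:expanding-unique-games} with its completeness/soundness parameter $\e'$ taken small enough that $2\e'+o(\e')<\e$, so its \yes instances already have $\opt(\UG)>1-\e$, leaving its error parameter $\gamma$ and the alphabet size $q$ to be fixed last. Because the distinguishing problem of \pref{thm:expanding-unique-games} is \ssehard, hence reducible from \smallsetexpansion$(\eta_0,\delta)$ for the absolute constant $\eta_0$ of \pref{def:ssehard} and every $\delta>0$, \SSEH makes it \nphard, which is the hardness required. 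For the \no value bound, choose $q$ with $O(q^{-\e'/(2-\e')})<\eta/2$ and $\gamma<\eta/2$, giving $\opt(\UG)<\eta$. For the \no expansion bound, use the standard estimate $\Gamma_\rho(\mu)\le\mu^{2/(1+\rho)}$ (for $\rho\in[0,1]$, $\mu\le\half$), which yields $\cond_{\cG(1-\e')}(\mu)\ge 1-\mu^{\e'/(2-\e')}$; choosing $\delta=\delta(\e,M)$ so small that $(M\delta)^{\e'/(2-\e')}<\e/2$ and then $\gamma<\delta\e/2$, \pref{thm:expanding-unique-games} gives, for every $S$ with $\delta\le\vol(S)\le M\delta$,
\begin{displaymath}
  \cond_G(S) \ge \cond_{\cG(1-\e')}\bigparen{\vol(S)}-\frac{\gamma}{\vol(S)} \ge \Paren{1-\tfrac{\e}{2}}-\frac{\gamma}{\delta} > 1-\e\mper
\end{displaymath}
All these choices are consistent when made in the order $\e'$, then $\delta$, then $\gamma$, then $q$, so \pref{hyp:expanding-unique-games} follows.

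\emph{From \pref{hyp:expanding-unique-games} to \SSEH.} Here I would invoke Theorem~1.9 of \cite{RaghavendraS10}, which gives a polynomial-time reduction from \uniquegames on small-set-expanding instances to \smallsetexpansion; as the footnote following \pref{hyp:expanding-unique-games} remarks, its analysis goes through for the exact form of the hypothesis stated here. Composing that reduction with the hard instances provided by \pref{hyp:expanding-unique-games} --- setting the hypothesis parameters $\e,\eta,M$ to the functions of the target \smallsetexpansion parameters $(\eta_1,\delta_1)$ that the reduction dictates --- shows \smallsetexpansion$(\eta_1,\delta_1)$ is \nphard for every constant $\eta_1>0$ and a suitable $\delta_1>0$, i.e.\ \SSEH.

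The only step that needs any care is the bookkeeping in the first direction: the additive slack $\gamma/\vol(S)$ in the Gaussian-type expansion guarantee of \pref{thm:expanding-unique-games} must be pushed below the near-perfect expansion $\cond_{\cG(1-\e')}(\vol(S))$ that the Gaussian graph enjoys at scales near $\delta$, which is exactly why $\gamma$ is chosen after $\delta$ and why the polynomial rate of $\cond_{\cG(1-\e')}(\mu)\to 1$ as $\mu\to 0$ matters. All the real difficulty lies one level down, in producing the reduction of \pref{thm:expanding-unique-games} with a deliberately non-local structure so that the output constraint graph mimics $\cG(1-\e)$ at all scales --- presumably by composing the reduction from \smallsetexpansion to \uniquegames of \cite{RaghavendraS10} (whose instances carry a product/tensor structure) with a \cite{KhotKMO07}-style noise ``alphabet reduction'' tuned to correlation $1-\e$, and invoking Borell's theorem \cite{Borell85} to pin the small-set expansion of the resulting noise graph to that of $\cG(1-\e)$ --- and that is the paper's key idea, not revisited by the corollary.
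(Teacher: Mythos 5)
Your proof is correct and takes the same route as the paper: combine \pref{thm:expanding-unique-games} (for the direction \SSEH $\Rightarrow$ \pref{hyp:expanding-unique-games}) with Theorem~1.9 of \cite{RaghavendraS10} together with the footnote's remark (for the converse). The paper states this in a single sentence, and what you add is the parameter bookkeeping it leaves implicit --- choosing $\e'$, then $\delta$ so that the Gaussian expansion profile is within $\e/2$ of $1$ at scales up to $M\delta$, then $\gamma<\min\{\delta\e/2,\eta/2\}$, then $q$ large --- which is a legitimate and sound filling-in of the omitted detail; the only minor caveat is that the stated bound $\Gamma_\rho(\mu)\le\mu^{2/(1+\rho)}$ is used only for the qualitative fact that $\cond_{\cG(1-\e')}(\mu)\to 1$ as $\mu\to 0$, so the argument does not depend on its exact form.
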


\begin{remark}
If we choose $\gamma\ll \e$, then the constraint graph $G$
in the \no case satisfies $\Phi(S)\ge \Omega(\sqrt \e)$ for every
vertex set $S$ with $\mu(S) \in (b,\half)$ for an arbitrarily small
constant $b>0$.  In other words, the best balanced separator in $G$ has cost
$\Omega(\sqrt \e)$.  A hardness of Unique Games on graphs of this
nature was previously conjectured in \cite{AroraKKSTV08}, towards
obtaining a hardness for \BalancedSeparator.
\end{remark}

As already mentioned, for several problems such as \maxcut, the
the hard instances for the semidefinite programs have very good
expansion of small sets.  For instance, hard instances for semidefinite
programs (SDP integrality gaps) for
\maxcut \cite{FeigeS02a,KhotV05,KhotS09,RaghavendraS09c}, \vertexcover \cite{GeorgiouMPT07},
\uniquegames \cite{KhotV05,RaghavendraS09c} and
\sparsestcut \cite{KhotV05,KhotS09,RaghavendraS09c} all have near-perfect edge expansion for small
sets.  In fact, in many of the cases, the edge expansion in the graph closely
mimics the expansion of sets in some corresponding Gaussian graph.  
Confirming this observation, our techniques imply an optimal hardness result for \maxcut on instances that are small-set expanders.  More precisely, the \SSEH implies that the Goemans-Williamson algorithm is optimal even on graphs that are guaranteed to have good expansion of small sets, in fact an expansion profile that resembles the Gaussian graph.  For the sake of succinctness, we omit the formal statement of the result.

\subsection{Hardness Amplification for Graph Expansion}

Observe that the \SSEH is a purely qualitative assumption on the approximability of expansion.  Specifically, for every constant $\eta$ the hypothesis asserts that there exists some $\delta$ such that approximating expansion of sets of size $\delta$ is \NP-hard.  
The hypothesis does not assert any quantitative dependence on the set size and approximability.  Surprisingly, we show that this qualitative hardness assumption is sufficient to imply precise quantitative bounds on approximability of graph expansion.

\begin{theorem}
\torestate{
\label{thm:general-sse}
For all $q \in \N$ and $\e,\gamma > 0$, it is \ssehard to distinguish between
the following two cases for a given graph $H = (V_H,E_H)$
\begin{description}
 \item[\yes:]
There exist $q$ disjoint sets $S_1, \ldots, S_q \sse V_H$ satisfying for all $l \in [q]$,
\[ \vol(S_l) = \tfrac1q \qquad \text{and} 
\qquad \cond_{H}(S_l) \leq \epsilon + o(\epsilon).\]
\item[\no:] For all sets $S \sse V_H$,
\[ \cond_{H}(S) ~\geq~ \cond_{\cG(1-\epsilon/2)}\bigparen{\vol(S)}
-\nfrac{\gamma}{\vol(S)}
\]
where $\cond_{\cG(1-\epsilon/2)}(\vol(S))$ is the expansion of sets of volume $\vol(S)$ in the infinite Gaussian graph $\cG(1-\epsilon/2)$.
\end{description}
}
\end{theorem}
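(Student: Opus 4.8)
The plan is to reduce from \smallsetexpansion$(\eta,\delta)$ (with $\delta$ chosen tiny compared with $\e$, $\gamma$, and $1/q$) by composing the product‑tensor construction behind \pref{thm:expanding-unique-games} with a $q$‑ary dictatorship gadget, and — this is the point — to implement the gadget so that the resulting graph $H$ is, up to the coordinate permutations supplied by the \uniquegames layer, a \emph{noise graph on a single product probability space} (built from many copies of a vertex of the \smallsetexpansion graph paired with a symbol from $[q]$). A step of the walk on $H$ keeps each coordinate with probability $1-\Theta(\e)$ and otherwise resamples it — in the part indexing a vertex of $G$, via one step of the random walk on $G$ — with the correlation calibrated so that the induced one‑dimensional operator is a discretization of the Gaussian graph $\cG(1-\e/2)$. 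Phrasing $H$ as a genuine product‑space noise graph, rather than as a local gadget that glues an independent cloud onto each vertex of an intermediate \uniquegames instance, is what will let us bound the expansion of $H$ at every scale directly, without ever having to know that an intermediate constraint graph is itself an expander.

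For completeness, in the \yes case let $S_0\sse V(G)$ be the set with $\vol(S_0)=\delta$ and $\cond_G(S_0)\le\eta$. Reading a \uniquegames-style labeling $F$ off $S_0$ exactly as in the completeness analysis of \pref{thm:expanding-unique-games}, I define, for $\ell\in[q]$, the set $S_\ell$ to consist of the points of $V(H)$ whose $F$‑selected coordinate takes its value in the $\ell$‑th of $q$ equal parts. The $S_\ell$ are disjoint, each has volume $1/q$, and a step of the walk leaves $S_\ell$ only if the selected coordinate is resampled or if the coordinate‑matching along the chosen \uniquegames edge fails; taking the parameters of the \smallsetexpansion$\to$\uniquegames stage small enough makes both events have probability $\e+o(\e)$, so $\cond_H(S_\ell)\le\e+o(\e)$.

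For soundness, let $S\sse V(H)$ be arbitrary and put $f=\Ind_S$. Using the product structure I would expand $\iprod{f,Hf}$ as an average of noise correlations on the cloud space and then invoke the Gaussian invariance principle together with Borell's theorem (as recalled in \pref{sec:preliminaries}) to obtain a dichotomy: either $f$ behaves like a set in Gaussian space, whence $\iprod{f,Hf}\le \Gamma_{1-\e/2}(\vol(S))+\gamma$ and therefore $\cond_H(S)\ge \cond_{\cG(1-\e/2)}(\vol(S))-\nfrac{\gamma}{\vol(S)}$, as required; or else $f$ is noticeably influenced by a bounded set of coordinates, and pulling those coordinates back through the resampling steps exhibits a set of volume about $\delta$ in $G$ with expansion bounded away from $1$ — contradicting the \no case of the \smallsetexpansion instance. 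Choosing the tensor dimension large and $\delta$ small as functions of $\e,\gamma,q$ makes the residual error in the first alternative exactly an additive $\gamma$, \ie $\nfrac{\gamma}{\vol(S)}$ after normalization.

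The main obstacle is making this soundness dichotomy work \emph{uniformly over all scales} $\vol(S)$, including vanishingly small ones: ordinary invariance‑principle arguments degrade for sets of tiny measure, and at precisely that point a local‑gadget reduction would be forced to invoke expansion of an intermediate \uniquegames constraint graph, which instances arising from \smallsetexpansion need not possess. Running the invariance principle on $H$ as a whole, so that the only ``bad'' alternative is a genuine non‑expanding set inside $G$ — which the \no promise forbids outright — is the heart of the argument; doing so while keeping the correlation pinned so that the output profile is exactly that of $\cG(1-\e/2)$ rather than an off‑by‑a‑constant version is where the careful work lies.
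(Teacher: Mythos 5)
Your outline matches the paper's reduction in broad strokes — tensor the \smallsetexpansion graph to get a \uniquegames layer, attach a $q$-ary noise gadget, get completeness from decoding a label off the small set, and in soundness use invariance plus Borell or else decode a small non-expanding set back in $G$. Where it falls short is exactly at the step you flag as the heart of the matter but do not actually carry out.

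When you ``expand $\iprod{f,Hf}$ as an average of noise correlations on the cloud space,'' you are inevitably decomposing the quantity into an average over the \uniquegames-layer vertex $A\in V^R$ of a per-cloud term $\snorm{T_\Omega g_A}$ (in the paper's notation). The invariance principle applied per-$A$ gives $\snorm{T_\Omega g_A}\le \Gamma_{\rho^2}(\E g_A)+\nu$ in the low-influence case. But $\Gamma_{\rho^2}$ is \emph{not} concave, so $\E_A\,\Gamma_{\rho^2}(\E g_A)$ does not compare to $\Gamma_{\rho^2}(\E f)$ in general — precisely the failure of the na\"{i}ve \uniquegames $\to$ \balancedseparator reduction that \pref{sec:cand-reduct-from} is at pains to illustrate. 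Your proposal cannot finish here: ``running the invariance principle on $H$ as a whole'' is not an available move, because the $G$-walk component of the edge distribution of $H$ is a graph random walk, not a product noise operator, so the invariance principle does not apply to $f$ on $V^R\times\Omega^R$ directly; you must go through the $g_A$'s, and then you have to control how $\E g_A$ varies with $A$. That control is the missing ingredient. The paper supplies it with the ``redistribution'' operator $M_z$ (third phase of \pref{fig:ssereduction}) and the spectral lemma for random walks with leaked randomness (\pref{lem:leaky-random-walk}/\pref{lem:leaky-noise-graph}): these show that the operator $M$ governing the two-step resampling has second singular value $\beta$, whence $\E_A(\E g_A-\E f)^2\le\beta\snorm f$ (\pref{lem:variance-bound}), and Chebyshev (\pref{lem:probability-bound}) then says $\E g_A\approx\E f$ for all but a $\beta/\gamma^2$ fraction of $A$, which is what lets you average $\Gamma_{\rho^2}$ safely. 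Nothing in your writeup identifies this variance bound, nor the mechanism (the leaked coordinate $z$ together with the $M_z$ resampling that makes the vertex's ``identity'' depend only on the $z=\top$ coordinates) by which it is obtained; without it the soundness step ``whence $\iprod{f,Hf}\le\Gamma_{1-\e/2}(\vol(S))+\gamma$'' is unjustified, and this is the one place where the reduction genuinely departs from a standard long-code composition. The completeness sketch is fine modulo the usual bookkeeping, and your observation that the issue is uniformity over scales is the right instinct; but the instinct needs to be realized by the variance bound, not by a slogan about viewing $H$ as a single product-space noise graph.
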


The above hardness result matches (upto an absolute constant
factor), the recent algorithmic result (Theorem 1.2) of
\cite{RaghavendraST10} approximating the graph expansion.
Furthermore, both the \yes  and the \no cases of the above theorem are even qualitatively stronger than in the 
\SSEH. In the \yes case, not only does the graph have one non-expanding set,
but it can be partitioned into small sets, \emph{all} of which are non-expanding. This partition
property is useful in some applications such as hardness reduction to \mla.
In the \no case, the expansion of all sets can be characterized only by their size $\mu(S)$.  Specifically, the expansion of every set $S$ of vertices with $\vol(S) >> \gamma$, is at least the expansion of a set of similar size in the Gaussian graph $\cG(1-\epsilon/2)$.

Here we wish to draw an analogy to the Unique Games Conjecture.  The Unique
Games Conjecture is qualitative in that it does not prescribe a relation
between its soundness and alphabet size.  However,  the work of Khot \etal
\cite{KhotKMO07} showed that the \UGCexpand implies a quantitative form of itself with a precise relation between the alphabet size and soundness.  Theorem \ref{thm:general-sse} could be thought of as an analogue of this phenomena for the Small-Set Expansion problem.

As an immediate consequence of Theorem \ref{thm:general-sse}, we
obtain the following hardness of the \BalancedSeparator and \mla
problems (See Appendix \ref{app:mlabalanced} for details).

\begin{corollary}[Hardness of \BalancedSeparator and \minbisection]
\label{cor:balancedseparator}
There is a constant $c$ such that for arbitrarily small $\epsilon > 0$, it is 
\ssehard to distinguish the following two cases 
for a given graph $G=(V,E)$:
\begin{description}
 \item[\yes:] There exists a cut $(S,V \setminus S)$ in $G$ such that 
$\vol(S) = \tfrac12$ and $\cond_G(S) \leq \epsilon + o(\epsilon)$.
 \item[\no:] Every cut $(S,V\setminus S)$ in $G$, with 
$\vol(S) \in \inparen{\tfrac{1}{10},\tfrac12}$ satisfies 
$\cond_G(S) \geq c \sqrt{\epsilon} $.
\end{description}
\end{corollary}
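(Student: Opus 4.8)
The plan is to derive \pref{cor:balancedseparator} from \pref{thm:general-sse} by an essentially trivial (identity) reduction, so that the only substantive ingredient is a one-dimensional estimate on the expansion profile of the Gaussian graph. Fix a small $\epsilon>0$, pick any \emph{even} value of the alphabet parameter $q$ (already $q=2$ suffices), and invoke \pref{thm:general-sse} with this $\epsilon$, this $q$, and an accuracy parameter $\gamma>0$ to be set to a small multiple of $\sqrt\epsilon$ below; given the graph $H$ it produces, output $G\defeq H$. Both directions of the corollary then follow by reinterpreting the two cases of \pref{thm:general-sse} for $G=H$.

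For the \yes case, let $S_1,\dots,S_q$ be the disjoint vertex sets with $\vol(S_l)=1/q$ and $\cond_H(S_l)\le\epsilon+o(\epsilon)$ guaranteed by \pref{thm:general-sse}, and take $S\defeq S_1\cup\cdots\cup S_{q/2}$ (so $S=S_1$ when $q=2$); then $\vol(S)=\tfrac12$ exactly. Since $V\sm S\sse V\sm S_l$ for each $l\le q/2$ and all edge weights are nonnegative, $G(S,V\sm S)=\sum_{l\le q/2}G(S_l,V\sm S)\le\sum_{l\le q/2}G(S_l,V\sm S_l)=\sum_{l\le q/2}\cond_G(S_l)\,\vol(S_l)\le(\epsilon+o(\epsilon))\,\vol(S)$, hence $\cond_G(S)\le\epsilon+o(\epsilon)$. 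This same set, being of volume exactly $\tfrac12$, is also a witness for the \minbisection variant.

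For the \no case, \pref{thm:general-sse} tells us that every vertex set $S$ satisfies $\cond_G(S)\ge\cond_{\cG(1-\epsilon/2)}\big(\vol(S)\big)-\gamma/\vol(S)$, so it suffices to exhibit an absolute constant $c_0>0$ with $\cond_{\cG(1-\epsilon/2)}(\mu)\ge c_0\sqrt\epsilon$ for all $\mu\in[\tfrac1{10},\tfrac12]$ and all small enough $\epsilon$. Granting this, I choose $\gamma\defeq\tfrac{c_0}{20}\sqrt\epsilon$ (legitimate, since $\gamma$ is a free parameter of \pref{thm:general-sse}); then $\gamma/\vol(S)\le\tfrac{c_0}{2}\sqrt\epsilon$ whenever $\vol(S)\ge\tfrac1{10}$, so $\cond_G(S)\ge c_0\sqrt\epsilon-\tfrac{c_0}{2}\sqrt\epsilon=\tfrac{c_0}{2}\sqrt\epsilon$ for every cut with $\vol(S)\in(\tfrac1{10},\tfrac12)$ (and for $\vol(S)=\tfrac12$, which is what \minbisection needs), and we set $c\defeq c_0/2$.

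The one place where there is genuine work is the Gaussian estimate, and I expect it to be the main (though standard) obstacle. Writing $\rho=1-\epsilon/2$ and using $\cond_{\cG(\rho)}(\mu)=1-\Gamma_\rho(\mu)/\mu$ with $t=t(\mu)$ the threshold at which a standard Gaussian has upper tail $\mu$, I would invoke the classical identity $\partial_\rho\Gamma_\rho(\mu)=\tfrac{1}{2\pi\sqrt{1-\rho^2}}\exp\!\big({-t^2}/{(1+\rho)}\big)$ (a consequence of the bivariate Gaussian density solving $\partial_\rho\phi=\partial_x\partial_y\phi$) together with $\Gamma_1(\mu)=\mu$ to get $\mu-\Gamma_{1-\epsilon/2}(\mu)=\int_{1-\epsilon/2}^{1}\tfrac{1}{2\pi\sqrt{1-s^2}}\exp(-t^2/(1+s))\,ds\ge\Omega\!\big(e^{-t^2/2}\big)\sqrt\epsilon$, using $\sqrt{1-s^2}\le\sqrt2\,\sqrt{1-s}$ and $\int_{1-\epsilon/2}^1(1-s)^{-1/2}\,ds=\sqrt{2\epsilon}$; dividing by $\mu\le\tfrac12$ and noting that $t(\mu)$ is bounded for $\mu\ge\tfrac1{10}$ (so $e^{-t^2/2}=\Omega(1)$) yields $\cond_{\cG(1-\epsilon/2)}(\mu)=\Omega(\sqrt\epsilon)$ uniformly on the interval. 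For the single value $\mu=\tfrac12$ one may instead use the closed form $\cond_{\cG(\rho)}(\tfrac12)=\tfrac12-\tfrac{\arcsin\rho}{\pi}=\tfrac{\arccos(1-\epsilon/2)}{\pi}=(1+o(1))\tfrac{\sqrt\epsilon}{\pi}$, or simply quote the estimates on $\Gamma_\rho$ behind Borell's theorem \cite{Borell85}. Everything else is the bookkeeping above; the details are carried out in \pref{app:mlabalanced}.
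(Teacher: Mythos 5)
Your proof is correct and follows essentially the same route as the paper: apply \pref{thm:general-sse} with $q=2$ and $\gamma$ a small multiple of $\sqrt\epsilon$, then use a lower bound on the Gaussian expansion profile $\cond_{\cG(1-\epsilon/2)}(\mu)=\Omega(\sqrt\epsilon)$ on $\mu\in[\tfrac1{10},\tfrac12]$. The paper's appendix proof asserts this Gaussian estimate essentially without justification (and even contains a typo, writing $\Gamma_{1-\epsilon/2}(1/10)$ where $\Gamma_{1-\epsilon/2}(\vol(S))$ is meant), whereas you supply a clean and correct derivation via $\partial_\rho\Gamma_\rho(\mu)=\frac{1}{2\pi\sqrt{1-\rho^2}}e^{-t^2/(1+\rho)}$; you also spell out the \yes{} direction (which the paper omits as trivial) and make the choice of $\gamma$ explicit, so your write-up is more rigorous than, but substantively identical to, the paper's.
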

\begin{corollary}[Hardness of \mla] \label{cor:mla} 
It is \ssehard to approximate \mla to any fixed constant factor.
\end{corollary}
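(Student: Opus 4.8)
\emph{Proof proposal for Corollary~\ref{cor:mla}.}
The plan is to feed the hard instance $H=(V_H,E_H)$ from \pref{thm:general-sse} directly into \mla, exploiting the standard identity that for any linear arrangement $\sigma\from V_H\to\{1,\dots,n\}$ the cost $\sum_{(u,v)\in E_H}\abs{\sigma(u)-\sigma(v)}$ equals $\sum_{i=1}^{n-1}\abs{E_H(S^{(i)},V_H\sm S^{(i)})}$, where $S^{(i)}$ denotes the set of the first $i$ vertices in the order $\sigma$. Thus every \mla solution is literally a sum of prefix-cut sizes, which lets us bound it from above using a good partition (\yes case) and from below using the expansion profile of $H$ (\no case). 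We may assume the graph produced by \pref{thm:general-sse} is $d$-regular (the reduction can be arranged to output such a graph, and otherwise one argues with the obvious weighted version of \mla), so that $\vol(S)=\abs S/n$ and $\abs{E_H(S,V_H\sm S)}=d\abs S\cdot\cond_H(S)$.

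In the \yes case we have a partition $S_1,\dots,S_q$ with $\vol(S_l)=1/q$ and $\cond_H(S_l)\le\e+o(\e)$; lay the vertices out block by block, putting all of $S_1$ first, then all of $S_2$, and so on, ordering vertices inside each block arbitrarily. For a threshold $i$ falling inside block $l$, the prefix $S^{(i)}$ is $S_1\cup\dots\cup S_{l-1}$ together with a subset $P\sse S_l$, so it cuts at most $\sum_{a<l}\abs{E_H(S_a,V_H\sm S_a)}+d\abs P\le(\e+o(\e))dn+dn/q$ edges. Summing over all $n$ thresholds gives an \mla cost of at most $\bigparen{\e+o(\e)+\tfrac1q}dn^2$, which is $O(\e)\,dn^2$ once $q\ge 1/\e$. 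It is essential here that we use the full partition into $q$ small non-expanding pieces: a single sparse bisection would leave an $\Omega(dn^2)$ contribution from edges internal to the two blocks, whereas with $q$ blocks this internal contribution shrinks to $O(dn^2/q)$.

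In the \no case, for \emph{any} arrangement and any threshold $i\le n/2$, the prefix $S^{(i)}$ has $\vol(S^{(i)})=i/n$ and, by the \no guarantee, $\cond_H(S^{(i)})\ge\cond_{\cG(1-\e/2)}(i/n)-\gamma n/i$, so it cuts at least $di\cdot\cond_{\cG(1-\e/2)}(i/n)-\gamma dn$ edges. Restricting the sum to $i\in[n/4,n/2]$ and using the estimate $\cond_{\cG(1-\e/2)}(\mu)\ge c_0\sqrt\e$ for $\mu\in[\tfrac14,\tfrac12]$ (the same lower bound on the Gaussian expansion profile that underlies the remark after \pref{thm:expanding-unique-games}), every arrangement has \mla cost at least $c_1\sqrt\e\,dn^2-\gamma dn^2$, which is at least $\tfrac12 c_1\sqrt\e\,dn^2$ once $\gamma\le\tfrac12 c_1\sqrt\e$.

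Combining the two cases, the ratio between the \no and \yes optima is $\Omega(1/\sqrt\e)$. Hence, given any target constant factor $C$, choose $\e>0$ small enough that this ratio exceeds $C$, set $q=\lceil 1/\e\rceil$ and let $\gamma$ be a sufficiently small constant (say $\gamma=\e^2$); then \pref{thm:general-sse} with these parameters is an \ssehard instance on which a $C$-approximation to \mla would distinguish the two cases, so \mla is \ssehard to approximate within $C$. Since $C$ was arbitrary, this proves the corollary. The only genuinely delicate point is the quantitative lower bound on $\cond_{\cG(1-\e/2)}$ near measure $\tfrac12$ used in the \no analysis, together with the routine check that the additive $\gamma$-error and the $o(\e)$ term stay negligible; everything else is bookkeeping on prefix cuts.
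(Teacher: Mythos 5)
Your proposal is correct and takes essentially the same route as the paper: apply \pref{thm:general-sse} with $q\approx 1/\e$, lay $S_1,\dots,S_q$ out contiguously in the \yes case, and in the \no case lower-bound the arrangement cost via the expansion of balanced prefix cuts, then take $\e$ small enough that the $\Omega(1/\sqrt\e)$ gap exceeds any target constant. The only cosmetic difference is that you derive the \no lower bound by summing prefix-cut sizes over $i\in[n/4,n/2]$ directly, whereas the paper cites an observation of \cite{DevanurKSV06} (that a $\theta$ lower bound on near-balanced cuts forces average edge length $\ge\theta n/3$), which is the same random-threshold argument packaged as a lemma; your constants and parameter choices (in particular needing $\gamma\ll\sqrt\e$ and $q\ge1/\e$) match the paper's up to unimportant factors.
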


\section{Warm-up: Hardness for Balanced Separator}
\label{sec:hardn-balanc-separ}
In this section we present a simplified version of our reduction from \SSE to
\BalancedSeparator. Though it gives sub-optimal parameters, it illustrates the
key ideas used in the general reduction.

\subsection{Candidate Reduction from Unique Games}
\label{sec:cand-reduct-from}

A natural approach for reducing \uniquegames to \balancedseparator is to
consider variants of the reduction from \uniquegames to \maxcut in
\cite{KhotKMO07} (similarly, one could consider variants of the reduction
from \uniquegames to the \emph{generalized} \sparsestcut problem \cite{KhotV05}).

Let $\UG$ be a unique game with alphabet size $R$ and vertex set $V$.
(We assume that every vertex of the unique game participates in the same
number of constraints.
This assumption is without loss of generality.)
The candidate reduction has a parameter $\e>0$.
The graph $H=H_\e(\UG)$ obtained from this candidate reduction has vertex
set $V\times \bits^R$ and its edge distribution is defined as follows:
\begin{enumerate}
\item Sample a random vertex $u\in V$.

\item Sample two random constraints $(u,v,\pi),(u,v',\pi')$ of $\UG$ that
  contain the vertex $u$.
  (Henceforth, we will write $(u,v,\pi)\sim \UG\mid u$ to denote a random
  constraint of $\UG$ containing vertex $u$.)
\item Sample a random edge $(y,y')$ of the boolean noise graph $T_{1-\e}$
  with noise parameter~$\e$.

\item Output an edge between $(v,\pi(y))$ and $(v',\pi'(y'))$.
  (Here, $\pi(y)$ denotes the vector obtained by permuting the coordinates
  of $y$ according to the permutation $\pi$.)
\end{enumerate}

%\vspace{-15 pt}
\paragraph{Completeness}
\label{sec:completeness-1}

Suppose there is a good assignment $F \from V\to [R]$ for the unique game
$\UG$.
Then, if we sample a random vertex $u\in V$ and two random constraint
$(u,v,\pi),(u,v',\pi')\sim \UG\mid u$, with probability very close to $1$
(much closer than $\e$), the labels assigned to $v$ and $v'$ satisfy
$\pi^{-1}(F(v)) = (\pi')^{-1}(F(v))$.
Consider the vertex set
\begin{math}
  S = \set{ (u,x) \mid x_{F(u)}=1}\mper
\end{math}
in the graph $H$. 
We have $\mu(S) = 1/2$.
We claim that the expansion of this set is essentially $\e/2$ (up to a
lower-order term depending on the fraction of constraint of $\UG$
violated by $F$).
Consider a random edge $e$ with endpoints $(v,\pi(y))$ and $(v',\pi'(y'))$, where
the vertices $v,v'\in V$ and the permutations $\pi,\pi'$ are generated as
specified above.
Let $r=\pi^{-1}(F(v))$ and $r'=(\pi')^{-1}(F(v))$.
The edge $e$ crosses the cut $S$ if and only if $y_r \neq {y'}_{r'}$.
As argued before, with probability very close to $1$, we have $r=r'$.
Conditioned on this event, the probability that $y_r\neq y_{r'}$ is equal
to $\e/2$.
This shows that $S$ has expansion $\e/2$.

\paragraph{Soundness}
\label{sec:soundness-1}

Suppose no assignment for the unique game $\UG$ satisfies a significant
fraction of constraints.
Let $S$ be a vertex set in the graph $H$. 
The goal is to lower bound the expansion of $S$ (which is the same as upper
bounding the fraction of edges with both endpoints in $S$).
Let $f \from V^R \times \bits^R \to \bits$ be the indicator function of $S$.
Following the analysis of \cite{KhotKMO07}, we consider functions $g_u\from
\bits^R\to [0,1]$,
\begin{displaymath}
  g_u(x) = \Prob[\substack{%
    (u,v,\pi)\sim \UG \mid u,\\
    y\sim T_{\sqrt{1-\e}}(x)
  }]{%
    f(v,\pi(y))
  }
  \mper
\end{displaymath}
(The graph $H$ turns out to be the square of a graph $H_0$
in which we would just create edges of the form $((u,x), (v,\pi(y)))$.
The function $g_u(x)$ evaluates to the probability that the set~$S$
contains a random neighbor of $(u,x)$ in this graph $H_0$.)
By construction, the fraction $H(S,S)$ of edges of $H$ with both endpoints
in $S$ is exactly
\begin{displaymath}
  H(S,S) = \E_{u\in V} \iprod{g_u,T_{1-\e}g_u}\mper
\end{displaymath}
Since $\UG$ does not have a good assignment, standard arguments (invariance
principle and influence decoding, see \cite{KhotKMO07}) imply the following
upper bound on $H(S,S)$,
\begin{displaymath}
  H(S,S) \le \E_{u\in V} \Gamma_{1-\e}(\mu_u) + o(1)
  \mper
\end{displaymath}
(The notation $o(1)$ hides a term depending on the maximum fraction of
constraints of $\UG$ that can be satisfied. 
For us, this term is not significant.)
Here, $\mu_u$ is the expected value of $g_u$ and $\Gamma_{1-\e}(\cdot )$ is
the noise stability profile of the Gaussian noise graph with parameter
$\e$.
\Mnote{Is there a better way of describing $\Gamma_{1-\e}$. In my Princeton and IAS some people seemed
  confused by the phrase ``Gaussian noise graph''.}
We would like to show that every set $S$ that contains a $\mu$ fraction of
the vertices of $H$ satisfies $H(S,S)\le \Gamma_{1-\e}(\mu)+o(1)$.
However, the function $\Gamma_{1-\e}$ is, of course, not concave.
Hence, this upper bound holds only if $\mu_u$ is close to $\mu$ for most
vertices~$u\in V$.
%
% (Formally, the right condition is that $\E_{u\in V} \mu_u^2$ is close to
% $(\E_{u\in V}\mu_u)^2=\mu^2$.)

In fact, it is very easy to construct examples that show that the candidate
reduction is not sound.
For example, consider a unique game $\UG$ that consists of two disjoint
parts of the same size (i.e., without any constraint between the two
parts).
The reduction preserves this global structure,
in the sense that the graph~$H$ also consists of two disjoint parts of the
same size (with no edge between the parts).
Hence, this graph contains a vertex set with volume~$\half$ and
expansion~$0$ irrespective of the optimal value of the unique game~$\UG$.
In fact, any cut in the underlying graph of $\UG$ can be translated to a cut in $H$
and the resulting function $f$ may have the values $\mu_u$ as
(very close to) 0 or 1.

This example shows that the above candidate reduction can only work if one
makes assumptions about structure of the constraint graph of the underlying
unique game~$\UG$. 
However, such an assumption raises the question if \uniquegames could be
hard to approximate even if the constraint graph is expanding.
This issue turns out to be delicate as demonstrated by the algorithm for
\uniquegames with expanding constraint graphs \cite{AroraKKSTV08}.
This algorithm achieves a good approximation for \uniquegames if the
expansion of the constraint graph exceeds a certain threshold.

\subsection{Structured Unique Games from Small-Set Expansion}
\label{sec:struct-uniq-games}

In this work, we present a very different approach for fixing the above
candidate reduction.
Instead of assuming expansion properties of the constraint graph, we assume
that the underlying unique game is obtained by the reduction
from \smallsetexpansion to \uniquegames in \cite{RaghavendraS10}
\footnote{We remark that unique games of this form do not necessarily have
expanding constraint graphs.
In fact, it is still possible that the constraint graph consists of two
disconnected components.}.
This specific form of the underlying unique game will allows us to modify
the reduction such that the global structure of the constraint graph is no
longer preserved in the graph obtained from the reduction.
(In particular, our modified reduction will break with the paradigm of
composing unique games with local gadgets.)

In the following, we describe the reduction from \smallsetexpansion to
\uniquegames.
Let $G$ be a regular graph with vertex set $V$.
For technical reasons, we assume that $G$ contains a copy of the complete
graph of weight $\eta>0$. 
(Since we will be able to work with very small $\eta$, this assumption is
without loss of generality.)
Given a parameter $R\in \N$ and the graph $G$, the reduction outputs a
unique game $\UG=\UG_R(G)$ with vertex set $V^R$ and alphabet $[R]$.
The constraints of the unique game $\UG$ correspond to the following
probabilistic verifier for an assignment $F\from V^R\to [R]$:
\begin{enumerate}
\item Sample a random vertex $A\in V^R$.

\item Sample two random neighbors $B,C\sim G^{\tensor R}(A)$ of the vertex
  $A$ in the tensor-product graph $G^{\tensor R}$.

\item Sample two random permutations $\pi_B,\pi_C$ of $[R]$.

\item Verify that $\pi_B^{-1}(F(\pi_B(B)))=(\pi_C)^{-1}(F(\pi_C(C)))$.
\end{enumerate}

Raghavendra and Steurer \cite{RaghavendraS10} show that this reduction is
complete and sound in the following sense:
\begin{description}
\item[Completeness] If the graph $G$ contains a vertex set with volume
  $1/R$ and expansion close to $0$, then the unique game $\UG=\UG_R(G)$ has
  a partial assignment that labels an $\alpha\ge 1/e$ fraction of the
  vertices and satisfies almost an $\alpha$ fraction of the constraints.
\item[Soundness] If the graph $G$ contains no set with volume $1/R$ and
  expansion bounded away from $1$, then no assignment for the unique game
  $\UG=\UG_R(G)$ satisfies a significant fraction of the constraints.
\end{description}

Hence, if one assumes the Small-Set Expansion Hypothesis, then the kind of
unique games obtained from the reduction are hard to approximate.

We remark that the completeness of the reduction seems weaker than usual,
because we are only guaranteed a partial assignment for the unique game.
However, it is easy to check that the KKMO reduction presented in the
previous section also works if there is only a partial assignment in the
completeness case.
The only difference is that one now gets a set $S$ with $\mu(S) = \alpha/2$ and expansion
roughly $\e/2$.

\subsection{Reduction from Small-Set Expansion to Balanced Separator}
\label{sec:reduction-from-small}

We now show how the combination of the above two reductions can be modified to give a reduction from
\SSE to \BalancedSeparator.
Let $\UG = \UG_R(G)$ be the unique game given by the reduction of Raghavendra and Steurer. If we consider the
graph $H_{\e}$ given by the reduction in \pref{sec:cand-reduct-from}, each vertex of $H_{\e}$ is now
of the form $(A,x)$, where $A \in V^R$ and $x \in \B^R$.

The intuition is that in this case, we can think of $x$ as picking a \emph{subset} of the
vertices in $A$, and that just the knowledge of this subset (instead of the whole of $A$) 
is sufficient for the provers to provide a good answer to the corresponding unique game. 
In particular, let $A' = \set{A_i \mid x_i = 1}$ is the subset picked by $x$. 
Then the argument for the completeness case in \cite{RaghavendraS10} actually shows 
that one can still find a good labeling for an $\alpha$ fraction of the vertices $A$, 
where the label of $A$ only depends on $A'$
\footnote{Given a non-expanding small set $S$, if $A' \cap S$ contains a single
  element $A_j'$, then we assign the label $j$ to $A$. If $A' \cap S$ is not a singleton, we do not
  label $A$.}.

Formally, if we replace $A$ with the tuple $A'(x)$ defined by taking $A_i' = A_i$ if $x_i = 1$ and $A_i' =
\bot$ otherwise.
This gives a graph $H'$ with the vertex set being a subset of $(V \cup \{\bot\})^R \times \B^R$. 
The the argument in completeness case 
for showing that $H$ has a balanced cut of expansion 
roughly $\e/2$ can in fact be extended to show that $H'$ also has a balanced  cut of expansion
roughly $\e/2$.

The soundness analysis in the previous reduction did not always work because $H$ had the same
structure as $G^{\tensor R}$, since we essentially replaced every vertex of $G^{\tensor R}$ by a gadget $\B^R$ to obtain
$H$. 
However, the structure of $H'$ is very different from that of $G^{\tensor R}$.

For example, consider the vertices $A = (u_1, u_2,\ldots, u_R)$ and $B = (v_1, u_2, \ldots u_R)$ in
$V^R$ which only differ in the first coordinate ($A,B$ are not necessarily adjacent). Let $x \in
\B^R$ be such that $x_1 = 0$. 
Then, while $(A,x)$ and $(B,x)$ are different vertices in $H$, 
$(A'(x),x)$ and $(B'(x),x)$ are in fact the same vertex in $H'$!
On the other hand, if $x_1 = 1$, then $(A'(x),x)$ and $(B'(x),x)$ would be two different vertices in
$H'$.
Hence, the gadget structure of $H$ is no longer preserved in $H'$ - it is very different from a
``locally modified'' copy of $G^{\tensor R}$. 

For the purposes of analysis, it will be more convenient to think of $A'$ being obtained
by replacing $A_i$ where $x_i = 0$, by a random vertex of $G$ instead of the symbol $\bot$.
Instead of identifying different vertices in $H$ with the same vertex in $H'$, this now has the
effect of re-distributing the weight of an incident on $(A,x)$, uniformly over all the vertices that 
$(A',x)$ can map to.
Let $M_{x}$ denote a Markov operator which maps $A$ to a random $A'$ as above
(a more general version and analysis of such operators can be found in \pref{sec:additional-prelims}).

We now state the combined reduction. The weight of an edge in the final graph $H'$ is the probability
that it is produced by the following process:
\begin{enumerate}
\item Sample a random vertex $A\in V^R$.
\item Sample two random neighbors $B,C\sim G^{\tensor R}(A)$ of the vertex
  $A$ in the tensor-product graph $G^{\tensor R}$.
\item Sample $x_B, x_C \sim \B^R$.
\item Sample $B' \sim M_{x_B}(B)$ and $C' \sim M_{x_C}(C)$.
\item Sample two random permutations $\pi_B,\pi_C$ of $[R]$.
\item Output an edge between the vertices $\pi_B(B', x_B)$ and $\pi_C(C', x_C)$~ ($\pi(A,
  x)$ denotes the tuple $(\pi(A), \pi(x))$).
\end{enumerate}

As before, let $f: V^R \times \bits^R$ denote the indicator function of a set in $H'$, with (say)
$\E f = \mu = 1/2$. We define the functions
\[ \bar{f} (A,x) \defeq \E_{\pi} f(\pi.A,\pi.x) \qquad \text{and} 
\qquad g_A(x) \defeq \E_{B \sim G^{\tensor R}(A) } \E_{B' \sim M_x(B)} {\bar f}(B',x)\mper\]

By construction, each vertex $(A,x)$ of $H'$ has exactly the same 
neighborhood structure as  $(\pi.A', \pi.x)$ for all $\pi \in S_R$ and $A' \in M_x(A)$. 
Hence, the fraction of edges crossing the cut can also be written in terms of $\bar f$ as
$\iprod{f, H' f} = \iprod{\bar f, H' \bar f}$.

We will show that $\bar f$ gives a cut (actually, a distribution over cuts) with the same expansion
in the graph $H$, such that the functions $g_A$ satisfy 
$\Prob[A]{\E g_A \in  (\nfrac{1}{10},\nfrac{9}{10})} \geq \nfrac{1}{10}$. 
Recall that showing this was exactly the problem
in making the reduction in \pref{sec:cand-reduct-from} work. 

Since $\E_A \E_x {g_A} = \mu$, we have $\E_A \inparen{\E_x g_A}^2 \geq \mu^2$. The following claim
also gives an upper bound.
\begin{claim}
$\E_A \inparen{\E_x g_A}^2 \leq \mu^2/2 + \mu/2$
\end{claim}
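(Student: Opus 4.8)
The plan is to recognize $\E_A(\E_x g_A)^2$ as a quadratic form $\iprod{\bar f,K\bar f}$ for a tensor-product operator $K=\bigotimes_{i=1}^R K_i$ on $L_2(V^R\times\bits^R)$, to show that each factor $K_i$ has $1$ as a \emph{simple} top eigenvalue (the constants) with every other eigenvalue at most $\nfrac12$, and then to bound the form by splitting off the constant component of $\bar f$. The ``spectral gap at $\nfrac12$'' of the $K_i$ is exactly what produces the factor-$\nfrac12$ improvement over the trivial bound $\mu$.

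\emph{Setting up the operator.} Let $W_i$ be the operator on $L_2(V\times\bits)$ that acts on the $V$-coordinate by the walk $G$ on the slice $x_i=1$, and by the averaging operator $J$ (a uniformly random vertex of $V$) on the slice $x_i=0$, leaving the $\bits$-coordinate untouched; and let $N_i$ be the operator that averages out the $\bits$-coordinate. Unwinding the definitions — using that sampling $B\sim G^{\otimes R}(A)$ and then $B'\sim M_x(B)$ amounts, coordinate by coordinate, to the walk $G$ when $x_i=1$ and to a fresh uniform vertex when $x_i=0$ — one gets $g_A(x)=(W\bar f)(A,x)$ with $W=\bigotimes_i W_i$, and then $\E_x g_A(x)=(NW\bar f)(A)$ with $N=\bigotimes_i N_i$. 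Since $W,N$ are self-adjoint and $N^2=N$,
\[
  \E_A(\E_x g_A)^2=\snorm{NW\bar f}=\Iprod{\bar f,\ WNW\,\bar f}=\Iprod{\bar f,\ \Bigparen{\textstyle\bigotimes_i K_i}\bar f}\mcom\qquad K_i:=W_iN_iW_i .
\]

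\emph{Spectrum of $K_i$ and the conclusion.} Diagonalizing $G$ (eigenvalue $1$ on the constants, eigenvalues $\lambda_j$ with $\abs{\lambda_j}\le1$ on the rest), one checks that $K_i$ preserves each two-dimensional subspace spanned by a fixed $G$-eigenfunction on the two slices $x_i=0,1$. On the constants $K_i$ acts (in slice coordinates $(a,b)$) as $\left(\begin{smallmatrix}\nfrac12&\nfrac12\\\nfrac12&\nfrac12\end{smallmatrix}\right)$ — here one uses $JG=GJ=J^2=J$ — with eigenvalues $1$ (the global constant) and $0$; on the block of a nontrivial $G$-eigenfunction it sends $(a,b)\mapsto(0,\ \nfrac12\lambda_j^2\,b)$, with eigenvalues $\nfrac12\lambda_j^2\le\nfrac12$ and $0$. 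Hence $0\sle K_i\sle \nfrac12 I_i+\nfrac12 J_i$, where $J_i$ is the projection killing the dependence on the $i$-th coordinate $(A_i,x_i)$. Tensoring, $\bigotimes_i K_i\sle\bigotimes_i(\nfrac12 I_i+\nfrac12 J_i)$; expanding the right-hand side and reading off the orthogonal (Efron--Stein) decomposition $\bar f=\sum_{U\subseteq[R]}\bar f^{=U}$ into parts depending on exactly the coordinates in $U$,
\[
  \E_A(\E_x g_A)^2\ \le\ \Iprod{\bar f,\ \Bigparen{\textstyle\bigotimes_i(\nfrac12 I_i+\nfrac12 J_i)}\bar f}\ =\ \sum_{U\subseteq[R]}2^{-\abs U}\,\snorm{\bar f^{=U}} .
\]
The $U=\emptyset$ term equals $(\E\bar f)^2=\mu^2$, and bounding $2^{-\abs U}\le\nfrac12$ for $U\ne\emptyset$ gives $\E_A(\E_x g_A)^2\le\mu^2+\nfrac12(\snorm{\bar f}-\mu^2)=\nfrac12\mu^2+\nfrac12\snorm{\bar f}$. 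Finally $\snorm{\bar f}=\E\bar f^2\le\E\bar f=\mu$ since $\bar f$ takes values in $[0,1]$, which yields the claimed bound $\mu^2/2+\mu/2$.

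\emph{Main obstacle.} The crux is the spectral bound $K_i\sle\nfrac12 I_i+\nfrac12 J_i$: after the $\bits$-averaging $N_i$, the $x_i=0$ slice has been collapsed to constants by $J$, so effectively only ``half'' of the $i$-th coordinate feeds information back into $K_i$ — this is where the $\nfrac12$ is born, and the operators must be set up precisely enough to see it. A related point one has to get right is to perform the decomposition over the \emph{joint} coordinates $(A_i,x_i)$ rather than over the $A$- and $x$-coordinates separately: a naive argument that conditions on $x$ and treats $\bar f(\cdot,x)$ as a function on $V^R$ only gives the weaker bound $\mu$, because it effectively replaces the single constant $\mu^2$ by $\E_x(\E_A\bar f(A,x))^2\ge\mu^2$ and loses the variance term.
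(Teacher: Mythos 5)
Your proof is correct, and while it lands on the same high-level idea as the paper's (realize $\E_A(\E_x g_A)^2$ as a quadratic form in $\bar f$ under a tensor-product Markov operator with spectral gap $\nfrac12$, then use $\snorm{\bar f}\le\E\bar f=\mu$), your execution is genuinely different in two respects. The paper first applies Cauchy--Schwarz to move $\E_{B\sim G^{\tensor R}(A)}$ outside the square — thereby \emph{eliminating} the $G$-step entirely — and only then identifies the remaining form as $\iprod{\bar f,M\bar f}$ for a Markov chain $M$ whose second eigenvalue is stated to be $\nfrac12$ (effectively an instance of \pref{lem:leaky-noise-graph} with $\Omega=V$ and $\beta=\nfrac12$). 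You instead keep the $G$-step inside the operator, getting $K_i=W_iN_iW_i$ whose nontrivial eigenvalues are $\lambda_j^2/2\le\nfrac12$, which happens to give the same bound since $\lambda_j^2\le1$; and rather than quoting a second-eigenvalue fact you run an explicit Efron--Stein decomposition against $\bigotimes_i(\nfrac12 I_i+\nfrac12 J_i)$. What the paper's route buys is a cleaner operator (independent of $G$) that slots directly into the preliminaries about leaky noise graphs, which matters for the general reduction in \Sref{sec:reduction}. What your route buys is self-containedness: the $\nfrac12$ is visible directly from the $2\times2$ block structure without invoking Cauchy--Schwarz or the general leaky-random-walk lemma. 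One small presentational note: the claim that $\bigotimes_i A_i\sle\bigotimes_i B_i$ when $0\sle A_i\sle B_i$ deserves a word (it follows by the standard telescoping of $\bigotimes B_i-\bigotimes A_i$ into PSD terms), and it is worth stating explicitly that the $K_i$-eigenvectors you exhibit are simultaneously eigenvectors of $\nfrac12 I_i+\nfrac12 J_i$, so the operator inequality really is an eigenvalue-by-eigenvalue comparison in a common basis.
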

\begin{proof}
We have
\begin{align*}
\E_{A \sim V^R} \inparen{\E_x g_A}^2  
= \E_{A \sim V^R} \inparen{\E_{B \sim G^{\tensor R} (A)} \E_x \E_{B' \sim M_x(B)} {\bar f}}^2
&\leq~ \E_{A \sim V^R} \E_{B \sim G^{\tensor R}} \inparen{\E_x \E_{B' \sim M_x(B)} {\bar f}}^2\\
&=~ \E_{B \sim V^R} \inparen{\E_x \E_{B' \sim M_x(B)} {\bar f}}^2\\
&=~ \Ex[B \sim V^R]{\inparen{\E_{x_1} \E_{B_1' \sim M_{x_1}(B)} {\bar f}}\inparen{\E_{x_2} \E_{B_2'
      \sim M_{x_2}(B)} {\bar f}}} \\
&=~ \E_{x_1} \E_{B_1' \sim M_{x_1}(B)} {\bar f}(B_1',x_1) \E_{(B_2',x_2) \sim M (B_1',x_1)}
f(B_2',x_2) \mper
\end{align*}
For the last equality above, we define $M$ to be a Markov operator which samples $(B_2',x_2)$
from the correct distribution given $(B_1',x_1')$. 
Since $x_1,x_2$ are independent, $x_2$ can just
be sampled uniformly. 
The fact that $B_1'$ and $B_2'$ come from the same (random) $B$ can be
captured by sampling each coordinate of $B_2'$ as
\[
(B_2')_i ~=~ \left\{ \begin{array}{ll}
(B_1')_i & \text{if}~ (x_1)_i = (x_2)_i = 0\\
\text{random vertex in}~ V & \text{otherwise}  
\end{array} \right. \mper
\]
Abusing notation, we also use $M$ to denote the operator on the space of the functions which
averages the value of the function over random  $(B_2',x_2)$ generated as above. 
Then, if $\lambda$ is the second eigenvalue of $M$,  we have
\[\E_A \inparen{\E_x g_A}^2 
~\leq~ \iprod{\bar f, M \bar f} 
~\leq~ 1 \cdot (\E \bar f)^2 + \lambda \cdot \inparen{\norm{\bar f}^2 - (\E \bar f)^2}
~\leq~ (1 - \lambda) \cdot \mu^2 + \lambda \cdot \mu \mper\]
Finally, it can be checked that the second eigenvalue of $M$ is 1/2 which proves the claim.
\end{proof}
This gives that $\E_x g_A$ cannot be always very far from $\mu$. Formally,
\[ \Prob[A]{\abs{\E g_A - \mu} \geq \gamma} 
~\leq~ \frac{\E_A (\E g_A - \mu)^2}{\gamma^2}
~\leq~ \frac{\mu(1-\mu)}{2\gamma^2} \mper\]
Hence, for $\gamma = 2/5$, the probability is at most $\nfrac{25}{32} < \nfrac{9}{10}$.
This can now be combined with the bound from \pref{sec:cand-reduct-from} that gives
\[
H'(S,S)  \leq \E_{A} \Gamma_{1-\e} \inparen{\E g_A} + o(1) \mper
\]
Since $\E g_A \geq \nfrac{1}{10}$ with probability at least $\nfrac{1}{10}$ over $A$, these 
``nice'' $A$'s contribute a volume of at least $\nfrac{1}{100}$. Also, for a nice $A$, we have 
$\Gamma_{1-\e} \inparen{\E g_A} \leq (\E g_A)(1- \Omega(\sqrt{\e}))$. Hence,
\[
H'(S,S)  \leq (\mu-\nfrac{1}{100}) + \nfrac{1}{100} \cdot (1 - \Omega(\sqrt{\e})) + o(1)
\]
which shows that $S$ has expansion $\Omega(\sqrt{\e})$.

\section{Additional Preliminaries}
\label{sec:additional-prelims}

\paragraph{Unique Games}
  An instance of \uniquegames represented as $\uginst =
  ( \cV,\cE,\Pi,[R])$ consists of a
graph over vertex set $\cV$ with the edges $\cE$
  between them.  Also part of the instance is a set of labels $[R] =
  \{1,\ldots,R \}$, and a set of permutations $\Pi = \{ \pi_{v \la w} : [R]
  \rightarrow [R]\}$, one permutation for each edge $e = (w,v) \in \cE$.  An
assignment $F \from \cV \to [R]$
  of labels to vertices is said to satisfy an edge $e = (w,v)$, if
  $\pi_{v \la w}(F) = F(v)$.  The objective is to find an
assignment $F$ of labels that satisfies the maximum number of edges.

As is customary in hardness of approximation, one defines a
gap-version of the \uniquegames problem as follows:
\begin{problem}[\uniquegames$(R, 1-\ugcomp, \ugsound)$]
Given a \uniquegames instance $\uginst =
( \cV,\cE, \Pi = \{\pi_{v \la
w}:[R]\rightarrow [R] ~ |  ~ e=(w,v) \in \cE\},[R])$ with number of
labels $R$, distinguish between the following two cases:
\begin{itemize} \itemsep=0ex
\item {\sf $(1-\ugcomp)$- satisfiable instances:} There exists an
	assignment $F$ of labels that satisfies a $1 - \ugcomp$ fraction
	of edges.
\item {\sf Instances that are not $\ugsound$-satisfiable:}  No
assignment satisfies more than a $\ugsound$-fraction of the edges $\cE$.
\end{itemize}
\end{problem}
\Mnote{Changed the notation for assignment to $F$ instead of $\cA$ to be consistent 
with later sections.}
The Unique Games Conjecture asserts that the above decision problem is
\NP-hard when the number of labels is large enough.  Formally,
\begin{conjecture}[Unique Games Conjecture \cite{Khot02a}]
For all constants $\ugcomp,\ugsound > 0$, there exists large enough constant
$R$ such that \uniquegames$(R,1-\ugcomp,\ugsound)$ is \NP-hard.
\end{conjecture}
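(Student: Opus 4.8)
\medskip
\noindent\textbf{Proof proposal.}
The final statement is the Unique Games Conjecture itself, which remains one of the central open problems in hardness of approximation; no proof is known, so what follows is a plan of attack rather than an argument. The route originally proposed by Khot \cite{Khot02a} is to start from the \NP-hardness of \labelcover with completeness arbitrarily close to $1$ (the PCP theorem composed with parallel repetition) and to ``uniquify'' the projection constraints: replace each projection by a permutation while keeping completeness near $1$ and driving soundness to $0$. This is exactly the step that has resisted all attempts, since parallel repetition produces genuine $d$-to-$1$ constraints and every known way of collapsing them to permutations either ruins completeness or closes the gap. Moreover, the subexponential algorithm of Arora, Barak, and Steurer \cite{AroraBS10} for \uniquegames (and for $\smallsetexpansion(\eta,\delta)$) already shows that a proof cannot go through the most economical PCP constructions: a polynomial-blow-up reduction with a fixed gap would yield a subexponential algorithm for \threesat, so (assuming the Exponential Time Hypothesis) the instance-size blow-up of any \NP-hardness reduction must grow as the completeness--soundness gap sharpens. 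Navigating this barrier is the core obstruction on this route.

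A second route, the one made available by the present paper, is to reduce from a combinatorially simpler problem. By Theorem~1.9 of \cite{RaghavendraS10} the \SSEH implies the \UGCexpand (indeed the small-set-expanding variant \pref{hyp:expanding-unique-games}), and by the corollary following \pref{thm:expanding-unique-games} the two are equivalent. It would therefore suffice to establish the \SSEH, namely: for every constant $\eta>0$ exhibit a sufficiently small $\delta>0$ and a polynomial-time reduction from \threesat (or \labelcover) producing a graph $G$ that in the \yes case has a vertex set of volume $\delta$ with expansion at most $\eta$, and in the \no case has every set of volume $\delta$ with expansion at least $1-\eta$. Composing such a reduction with the \smallsetexpansion-to-\uniquegames reduction of \cite{RaghavendraS10} (the ``structured unique games'' construction recalled in \pref{sec:struct-uniq-games}) would then give \NP-hardness of \uniquegames$(R,1-\ugcomp,\ugsound)$ with $R=R(\ugcomp,\ugsound)$ as required. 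The obstacle is the first step: no \NP-hardness reduction for Small-Set Expansion with an $\eta$-versus-$(1-\eta)$ gap is known, and the same subexponential-algorithm barrier of \cite{AroraBS10} constrains any candidate for it.

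A third, more incremental route is to first prove a weaker statement and then amplify it: establish \NP-hardness of a ``$d$-to-$1$'' variant \dgames, or of \uniquegames with completeness bounded away from $1$, and then boost completeness toward $1$ by a hardness-amplification step in the spirit of (but considerably stronger than) the alphabet reduction of Khot \etal \cite{KhotKMO07} used in this paper to pass from qualitative to quantitative hardness. The difficulty is that no completeness-amplification procedure for unique games is known, and the integrality-gap and algorithmic evidence surveyed in the introduction --- all known SDP gap instances for \uniquegames being small-set expanders, with the analyses relying on this --- suggests that the completeness-close-to-$1$ regime is precisely where the genuine hardness lives. In summary, every known avenue reduces the conjecture to a strong \NP-hardness statement --- for \labelcover with permutation constraints, for \smallsetexpansion, or for \dgames with completeness $1-o(1)$ --- that is itself open; constructing such a reduction while respecting the subexponential-time barrier of \cite{AroraBS10} is the crux, and I would pursue the \SSEH route, since this paper reduces the conjecture to a clean, purely combinatorial question about the expansion of small sets.
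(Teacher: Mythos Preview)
Your assessment is correct: the statement is the Unique Games Conjecture, which the paper states as an open conjecture in the preliminaries and does not attempt to prove. There is no proof in the paper to compare against; the conjecture is cited as background from \cite{Khot02a}, and the paper's actual results (\pref{thm:expanding-unique-games}, \pref{thm:general-sse}) are conditional on the \SSEH, not proofs of the conjecture itself. Your survey of the known routes and obstructions is accurate and well aligned with the paper's own framing of the landscape.
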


\paragraph{Graph expansion}
\label{sec:graphs}

In this work, all graphs are undirected and possibly weighted.
Let $G$ be a graph with vertex set $V$.
We write $ij\sim G$ to denote a random edge sampled from $G$ (with
random orientation).
For two vertex sets $S,T\sse V$, let $G(S,T)$ be the fraction of edges going from $S$ to $T$, i.e.,
\begin{displaymath}
  G(S,T)
  \defeq \Prob[ij\sim G]{i\in S, j\in T}
  \mper
\end{displaymath}
The \emph{expansion}\footnote{The technically more precise term is
  \emph{conductance} }
$\cond_G(S)$ of a set $S\sse V$ is the fraction of edges leaving $S$
normalized by the fraction of edges incident to $S$, i.e.,
\begin{displaymath}
  \cond_G(S) 
  \defeq \frac{G(S,V\sm S)}{G(S,V)}
  = \frac{\Prob[ij \sim G]{i \in S, j \notin T}}{\Prob[ij \sim G]{i \in S}}
  = \Prob[ij \sim G]{j \notin T \given i \in S}
  \mper
\end{displaymath}
The \emph{volume} of a set $S$ is the fraction of edges incident on it
and is denoted by $\vol(S) \defeq G(S,V)$. The fraction of edges leaving the set 
is denoted by $\surf(S)\defeq G(S,V\sm S)$.

\paragraph{\SSEH}
\begin{problem}[\smallsetexpansion$(\eta,\delta)$] 
Given a regular graph $G=(V,E)$, distinguish between the following two cases:
\begin{description}
\item[\yes:] There exists a non-expanding set $S \sse V$ with $\mu(S) = \delta$ 
and $\Phi_G(S) \leq \eta$.
\item[\no:] All sets $S \sse V$ with $\mu(S) = \delta$ are highly expanding having 
$\Phi_G(S) \geq 1-\eta$.
\end{description}
\end{problem}

\begin{hypothesis}[Hardness of approximating \smallsetexpansion]
For all $\eta > 0$, there exists $\delta > 0$ such that the promise problem
\smallsetexpansion($\eta,\delta$) is {\sf NP}-hard.
\end{hypothesis}\label{hypo:sse-hardness}

\begin{remark}
It is easy to see that for the problem \smallsetexpansion($\eta,\delta$) to be hard, one must have
$\delta \leq \eta$. This follows from the fact that if we randomly sample a set $S$ containing
a $\delta$ fraction of the vertices (and hence, having volume $\delta$ for a regular graph), the
expected fraction of edges crossing the set is $\delta(1-\delta)$ and hence $\E\cond_G(S) = 1-\delta$.
However, for it to be possible that for all sets with $\vol(S) = \delta$ have $\cond_G(S) \geq 1-\eta$, 
we must have $\delta \leq \eta$.
\end{remark}

\begin{definition} \label{def:ssehard}
Let $\cP$ be a decision problem of distinguishing between two disjoint
families (cases) of instances denoted by $\{\yes,\no\}$. For a given instance
$\cI$ of $\cP$, let $\case(\cI)$ denote the family to which $\cI$ belongs.
We say that $\cP$ is \ssehard if for some $\eta > 0$ and
all $\delta \in (0,\eta)$, there is a polynomial time reduction, which starting from an
instance $G=(V,E)$ of $\smallsetexpansion(\eta,\delta)$, produces an instance $\cI$ of $\cP$
such that
\begin{itemize}
 \item $\exists S \sse V$ with $\vol(S) = \delta$ and $\cond_G(S) \leq \eta$
$\quad\implies\quad$ $\case(\cI) = \yes$.
\item $\forall S \sse V$ with $\vol(S) = \delta$, $\cond_G(S) \geq 1-\eta$
$\quad \implies \quad$ $\case(\cI) = \no$.
\end{itemize}
\end{definition}

For the proofs, it shall be more convenient to use the following version of the $\smallsetexpansion$ problem,
in which we high expansion is guaranteed not only for sets of measure $\delta$, but also within 
an arbitrary multiplicative factor of $\delta$.

\begin{problem}[\smallsetexpansion$(\eta,\delta,M)$] 
Given a regular graph $G=(V,E)$, distinguish between the following two cases:
\begin{description}
\item[\yes:] There exists a non-expanding set $S \sse V$ with $\mu(S) = \delta$ 
and $\Phi_G(S) \leq \eta$.
\item[\no:] All sets $S \sse V$ with $\mu(S) \in \inparen{\tfrac{\delta}{M},M\delta}$ have
$\Phi_G(S) \geq 1-\eta$.
\end{description}
\end{problem}

The following proposition shows that for the purposes of showing that $\cP$ is \ssehard,
it is sufficient to give a reduction from \smallsetexpansion($\eta,\delta,M$) for any chosen values of
$\eta, M$ and for all $\delta$. We defer the proof to \pref{sec:stronger-small-set}.

\begin{proposition}
\torestate{
\label{prop:stronger-sse}
For all $\eta > 0, M \geq 1$ and all $\delta < 1/M$, there is polynomial time reduction
from \smallsetexpansion$(\tfrac{\eta}{M},\delta)$ to $\smallsetexpansion(\eta,\delta,M)$.
}
\end{proposition}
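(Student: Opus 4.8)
The plan is to let the reduction be the \emph{identity}: given a graph $G=(V,E)$, thought of as an instance of $\smallsetexpansion(\eta/M,\delta)$, we output $G$ itself, now read as an instance of $\smallsetexpansion(\eta,\delta,M)$. This is trivially computable in polynomial time, so everything rests on verifying the two promise implications, and the only tool needed is the behaviour of the ``internal density'' $1-\Phi_G(S)=G(S,S)/\mu(S)$ (valid for regular $G$, where $\mu(S)=\vol(S)$) under enlarging and shrinking $S$. The intuition is that passing between ``measure exactly $\delta$'' and ``measure within a factor $M$ of $\delta$'' changes this density by at most a factor $M$, which is exactly the loss $\eta\mapsto\eta/M$ recorded in the statement.

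The \yes case needs no work: a set $S$ with $\mu(S)=\delta$ and $\Phi_G(S)\le\eta/M\le\eta$ is already a witness for the \yes case of $\smallsetexpansion(\eta,\delta,M)$. The substance is the \no case, which I would establish by contraposition. Suppose $G$ is not a \no instance of the target problem, i.e.\ there is a set $T$ with $\mu(T)=\gamma\in(\delta/M,M\delta)$ and $\Phi_G(T)<1-\eta$, equivalently $G(T,T)>\eta\gamma$. From $T$ I would construct a set $S$ with $\mu(S)=\delta$ (exactly) and $\Phi_G(S)<1-\eta/M$, which shows $G$ is not a \no instance of $\smallsetexpansion(\eta/M,\delta)$, as needed. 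I would split on the size of $\gamma$ relative to $\delta$.

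If $\gamma\le\delta$, take $S$ to be any set with $T\subseteq S$ and $\mu(S)=\delta$. Every edge internal to $T$ is internal to $S$, so $G(S,S)\ge G(T,T)>\eta\gamma>\eta\delta/M$ using $\gamma>\delta/M$; hence $G(S,S)/\mu(S)=G(S,S)/\delta>\eta/M$, i.e.\ $\Phi_G(S)<1-\eta/M$. If $\delta<\gamma<M\delta$, take $S$ to be a uniformly random subset of $T$ with $|S|=\delta n$ (where $n=|V|$). An edge with an endpoint outside $T$ contributes nothing to $G(S,S)$, and a fixed edge internal to $T$ stays internal to $S$ with probability $\tfrac{|S|(|S|-1)}{|T|(|T|-1)}$, which equals $(\delta/\gamma)^2(1-o(1))$; so $\E_S\, G(S,S)=G(T,T)\cdot\tfrac{|S|(|S|-1)}{|T|(|T|-1)}$ and some choice of $S$ attains at least this value. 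For that $S$,
\[
 1-\Phi_G(S)=\frac{G(S,S)}{\delta}\;\ge\;\frac{G(T,T)}{\gamma}\cdot\frac{\delta}{\gamma}\,(1-o(1))\;>\;\eta\cdot\frac1M\,(1-o(1)),
\]
using $G(T,T)/\gamma>\eta$ and $\delta/\gamma>1/M$.

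The main obstacle is making that last estimate strict rather than merely ``$1-\Phi_G(S)>\eta/M-o(1)$'': the averaging over subsets is exact only up to a lower-order term, and, worse, $M\delta-\gamma$ can be arbitrarily small, so the bound is tightest at the top of the interval $(\delta/M,M\delta)$. I would dispatch this in the routine way: the reduction is free to brute-force all instances with fewer than any fixed number of vertices (emitting a canonical \yes or \no instance accordingly), so we may assume $\delta n$ is as large as we please, at which point the $o(1)$ term is negligible and the inequality $1-\Phi_G(S)>\eta/M$ is genuine. The first sub-case ($\gamma\le\delta$) and the \yes case are exact and require no such care, so this is the one place where a little attention is needed.
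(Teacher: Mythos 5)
Your approach is the same as the paper's: identity reduction, contrapositive for soundness, and a case split on whether the non-expanding set $T$ has measure $\gamma$ below or above $\delta$ (grow it in the first case, random-subsample it in the second). You are in fact more careful than the paper, which writes the probability that an internal edge of $T$ survives the subsample as $\delta^2/\vol(T)^2$ and silently drops the $\tfrac{|S'|(|S'|-1)}{|T|(|T|-1)}$ correction that you explicitly flag.

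The brute-force fix for small instances does not, however, close the remaining gap. Working with the exact count, with $|S'|=\delta n$ and $|T|$ an integer satisfying $\delta n < |T| < M\delta n$, one gets
\[
1-\E_{S'}\Phi_G(S') \;>\; \eta\cdot\frac{\delta n - 1}{|T|-1}\mcom
\]
and this exceeds $\eta/M$ precisely when $|T|\le M\delta n - (M-1)$. The adversary is free to take any integer $|T|<M\delta n$, and for the (up to $M-1$) integers strictly between $M\delta n-(M-1)$ and $M\delta n$ the inequality $\tfrac{\delta n - 1}{|T|-1}\ge\tfrac1M$ fails for \emph{every} $n$; raising the vertex count scales the margin $M\delta-\gamma$ and the $o(1)$ term down at the same rate, so the threshold trick does not rescue the strict bound. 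The clean repair is to slightly weaken the constant — e.g.\ reduce from $\smallsetexpansion(\tfrac{\eta}{2M},\delta)$, where the extra factor of $2$ absorbs the binomial correction once $\delta n\ge 2$ — which is harmless here since every downstream use of this proposition is insensitive to the constant in front of $\eta$. The paper's own proof shares the looseness and does not flag it, so your noticing it is itself a worthwhile observation even though the proposed fix is incomplete.
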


\paragraph{Invariance principle}
\label{sec:invariance-principle}

The following theorem on the noise stability of functions over a
product probability space is an easy corollary of
Theorem~$4.4$ in Mossel et al.~\cite{MosselOO05}.
Recall that $\Gamma_\rho(\mu)\seteq\Prob[(x,y)~\cG_\rho]{x\ge t, y\ge t}$, where
$\cG_\rho$ is the $2$-dimensional Gaussian distribution with
covariance matrix 
{\footnotesize
  \begin{math}
    \Paren{
      \begin{matrix}
        1 & \rho \\
        \rho & 1
      \end{matrix}}
  \end{math}
} and  $t\ge 0$ is
such that $\Prob[(x,y)\sim\cG_\rho]{x\ge t}=\mu$.

\begin{theorem}
\label{thm:noisestability} 

Let $\nu>0$, $\rho\in(0,1)$ and let $\Omega$ be a finite probability
space.
Then, there exists $\tau,\delta>0$ such that the following holds:
Every function $f\from \Omega^R\to [0,1]$ satisfies either
\begin{displaymath}
  \iprod{f,T_{\rho} f} 
  \le \Gamma_\rho(\E f) + \nu
  \mper
\end{displaymath}
or $\max_{i\in[R]} \Inf_i(T_{1-\delta} f) >\tau$.
(Here, $T_\rho$ and $T_{1-\delta}$ are the natural noise operators on
$L_2(\Omega^R)$ with correlation parameters $\rho$ and $1-\delta$ as
defined above.)
\end{theorem}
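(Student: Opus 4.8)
\emph{Plan.} This is the ``Majority Is Stablest'' theorem, stated for functions on a general finite product space $\Omega^R$ and with the low-influence hypothesis repackaged through the smoothing operator $T_{1-\delta}$. The plan is to deduce it from the invariance principle of Mossel, O'Donnell and Oleszkiewicz (Theorem~4.4 of~\cite{MosselOO05}), which already applies to general finite $\Omega$, together with Borell's Gaussian bound recorded in \pref{sec:preliminaries}. The first observation is that the two ways of measuring influence are interchangeable: with the Efron--Stein decomposition $f=\sum_{S\subseteq[R]}f^{=S}$ one has, for every $k$,
\[
\Inf_i(T_{1-\delta}f)=\sum_{S\ni i}(1-\delta)^{2\card{S}}\snorm{f^{=S}}
\ \ge\ (1-\delta)^{2k}\,\Inf_i^{\le k}(f),
\]
so $\max_i\Inf_i(T_{1-\delta}f)\le\tau$ implies $\max_i\Inf_i^{\le k}(f)\le(1-\delta)^{-2k}\tau$, which is the form of the hypothesis used in~\cite{MosselOO05}; all that remains is the standard Majority-Is-Stablest argument, which for concreteness I spell out below starting from the raw invariance principle.

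\emph{Steps.} I would fix parameters in the order $\delta$, then a truncation degree $k$, then $\tau$. \textbf{(i) Smoothing.} Set $h:=T_{1-\delta}f$; then $h$ is $[0,1]$-valued, $\E h=\E f$, and $\Inf_i(h)=\Inf_i(T_{1-\delta}f)\le\tau$ for all $i$ by hypothesis. Since $\rho\in(0,1)$,
\[
\iprod{f,T_\rho f}-\iprod{h,T_\rho h}
=\sum_{S}\rho^{\card{S}}\bigparen{1-(1-\delta)^{2\card{S}}}\snorm{f^{=S}}
\ \le\ 2\delta\cdot\max_{d\ge 0}\bigparen{d\rho^{d}}\cdot\snorm{f}
\ \le\ O_\rho(\delta),
\]
which is below $\nfrac{\nu}{3}$ once $\delta=\delta(\rho,\nu)$ is small. \textbf{(ii) Truncation.} Because $\snorm{h^{>k}}=\sum_{\card{S}>k}(1-\delta)^{2\card{S}}\snorm{f^{=S}}\le(1-\delta)^{2k}$, taking $k=k(\delta,\rho,\nu)$ large makes the degree-$\le k$ polynomial $Q:=h^{\le k}$ satisfy $\E Q=\E h$, $\snorm{Q-h}\le\nfrac{\nu}{10}$ and $\card{\iprod{h,T_\rho h}-\iprod{Q,T_\rho Q}}\le\nfrac{\nu}{10}$, while still $\Inf_i(Q)\le\Inf_i(h)\le\tau$ and $\deg Q\le k$. \textbf{(iii) Invariance.} Regard $Q$ as a multilinear polynomial of degree $\le k$ in an orthonormal basis of $L_2(\Omega)$ whose first element is the constant $1$. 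By Theorem~4.4 of~\cite{MosselOO05}, applied to $Q$ (degree $\le k$, all influences $\le\tau$), the expectation of any fixed smooth bounded test function over the $\rho$-correlated product ensemble on $\Omega^R$ differs by only $o_\tau(1)$ (with $k$ fixed) from its expectation over the matching $\rho$-correlated Gaussian ensemble. Taking as test function a smooth approximation of $(a,b)\mapsto\mathrm{cl}(a)\,\mathrm{cl}(b)$, where $\mathrm{cl}(t):=\min(1,\max(0,t))$: on the product side the expectation equals $\iprod{h,T_\rho h}+o(1)$ (since $h$ is $[0,1]$-valued and $\snorm{Q-h}$ is tiny), while on the Gaussian side it is at most $\E[\tilde Q(X)\tilde Q(Y)]+o(1)$, where $\tilde Q:=\mathrm{cl}\circ Q$ and $(X,Y)$ is a $\rho$-correlated pair of standard Gaussian vectors, the $o(1)$ absorbing the contribution of the (rare, by the invariance CDF estimate) event that $Q(X)$ or $Q(Y)$ leaves a small neighbourhood of $[0,1]$ --- bounded via hypercontractivity of degree-$k$ polynomials. \textbf{(iv) Borell.} Finally $\tilde Q$ is a $[0,1]$-valued function on Gaussian space with $\E\tilde Q=\E f+o(1)$, so the function form of Borell's theorem --- which follows from the set version stated in \pref{sec:preliminaries} because $\phi\mapsto\E[\phi(X)\phi(Y)]$ is a positive semidefinite quadratic form for $\rho\ge 0$ and is therefore maximized, over $[0,1]$-valued functions of a prescribed mean, at an (essentially) $\set{0,1}$-valued extreme point --- yields $\E[\tilde Q(X)\tilde Q(Y)]\le\Gamma_\rho(\E\tilde Q)\le\Gamma_\rho(\E f)+o(1)$ by continuity of $\Gamma_\rho$. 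Chaining (i)--(iv) gives $\iprod{f,T_\rho f}\le\Gamma_\rho(\E f)+\nu$ for $\delta$, then $k$, then $\tau$ chosen as above, and the resulting $\tau,\delta$ depend only on $\nu$, $\rho$ and $\Omega$.

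\emph{Main obstacle.} The only step with genuine content is (iii): Theorem~4.4 of~\cite{MosselOO05} is stated for $C^3$ test functions of a single bounded-degree polynomial, so one must introduce the second $\rho$-correlated copy and smooth the non-smooth ``product of clippings'' functional --- either by passing to a $\rho$-correlated doubling of the ensemble or by invoking the noise-stability form of the result directly --- and one must control the Gaussian-side tails of $Q$, which is exactly where the $(2,q)$-hypercontractive moment bound for degree-$k$ polynomials is needed. One also has to keep the quantifier order $\delta\to k\to\tau$, since the truncation degree in (ii) grows as $\delta\to 0$. All of this is routine ``Majority Is Stablest'' bookkeeping, which is why the statement is quoted as an easy corollary of~\cite{MosselOO05}.
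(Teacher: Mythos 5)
The paper gives no proof of this theorem: it is stated as an ``easy corollary'' of Theorem~4.4 of Mossel--O'Donnell--Oleszkiewicz~\cite{MosselOO05}, with the derivation left to the reader. Your proposal is precisely the standard Majority-Is-Stablest derivation that citation implicitly refers to --- smoothing by $T_{1-\delta}$, truncation to degree $k$, the invariance principle, and then Borell's inequality, with the quantifier order $\delta\to k\to\tau$ correctly respected --- so you are taking essentially the same route the paper intends, just writing out the bookkeeping the authors omit.

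One small remark on step~(iv), where you pass from Borell for sets to Borell for $[0,1]$-valued functions: the extreme-point argument you give is correct in spirit, but in the infinite-dimensional setting one should either say explicitly that on the non-atomic Gaussian space the extreme points of $\set{\phi:\phi\in[0,1],\ \E\phi=\mu}$ are indicators of measure-$\mu$ sets (the bang-bang principle), or replace it with the usual shorter observation that $\phi\mapsto\E[\phi(X)\phi(Y)]=\snorm{U_{\sqrt\rho}\phi}$ is convex so the maximum over the convex set of $[0,1]$-valued functions with fixed mean is attained at an extreme point; both are routine. Similarly, in step~(iii) you would in practice either cite a two-variable / noise-stability formulation of the invariance principle directly (several appear in~\cite{MosselOO05}) rather than re-derive it from the single-polynomial test-function version, but this is a presentational choice, not a gap.
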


Below we define generalizations of the operators $M_x$ and $M$ used in
\pref{sec:hardn-balanc-separ}. 
We show that these operators can be viewed as somewhat extended
versions of the noise operators which randomize each coordinate of a product space with some
probability. 
The operators we define can be viewed as noise operators with additional ``leakage'' property, in
the sense that part of the output encodes the information about which coordinates were randomized. 
The second eigenvalue of these operators
can be easily estimated by relating it to the eigenvalue of the corresponding noise operator.

\paragraph{Random walks with leaked randomness}
\label{sec:random-walks-with}

\renewcommand{\super}[2]{#1^{\scriptscriptstyle #2}}

Suppose we have a collection of graphs $\set{G_z}_{z\in \cZ}$ with the
same vertex set $V$ (and with the same stationary distribution).
We consider two (reversible) random walks defined by this collection
and compare their spectral properties.
The first random walk is defined on $V$.
If the current state is $\super x1$, we choose the next state $\super
x2$ by sampling a random index $z\sim \cZ$ and then taking two
random steps from $\super x 1$ in $G_z$, i.e., we sample $x\sim
G_z(\super x 1)$ and $\super x2\sim G_z(x)$.
The second random walk is defined on $V\times \cZ$. 
If the current state is $(\super x 1,\super z 1)$, we choose the next
state $(\super x 2,\super z 2)$ by sampling a random neighbor $x$ of
$\super x 1$ in $G_{\super z 1}$, then we choose a random index
$\super z 2\sim \cZ$ and a random neighbor $\super x 2\sim
G_{\super z 2}(x)$ according to $G_{\super z 2}$.
The following lemma shows that these two random walks have the same
non-zero eigenvalues.
(Recall that we identify graphs with their stochastic operators.)

\begin{lemma}
  \label{lem:leaky-random-walk}
  Let $(\cZ,\nu)$ be a finite probability
  space and let $\set{G_z}_{z\in\cZ}$ be a family of graphs with
  the same vertex set $V$ and stationary measure $\mu$.
  Then the following two graphs have the same non-zero eigenvalues:
  \begin{itemize}
  \item the graph  $\E_{z\sim \cZ}G_z^2$ on $V$,
  \item the graph $H$ on $V\times \cZ$
    defined by
    \begin{displaymath}
      H f(\super x1,\super z1) 
      = \E_{x\sim G_{\super z1}(\super x1)} \E_{\super z2\sim \cZ}
      \E_{\super x2\sim G_{\super z2}(x)} f(\super x2,\super z2)
      \mper
    \end{displaymath}
  \end{itemize}
\end{lemma}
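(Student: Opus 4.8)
The plan is to exhibit an explicit linear isometry between the relevant subspaces that intertwines the two operators, so that the non-zero spectra coincide. Write $K = \E_{z\sim\cZ} G_z^2$ for the operator on $L_2(V)$ and $H$ for the operator on $L_2(V\times\cZ)$ defined in the statement. Both are self-adjoint (each $G_z$ is self-adjoint with respect to $\mu$, hence so is $G_z^2$ and their average; for $H$ one checks reversibility directly using that the walk samples $x\sim G_{z_1}(x_1)$, then fresh $z_2\sim\nu$, then $x_2\sim G_{z_2}(x)$, which is symmetric in $(x_1,z_1)\leftrightarrow(x_2,z_2)$ once one writes out the transition density). The key observation is that $H$ factors. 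Define $A\colon L_2(V)\to L_2(V\times\cZ)$ and $B\colon L_2(V\times\cZ)\to L_2(V)$ by
\begin{align*}
  (Af)(x,z) &= (G_z f)(x) = \E_{x'\sim G_z(x)} f(x'), \\
  (Bg)(x) &= \E_{z\sim\cZ}\E_{x'\sim G_z(x)} g(x',z).
\end{align*}
Then a direct computation gives $BA = \E_{z\sim\cZ} G_z^2 = K$ on $L_2(V)$, and $AB = H$ on $L_2(V\times\cZ)$: indeed $(ABg)(x_1,z_1) = (G_{z_1}(Bg))(x_1) = \E_{x\sim G_{z_1}(x_1)}\E_{z_2\sim\cZ}\E_{x_2\sim G_{z_2}(x)} g(x_2,z_2) = (Hg)(x_1,z_1)$.

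Given the factorizations $K = BA$ and $H = AB$, the lemma follows from the standard fact that for any linear maps $A, B$ between (finite-dimensional) inner product spaces, $AB$ and $BA$ have the same non-zero eigenvalues with the same multiplicities: if $ABg = \lambda g$ with $\lambda\neq 0$ then $g = \lambda^{-1}ABg$ is in the image of $A$, and $BA(Bg) = B(ABg) = \lambda(Bg)$, with $Bg\neq 0$ since otherwise $g = \lambda^{-1}A(Bg) = 0$; this sets up a bijection between the corresponding eigenspaces, and running the argument in both directions (or invoking that $AB$ and $BA$ have the same nonzero eigenvalues from the characteristic-polynomial identity $\det(\lambda I - AB) = \lambda^{m-n}\det(\lambda I - BA)$ after padding dimensions) gives equality of multiplicities. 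One should be slightly careful that the spaces here carry the $\mu$- and $\mu\times\nu$-weighted inner products, but $A$ and $B$ are genuine adjoints of each other with respect to these — $\iprod{Af,g}_{V\times\cZ} = \E_{x\sim\mu}\E_{z\sim\nu}(G_zf)(x)g(x,z) = \E_{z\sim\nu}\iprod{f, G_z g(\cdot,z)}_V = \iprod{f, Bg}_V$ by self-adjointness of $G_z$ — so the eigenvalue bookkeeping is exactly the symmetric-matrix case and no subtlety about non-orthogonality arises.

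The one genuine point to get right is the dimension/multiplicity match-up for the non-zero eigenvalues (the zero eigenvalue can of course have wildly different multiplicities since $L_2(V\times\cZ)$ is larger). I expect this to be the only real obstacle, and it is handled cleanly by the $AB$ vs.\ $BA$ lemma above rather than by trying to diagonalize anything explicitly. Everything else — verifying $BA = K$, verifying $AB = H$, and checking that $A,B$ are mutually adjoint — is a routine unwinding of the definitions of the two random walks, using only that each $G_z$ is a reversible (self-adjoint) stochastic operator with stationary measure $\mu$.
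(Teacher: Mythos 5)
Your proof is correct, and it uses the same key lemma as the paper (the nonzero spectra of $AB$ and $BA$ coincide), but your decomposition is genuinely cleaner. The paper works entirely with square operators: it defines $M\from L_2(V\times\cZ)\to L_2(V\times\cZ)$ by $Mf(x,z)=\E_{z'\sim\cZ}\E_{x'\sim G_{z'}(x)}f(x',z')$, shows $H=M^*M$, and then has to argue separately about $MM^*$ — it identifies the kernel $\set{f:\forall x.\;\E_z f(x,z)=0}$ and observes that on the orthogonal complement (functions independent of $z$) the operator $MM^*$ acts as $\E_z G_z^2$. Your factorization $A\from L_2(V)\to L_2(V\times\cZ)$, $B\from L_2(V\times\cZ)\to L_2(V)$ produces $BA=\E_z G_z^2$ directly, with no kernel bookkeeping; the "extra" directions in $L_2(V\times\cZ)$ are absorbed automatically by the $AB$-vs-$BA$ fact rather than handled by hand. (Your $A,B$ are, up to inclusion/averaging, the same objects as the paper's $M^*,M$: one can check $M=\iota\circ B$ and $M^*=A\circ P$ where $\iota$ is the inclusion of $z$-independent functions and $P$ is $z$-averaging, and $P\iota=\mathrm{id}$, which is exactly why your rectangular factorization succeeds in one step where the paper needs two.) Your verification that $B=A^*$ is also correct, and while it is not strictly needed for the nonzero-spectrum claim, it is what makes $H$ and $K$ manifestly self-adjoint PSD, mirroring the paper's use of $M^*M$ versus $MM^*$.
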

\begin{proof}
  Let $M$ be the following linear operator on $L_2(V\times \cZ)$,
  \begin{displaymath}
    M f(x,z) = \E_{z'\sim \cZ} \E_{x'\sim G_{z'}(x)} f(x',z')
    \mper
  \end{displaymath}
  Notice that its adjoint operator $M^*$ (with respect to the inner
  product in $L_2(V\times \cZ)$) is given by
  \begin{displaymath}
    M^* f(x,z) = \E_{x'\sim G_z(x)} \E_{z'\sim\cZ} f(x',z')
    \mper
  \end{displaymath}
  (The operator above is the adjoint of $M$, because each of the
  random walks $G_z$ are reversible and have the same stationary
  measure.).
  The graph $H$ corresponds to the operator $M^* M$, which has the
  same non-zero eigenvalues as $M M^*$.
  The operator $MM^*$ is given by
  \begin{displaymath}
    MM^* f(\super x1,\super z1) %
    = \E_{z\sim\cZ}\E_{x\sim G_{z}(\super x1)} %
    \E_{\super x2\sim G_{z}(x)} \E_{\super z2\sim\cZ}  %
    f(\super x2,\super z2)
    \mper
  \end{displaymath}
  The subspace
  \begin{math}
    \set{ f\mid \forall x\in V.~\E_{z} f(x,z)=0 }
    \sse L_2(V\times\cZ)
  \end{math}
  is part of the kernel of $MM^*$.
  Hence, all eigenfunctions with non-zero eigenvalue are in the
  orthogonal complement of this space.
  The orthogonal complement consists of all functions $f$ such that
  $f(x,z)$ does not depend on $z$.
  Let $f$ be such a function and set $f(x)=f(x,z)$.
  Then,
  \begin{displaymath}
    M M^* f(\super x 1) = \E_{z\sim\cZ} \E_{x\sim G(\super
      x1)}\E_{\super x2\sim G(x)} f(\super x2)
    = \E_{z\sim\cZ} G_z^2 f(\super x1)
    \mper
  \end{displaymath}
  Thus, $M M^*$ acts on this subspace in the same way as $\E_zG_z^2$,
  which means that the two operators have the same eigenfunctions (and
  eigenvalues) in this space.
\end{proof}

\paragraph{Noise graph with leaked randomness}
\label{sec:noise-graph-with}

\newcommand{\betabot}{\set{\bot,\top}^R_\beta}

Let $\betabot$ be the $\beta$-biased $R$-dimensional
boolean hypercube. 
If we sample a random point $z$ from this space, then $z_i=\top$ with
probability $\beta$, independently for each coordinate $i\in [R]$.

Let $(\Omega,\nu)$ be a finite probability space.
For $z\in \betabot$ and $x\in \Omega^R$, let $M_z(x)$ be the
distribution over $\Omega^R$ obtained by ``rerandomizing'' every
coordinate of $x$ where $z$ has value $\bot$.
In order to sample $x'\sim M_z(x)$, we sample $x'_i\sim \Omega$,
independently for every coordinate $i\in[R]$ with $z_i=\bot$.
If $z_i=\top$, then we copy the value of $x$ in this coordinate so
that $x'_i=x_i$.
Observe that $\E_{z\sim \betabot} M_z=T_{\beta,\Omega}^{\tensor R}$ is
the usual noise graph on $\Omega^R$ with correlation parameter
$\beta$, as defined previously in this section.

Consider the following stochastic linear operator $M$ on
$L_2(\Omega^R,\betabot)$,
\begin{equation}
  \label{eq:funny-noisegraph}
  M f(x,z) = \E_{z'\sim \betabot} \E_{x'\sim M_z(x)} f(x',z')
  \mper
\end{equation}

The following lemma shows that the second largest singular value of
$M$ is the same as the second largest eigenvalue of the corresponding
noise graph.
\begin{lemma}
\label{lem:leaky-noise-graph}
  Let $f\in L_2(\Omega^R,\betabot)$ %such that $\E f=0$
  and let $M$ be as in \pref{eq:funny-noisegraph}. 
  Then, 
  \begin{displaymath}
    \snorm{ M f} 
    \le (\E f)^2 + \beta\cdot \Paren{\snorm{f}-(\E f)^2}
    \mper
  \end{displaymath}
\end{lemma}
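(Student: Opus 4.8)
The plan is to collapse the ``leaked randomness'' coordinate $z$ and reduce the claim to the ordinary second-eigenvalue estimate for the noise graph $T_{\beta,\Omega}^{\tensor R}$ on $\Omega^R$. The key structural observation is that, for each fixed $z\in\betabot$, the rerandomization map $M_z$ is a \emph{self-adjoint projection} on $L_2(\Omega^R)$, so squaring the norm of $Mf$ lets all the $z$-dependence cancel.

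First I would record two elementary properties of $M_z$. Viewed as a stochastic operator $(M_zg)(x)=\E_{x'\sim M_z(x)}g(x')$, the map $M_z$ acts coordinatewise: it is the identity on coordinates $i$ with $z_i=\top$ and the orthogonal projection onto constants on coordinates $i$ with $z_i=\bot$. Each per-coordinate operator is self-adjoint and idempotent, hence so is their tensor product; thus $M_z^*=M_z$ and $M_z^2=M_z$. Averaging over $z$ recovers the ordinary noise operator, $\E_{z\sim\betabot}M_z=T_{\beta,\Omega}^{\tensor R}$, as already observed in this section.

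Next I would compute $\snorm{Mf}$ directly. Writing $\bar f(x)\defeq\E_{z'\sim\betabot}f(x,z')$, the definition \pref{eq:funny-noisegraph} simplifies to $Mf(x,z)=(M_z\bar f)(x)$, a quantity that depends on $z$ only through which operator $M_z$ gets applied. Hence, using $M_z^*=M_z$ and $M_z^2=M_z$ between the two lines,
\begin{align*}
  \snorm{Mf}
  &= \E_{z\sim\betabot}\E_{x\sim\Omega^R}(M_z\bar f)(x)^2
   = \E_z\iprod{M_z\bar f,\ M_z\bar f}
   = \E_z\iprod{\bar f,\ M_z^* M_z\bar f}\\
  &= \E_z\iprod{\bar f,\ M_z\bar f}
   = \Bigiprod{\bar f,\ \Paren{\E_z M_z}\bar f}
   = \iprod{\bar f,\ T_{\beta,\Omega}^{\tensor R}\bar f}\mper
\end{align*}
Finally, $T_{\beta,\Omega}^{\tensor R}$ is self-adjoint on $L_2(\Omega^R)$ with largest eigenvalue $1$ (the constants, multiplicity one) and second-largest eigenvalue $\beta$ --- immediate by tensorization from the single-coordinate spectrum $\set{1,\beta}$ of $T_{\beta,\Omega}$. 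Applying \pref{fact:second-eigenvalue} to this graph gives $\iprod{\bar f,T_{\beta,\Omega}^{\tensor R}\bar f}\le(\E\bar f)^2+\beta(\snorm{\bar f}-(\E\bar f)^2)$. Since $\E\bar f=\E f$, and $\snorm{\bar f}=\E_x(\E_z f(x,z))^2\le\E_x\E_z f(x,z)^2=\snorm f$ by Jensen (coordinatewise in $z$), and $\beta\ge0$, the right-hand side is at most $(\E f)^2+\beta(\snorm f-(\E f)^2)$, which combined with the display above completes the proof.

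I expect the only genuinely delicate point to be the adjoint bookkeeping in the displayed computation: one must notice that squaring the norm forces $M_z^*M_z$ to appear, which then collapses back to $M_z$ precisely because $M_z$ is a self-adjoint projection. This is exactly the structural reason that exposing the randomization pattern $z$ does not degrade the spectral gap, and the argument runs parallel to (and could alternatively be extracted from) \pref{lem:leaky-random-walk} applied with $G_z\defeq M_z$.
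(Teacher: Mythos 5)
Your proof is correct, and it takes a genuinely different (and somewhat more self-contained) route than the paper's. The paper writes $\snorm{Mf}=\iprod{f,M^*Mf}$ on $L_2(\Omega^R\times\betabot)$, identifies $M^*M$ with the ``leaked-randomness'' random walk of \pref{lem:leaky-random-walk} for $G_z=M_z$, uses that lemma to conclude that $M^*M$ has the same non-zero spectrum as $\E_z M_z^2=\E_z M_z=T_{\beta,\Omega}^{\tensor R}$, and then applies \pref{fact:second-eigenvalue} directly to $M^*M$ and $f$; no Jensen step is needed. You instead project out the $z$-coordinate first, observe $Mf(x,z)=(M_z\bar f)(x)$, and exploit $M_z^*=M_z=M_z^2$ to collapse $\snorm{Mf}$ to $\iprod{\bar f,T_{\beta,\Omega}^{\tensor R}\bar f}$, at which point \pref{fact:second-eigenvalue} applies to $T$ and $\bar f$; you then need the extra Jensen observation $\snorm{\bar f}\le\snorm f$ (and monotonicity in that argument, which uses $\beta\ge0$) to replace $\bar f$ by $f$. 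Both proofs pivot on exactly the same structural facts --- idempotence and self-adjointness of $M_z$, and $\E_z M_z=T_{\beta,\Omega}^{\tensor R}$ --- but yours avoids invoking \pref{lem:leaky-random-walk}, making the argument locally shorter and elementary, while the paper's version reuses a general lemma that also serves the rest of the section and sidesteps the auxiliary Jensen bound. Either is a clean proof; yours is the tighter one if \pref{lem:leaky-random-walk} were not already needed for other purposes.
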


\begin{proof}
  We have $\snorm{M f} = \iprod{f,M^* M f}$ where $M^*$ is the adjoint
  of $M$.
  This operator $M^* M$ is the same as the (second) operator in
  \pref{lem:leaky-random-walk} for $G_z=M_z$. 
  Hence, $M^* M$ has the same non-zero eigenvalues as $\E_z M_z^2$.
  From the definition of $M_z$, it is clear that $M_z^2=M_z$.
  Further, $T=\E_z M_z$ is the noise operator on $\Omega^R$ with
  correlation parameter $\beta$.
  We conclude that $M^* M$ has second largest eigenvalue $\beta$.
  The lemma follows from \pref{fact:second-eigenvalue}.
\end{proof}

\section{Reduction between Expansion Problems}
\label{sec:reduction}

Let $G$ be a graph with vertex set $V$ and stationary measure $\mu$.
Our reduction maps $G$ to a graph $H$ with vertex set $V^R \times
\Omega^R$ for $\Omega=[q]\times \betabias$. 
Here, $R,q\in \N$ and $\beta>0$ are parameters of the reduction.
We impose the natural product measure on $\Omega$,
$$ \Pr\inparen{(\alpha,z)} = \begin{cases} \frac{\beta}{q} & \text{ if
} z = \top \\
\frac{(1-\beta)}{q} & \text{ if
} z = \bot \end{cases} \qquad \forall \alpha \in [q]$$

As before, we describe $H$ in terms of a probabilistic process defined by $G$, which generates the
edge distribution of $H$.
(See \pref{fig:ssereduction} for a more condensed description.)
The process uses the following three auxiliary graphs (already
introduced in \Sref{sec:preliminaries} and \Sref{sec:additional-prelims}):
\begin{itemize}\item 
  First, the noise graph $T_V \seteq T_{1-\e_V,V}^{\tensor R}$, which
  resamples independently every coordinate of a given $R$-tuple $A\in
  V^R$ with probability $\e_V$.
  (Here, $\e_V>0$ is again a parameter of the reduction.
  We think of $\e_V$ as rather small compared to other parameters.)
This noise effectively adds a copy of the complete graph with weight $\e_V$ to $G$, which we assumed
in \Sref{sec:struct-uniq-games}.
 \item 
  Next, the noise graph $T_\Omega\seteq T_{\rho,\Omega}^{\tensor R}$,
  which resamples independently every coordinate of a given $R$-tuple
  $(x,z)\in\Omega^R$ with probability $1-\rho$.
  (For $x\in[q]^R$ and $z\in\betabias^R$, we write $(x,z)\in \Omega^R$
  to denote the tuple obtained by merging corresponding coordinates of
  $x$ and $z$ to an element of $\Omega$.
  In other words, we identify $[q]^R\times \betabias^R$ and
  $\Omega^R$.)
  The correlation parameter $\rho$ of $T_\Omega$ is the most important
  parameter of the reduction, because the graph $T_\Omega$ plays the
  role of a dictatorship test gadget in our reduction. We think of $\rho$ as being
  close to 1.
\item Finally, we consider the graph $M_z$ on $\tOmega^R$ for
  $\tOmega=V\times [q]$ and $z\in \betabias^R$.
  For $(A,x)\in \tOmega^R$, the graph $M_z$ resamples every coordinate
  in which $z$ has value $\bot$.
\end{itemize}
Our reduction proceeds in three phases:

%1.)
In the first phase, we sample a random vertex $A\in V^R$ and take
two independent random steps from $A$ according to the graph
$T_VG^{\tensor R}$, i.e., we sample $\tB$ and $\tC$ from the
distribution $T_V G^{\tensor R}(A)$.
We end the first phase by sampling two random permutations $\pi_B$ and
$\pi_C$. The permutations are required to satisfy the property that if
we divide the domain $[R]$ into contiguous blocks of size $R/k$, then
each such block is permuted in place. We define the set $\Pi_k$ of such 
permutations as
\[ \Pi_k 
\seteq \inbrace{\pi \in S_R  ~\suchthat~ \forall j \in \{0, \ldots, k-1\}.~~
  \pi\inparen{\{\nfrac{jR}{k}+1,\ldots,\nfrac{(j+1)R}{k}\}} = 
\{\nfrac{jR}{k}+1,\ldots,\nfrac{(j+1)R}{k}\}} \mper\]
This phase exactly corresponds to the reduction from \SSE to
\uniquegames in \cite{RaghavendraS10}.
%

%2.)
In the second phase, we sample a random $R$-tuple $(x_A,z_A)$ in
$\Omega^R$ and take two independent random steps from $(x_A,z_A)$
according to the graph $T_\Omega$, i.e., we sample $(x_B,z_B)$ and
$(x_C,z_C)$ from $T_\Omega(x_A,z_A)$.
This phase corresponds to typical dictatorship test reduction (as in
\cite{KhotKMO07}).

%3.) 
In the third phase, we apply the graphs $M_{z_B}$ and $M_{z_C}$ to the 
$R$-tuples $(\tB,x_B)$ and $(\tC,x_C)$ respectively, to obtain $(B',x'_B)$ and $(C',x'_C)$.
The final step of this phase is to output an edge between
$\pi_B(B',x'_B,z_B)$ and $\pi_C(C',x'_C,z_C)$.
(For a permutation $\pi$ of $[R]$ and an $R$-tuple $X$, we denote by
$\pi(X)$ the permutation of $X$ according to $\pi$, so that
$(\pi(X))_{\pi(i)}=X_i$.)

We remark that the random permutations $\pi_B$ and $\pi_C$ in the
first phase and the resampling according to $M_z$ in the third phase
introduce symmetries in the graph $H$ that effectively identify
vertices.
In particular, any two vertices in $V^R\times \Omega^R$ of the form
$(A,x,z)$ and $\pi(A,x,z)$ have the same neighbors in $H$ (i.e., the
distributions $H(A,x,z)$ and $H(\pi(a,x,z))$ are identical).
This kind of symmetry has been used in integrality gap constructions
(see \cite{KhotV05}) and hardness reductions (see
\cite{RaghavendraS10}).

The kind of symmetry introduced by the $M_z$ graph in the third phase
seems to be new. 
In the third phase, we effectively identify vertices $(A,x,z)$ and
$(A',x',z)$ if they differ only in the coordinates in which $z$ has
value $\bot$.
Formally, the vertex $(A,x,z)$ has the same distribution of neighbors
as the vertex $(A',x',z)$ if $(A',x')$ is sampled from $M_z(A,x)$.

\begin{figure}[!h]
  \begin{mybox}
\textbf{    The Reduction }
    \medskip

    {\sf Input}: A weighted graph $G$ with vertex set $V$. \\
    \textsf{Parameters}: $R,q, k \in \N$, and $\e_V, \beta, \rho > 0$.\\
    {\sf Output}: A graph $H=(V_H,E_H)$
    with vertex set $V_H = V^R\times [q]^R\times % [k]^R\times
    \set{\top,\bot}_\beta^R$.

\medskip
Let $\Pi_k$ denote the set of permutations of $[R]$ which permute each
block of size $R/k$ in-place i.e.
\[ \Pi_k \seteq \inbrace{\pi \in S_R  ~\suchthat~ \forall j \in \{0, \ldots, k-1\}.~~
  \pi\inparen{\{\nfrac{jR}{k}+1,\ldots,\nfrac{(j+1)R}{k}\}} 
= \{\nfrac{jR}{k}+1,\ldots,\nfrac{(j+1)R}{k}\}} \mper\]

    \medskip The weight of an edge in $E_H$ is proportional to the
    probability that the following probabilistic process outputs this
    edge:

\begin{itemize}
\newcounter{reductionstep}
\item {\sf Reducing from \SSE to \uniquegames}.

  \begin{list}{\arabic{reductionstep}.}{\usecounter{reductionstep}}
  \item Sample an $R$-tuple of vertices $A \sim V^R$.
  \item Sample two random neighbors $B,C \sim G^{\tensor R}(A)$ of $A$.
  \item Sample $\tB \sim T_{V}(B)$ and $\tC \sim T_V(C)$.
  \item Sample two permutations $\pi_B,\pi_C \in \Pi_k$
 \end{list}

\item {\sf Combination with long code gadgets}.

\begin{list}{\arabic{reductionstep}.}{\usecounter{reductionstep}}
  \setcounter{reductionstep}{5}
\item Sample $(x_A,z_A) \in \Omega^R$, where $\Omega=[q]\times
  \set{\bot,\top}_\beta$.
\item Sample $(x_B, z_B),(x_C, z_C) \sim T_\Omega(x_A,z_A)$.
\end{list}

\item {\sf Redistributing the edge weights}

\begin{list}{\arabic{reductionstep}.}{\usecounter{reductionstep}}
  \setcounter{reductionstep}{6}
\item Sample $(B',x'_B)\sim M_{z_B}(\tB,x_B)$ and $(C',x'_C)\sim
  M_{z_C}(\tC,x_C)$
\item Output an edge between $\inparen{\pi_B(B',x'_B,z_B)}$ 
and $\inparen{\pi_C(C',x'_C,z_C)}$.
\end{list}
\end{itemize}
\end{mybox}
  \caption{Reduction between expansion problems}
  \label{fig:ssereduction}
\end{figure}

% \subsubsection{Reduction to \uniquegames with expansion}
% \label{sec:expanding-ug-reduction}
\begin{remark}[Reduction to \uniquegames with expansion]
\label{rem:expanding-ug-reduction}
  We note that the above reduction can also be viewed as creating a
  \uniquegames instance with alphabet size $q$.  For a vertex $(A,x,z) \in
  V_H$ and $l \in [q]$, let $(A,x,z) + l$ denote the vertex $(A,x',z)$,
  where $x'_i \equiv x_i + l \mod q$ for all $i \in [R]$. We define an
  equivalence relation on $V_H$ by taking $(A,x,z) \equiv (A,x,z) + l$ for
  all $A,x,z$ and $l \in [q]$. Let $H/[q]$ be a graph with one vertex for
  each equivalence class of the above relation.  Also, for each edge in
  $E_H$, we add an edge in $H/[q]$ between the equivalence classes
  containing the corresponding vertices of $V_H$. We claim that $H/[q]$ can
  then be viewed as a \uniquegames instance
  \footnote{In the terminology used in the literature, one can say that the
    graph $H$ is a \emph{label-extended} graph of a \uniquegames instance
    with alphabet size $[q]$.}
  as described in \pref{thm:expanding-unique-games}.

  We now describe the constraints for the edges in $H/[q]$. We identify
  each equivalence class with an arbitrarily chosen representative element
  in it. For $(A,x,z) \in V_H$, let $\rep{(A,x,z)}$ denote the
  representative of the equivalence class containing it. Consider an edge
  in $E_H$ between $\inparen{\pi_B\inparen{B',x_B',z_B}}$ and
  $\inparen{\pi_C(C',x_C',z_C)}$.  Let $\pi_B(B',x_B',z_B) =
  \rep{\pi_B(B',x_B',z_B)} + l_B$ and $\pi_C(C',x_C',z_C) =
  \rep{\pi_C(C',x_C',z_C)} + l_C$. Then the constraint corresponding to
  this edge requires that an assignment mapping vertices in $H/[q]$ to
  $[q]$ must satisfy
  \begin{equation*}
    F\inparen{\rep{\pi_B(B',x_B',z_B)}} + l_B 
    \equiv F\inparen{\rep{\pi_C(C',x_C',z_C)}} + l_C \mod q
    \mper
  \end{equation*}

  We note that the expansion properties of $H$ are inherited by $H/[q]$,
  since any set of measure $\mu$ in $H/[q]$ is also a set of measure $\mu$
  in $H$. In the \yes case, each of the sets $S_1,\ldots,S_q$ mentioned in
  \pref{thm:general-sse} will provide an assignment for the above
  \uniquegames instance, satisfying $1-\e-o(\e)$ fraction of the
  constraints. In the \no case, we will argue that each assignment
  corresponds to a set of measure $1/q$ in $H$, and the unsatisfiability of
  the instance will follow from the expansion of the corresponding sets in
  $H$.
\end{remark}
\noindent

%%%%%%%%%%%%%%%%%%%%%%%%%%%%%%%%%%%%%%%%%%%%%%%%%%%%%%%%%%%%%%%%%%%%%%%%%%%%%%%%%%%%%%%%%%%%%
\subsection{Completeness} \label{sec:completeness}

\begin{lemma}\label{lem:completeness}
Let $H = (V_H,E_H)$ be constructed from $G=(V,E)$ as in the reduction in 
\pref{fig:ssereduction}. If there is a set $S \subseteq V$ satisfying 
$\vol(S) \in \insquare{\frac{k}{10\beta R}, \frac{k}{\beta R}}$
and $\Phi_G(S) \leq \eta$, then there
exists a partition $S_1,\ldots,S_q$ of $V_H$ satisfying:
\begin{enumerate}
	\item \label{item:prop1} For all $(A,x,z) \in V_H$ and $l,l' \in [q]$, 
~$(A,x,z) \in S_l \implies (A,x,z) + l' \in S_{l+l'}$. 
\item \label{item:prop3} For each all $l \in [q]$, ~$\Phi_{H}(S_l) \leq  
2(1-\rho^2 + \eta+2\e_V) + (1-\rho^2 + \eta + 2\e_V)^2 + \frac{(1-\rho^2)\beta}{\rho^2} + 2^{-\Omega(k)}$\mper
\end{enumerate}
\end{lemma}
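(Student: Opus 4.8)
The plan is to carry over the warm‑up reduction of \pref{sec:reduction-from-small}: use the non‑expanding set $S\sse V$ to build a ``dictatorship''-style partition of $V_H$, and then trace the distinguished coordinate of that partition through the three phases of \pref{fig:ssereduction} in order to bound the expansion of each part. Concretely, I would define a labelling $\phi\from V_H\to[q]$ and set $S_l=\phi^{-1}(l)$. Call a coordinate $i\in[R]$ \emph{marked} at $(A,z)$ if $z_i=\top$ and $A_i\in S$, and a block $j\in\{0,\dots,k-1\}$ \emph{good} at $(A,z)$ if it contains exactly one marked coordinate; let $i(A,z)$ be the unique marked coordinate of the first good block when one exists, and put $\phi(A,x,z)=x_{i(A,z)}$ (with a fixed equivariant fall‑back depending only on $z$, e.g.\ $x$ at the first $\top$-coordinate, when no block is good). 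Then \pref{item:prop1} is immediate, since $x\mapsto x+l'$ fixes $A,z$, hence fixes $i(A,z)$, so $\phi((A,x,z)+l')=\phi(A,x,z)+l'$; moreover $\phi$ is invariant under the block‑preserving permutations $\Pi_k$ and under the $M_z$-identifications of the reduction (it reads only a $\top$-coordinate of $x$). Finally, conditioned on $(A,z)$ the value $x_{i(A,z)}$ is a uniform symbol of $[q]$, so $\vol(S_l)=\Prob{\phi=l}=\tfrac1q$ for all $l$.

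Because the $S_l$ are translates of one another under the automorphism $x\mapsto x+1$ of $H$ and all have volume $\tfrac1q$, a short symmetry computation gives $\cond_H(S_l)=\Prob[(e_1,e_2)\sim H]{\phi(e_1)\neq\phi(e_2)}$ for every $l$; equivalently, $F=\phi$ is an assignment for the \uniquegames instance $H/[q]$ of \pref{rem:expanding-ug-reduction} leaving a $\cond_H(S_l)$ fraction of constraints unsatisfied. So I must lower‑bound $\Prob{\phi(e_1)=\phi(e_2)}$ for the edge $(e_1,e_2)$ produced by \pref{fig:ssereduction}. Run the process: since $\vol(S)\in[\tfrac{k}{10\beta R},\tfrac{k}{\beta R}]$, the expected number of marked coordinates of the root $(A,z_A)$ in each block lies in $[\tfrac1{10},1]$, so each block is good with probability bounded away from $0$ and $1$, independently, and the root has a good first block $j_0$ (with marked coordinate $i_0$) except with probability $2^{-\Omega(k)}$. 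Condition on this and set $v^\star:=(x_A)_{i_0}$. On the $B$-side, $\phi(e_1)=v^\star$ holds once $i_0$ is the distinguished coordinate of $(B',x'_B,z_B)$ and $(x'_B)_{i_0}=v^\star$; the former requires coordinate $i_0$ to stay $\top$ for $z_B$ and stay in $S$ along $T_V G^{\tensor R}$ (failing with probability at most $1-\rho$ and at most $\eta+2\e_V$ respectively, by regularity of $G$ and the effective completeness‑side noise $\e_V$), and — as in the warm‑up — block $j_0$ to acquire no spurious marked coordinate and no earlier block to become good for $(B',z_B)$, which by a first‑moment count over a block fails with a further probability of the same order. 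Thus the $B$-side is ``good'' with probability at least $1-\xi$, where $\xi:=1-\rho^2+\eta+2\e_V$ (using $1-\rho\le1-\rho^2$), and likewise the $C$-side; crucially, given $(A,x_A,z_A,i_0)$ the two sides are conditionally independent, as $T_VG^{\tensor R}$, $T_\Omega$, and $M_{z_B},M_{z_C}$ act on disjoint fresh randomness.

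It remains to bound the event $(x'_B)_{i_0}\neq(x'_C)_{i_0}$ once $i_0$ is the distinguished coordinate of both endpoints — this is genuinely a \emph{joint} event. It forces $(z_B)_{i_0}=(z_C)_{i_0}=\top$, and conditioned on that, the $\Omega$-coordinate $i_0$ was resampled by $T_\Omega$ on at least one of the two paths only with probability $\tfrac{(1-\rho^2)\beta}{\rho^2}$ (reaching a $\top$-value through a resampling costs an extra factor $\beta$, so $\Prob{\text{both }z_{i_0}=\top\text{ and some resampling}}\approx(1-\rho^2)\beta^2$ whereas $\Prob{\text{both }z_{i_0}=\top}\approx\rho^2\beta$), and when neither path resampled, $(x'_B)_{i_0}=(x'_C)_{i_0}=(x_A)_{i_0}=v^\star$. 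Assembling the pieces — the two conditionally independent one‑sided ``good'' events (probability $\ge1-\xi$ each), the joint value‑match (probability $\ge1-\tfrac{(1-\rho^2)\beta}{\rho^2}$), and the good‑root event (probability $\ge1-2^{-\Omega(k)}$) — gives
\[
\cond_H(S_l)=\Prob{\phi(e_1)\neq\phi(e_2)}\ \le\ 2\xi+\xi^2+\tfrac{(1-\rho^2)\beta}{\rho^2}+2^{-\Omega(k)},
\]
which is \pref{item:prop3}.

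The routine parts of this plan are the individual coordinate‑noise estimates and the block‑counting yielding the $2^{-\Omega(k)}$ term. The delicate part — the real obstacle — is controlling the \emph{marked‑coordinate pattern} of the two endpoints simultaneously: $M_z$ both destroys marked coordinates (it re‑samples the $V$-part out of $S$ in $\bot$-positions) and can create spurious ones, and $\pi_B,\pi_C$ are independent, so one must still argue that the two endpoints select the \emph{same} coordinate of the long code and read the \emph{same} $[q]$-symbol there. This is where the volume window $[\tfrac{k}{10\beta R},\tfrac{k}{\beta R}]$ gets used twice (to make ``some block good'' overwhelmingly likely, and to keep the per‑block count of spurious or discrepant marked coordinates $O(1-\rho^2+\eta+\e_V)$), and where obtaining the clean bound $\tfrac{(1-\rho^2)\beta}{\rho^2}$ — rather than the naive $O(1-\rho^2)$ — for the value‑mismatch rests on comparing the two correlated noisy walks against their common source, in the spirit of \pref{lem:leaky-random-walk} and \pref{lem:leaky-noise-graph}.
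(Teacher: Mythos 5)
Your construction of the partition matches the paper's exactly: mark a coordinate $i$ when $z_i=\top$ and $A_i\in S$, call a block of size $R/k$ good when it has a unique marked coordinate, read the $[q]$-symbol of $x$ at the marked coordinate of the first good block, and use the volume window $[\tfrac{k}{10\beta R},\tfrac{k}{\beta R}]$ to make the per-block count of marked coordinates $\Theta(1)$ and hence ``some block is good'' fail only with probability $2^{-\Omega(k)}$. The equivariance giving \pref{item:prop1}, the observation that $\phi$ reads only $\top$-coordinates and is therefore invariant under both $\Pi_k$ and the $M_z$-identifications, and the reduction of $\cond_H(S_l)$ to $\Prob{\phi(e_1)\neq\phi(e_2)}$, are all precisely the moves in the paper's proof. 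The place where you deviate is the accounting: you condition on the root $(A,x_A,z_A)$ having a good first block $j_0$ with distinguished coordinate $i_0$ and then invoke conditional independence of the two sides, whereas the paper conditions on the $B$-side landing in $S_l$ and bounds, in sequence, $\Prob{j_C^*\ne j_B^*}$, $\Prob{i_C^*\ne i_B^*\mid j_B^*=j_C^*}$, and the value mismatch. Both organizations yield the same quantity; yours is conceptually tidier because it makes the two sides manifestly independent given the root, while the paper's is more mechanical and makes each additive term in the final bound visible as one link of the chain. Your claimed inequality $2\xi+\xi^2+\cdots$ is not actually what your own decomposition produces (two independent failures of probability $\le\xi$ give $\le 2\xi-\xi^2$), but since that is only tighter, this is a cosmetic mismatch rather than an error.

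One caution, which you yourself flag: the per-side failure bound of $\xi=1-\rho^2+\eta+2\e_V$ is asserted but not established. Your single-side event must include not only ``coordinate $i_0$ stays $\top$ for $z_B$ and $B'_{i_0}$ stays in $S$'' but also ``block $j_0$ acquires no spurious marked coordinate for $B$'' and ``no block $j<j_0$ becomes good for $B$.'' The first two pieces cost at most $(1-\rho)(1-\beta)+\eta+\e_V$, which leaves slack against $\xi$; but the last two require a first-moment argument that really does use both ends of the volume window (to keep the expected number of spuriously created or destroyed marks per block at $O(\eta+\e_V)$ and $O(1-\rho)$, and to keep $\E j_0$ at $O(1)$ so that the union over earlier blocks does not blow up). The paper carries this out through the explicit comparisons between $W_j^{(B)}$ and $W_j^{(C)}$; your proposal names the obstacle correctly but stops at the statement that it is ``of the same order.'' Without spelling out that count, the constants in \pref{item:prop3} are not actually obtained, though the structure of the argument is right.
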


Note that the first property, together with the fact that $S_1,\ldots,S_q$ form a partition 
also implies that for all $l \in [q]$, $\vol(S_l) = \tfrac1q$.
\begin{proof}
We first describe a procedure for assigning vertices in $V_H$ to 
$S_1,\ldots, S_q$. This procedure assigns all but $2^{-\Omega(k)}$
fraction of the vertices, which we shall distribute arbitrarily later.
Let $(A,x,z)$ be a vertex in $V_H$, where
$A \in V^R$, $x \in [q]^R$ and 
$z \in \{\top,\bot\}^R$.

For all $j \in [k]$, we define the sets 
$W_j \seteq \inbrace{i \in \{\nfrac{(j-1)R}{k}+1, \ldots, \nfrac{jR}{k}\} \suchthat  z_i \neq \bot }$. 
Let $A(W_j)$ denote the multiset
$A(W_j) = \inbrace{A_i ~|~ i \in W_j}$. We take,
\[ j^* = \inf\inbrace{j \suchthat \card{A(W_j) \cap S} = 1}\]
If $\card{A(W_j) \cap S} \neq 1$ for any $j \in [k]$, then we do not
assign the vertex $(A,x,z)$ to any of the set $S_1, \ldots, S_q$.
Else, let $A_{i^*}$ be the unique element in 
$A(W_{j^*}) \cap S$. We assign  \[(A,x,z) \in S_{x_{i^*}}.\]
Note that the assignment to sets is determined only by the coordinates
$i \in [R]$ for which $z_i \neq \bot$. The first property is easily seen to be
satisfied for all the vertices that are assigned, as the sets 
$\inbrace{W_j}_{j \in [k]}$ are identical for 
$(A,x,z)$ and $(A,x,z)+l$, for any $l \in [q]$. 
The following claim proves that most vertices are indeed assigned to one of
the sets $S_1,\ldots, S_q$.

\begin{claim}\label{claim:comp-volume}
$\Prob[(A,x,z) \sim V_H]{\card{A(W_j) \cap S} \neq 1 ~\forall j \in [k]} 
~\leq~ 2^{-\Omega(k)}$.
\end{claim}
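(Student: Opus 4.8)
The plan is to reduce the claim to an elementary estimate involving $k$ independent binomial random variables, one per block. Recall that $V_H=V^R\times[q]^R\times\set{\top,\bot}_\beta^R$ carries the product measure in which each $V$-coordinate is distributed according to $\mu$, each $[q]$-coordinate uniformly, and each $\set{\top,\bot}$-coordinate with bias $\beta$; hence under $(A,x,z)\sim V_H$ the pairs $(A_i,z_i)$ for $i\in[R]$ are mutually independent, with $\Pr[z_i=\top]=\beta$ and $\Pr[A_i\in S]=\vol(S)=\mu(S)$ independently. For $j\in[k]$ set
\[ N_j\ \defeq\ \card{\inbrace{i\in\set{\nfrac{(j-1)R}{k}+1,\dots,\nfrac{jR}{k}}\ \suchthat\ z_i=\top\text{ and }A_i\in S}}, \]
so that $\card{A(W_j)\cap S}=N_j$. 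By the above, $N_j\sim\mathrm{Binomial}(R/k,\,p)$ with $p\defeq\beta\cdot\vol(S)$, and $N_1,\dots,N_k$ are \emph{independent}, because distinct blocks use disjoint coordinates.

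Next I would use the hypothesis on $\vol(S)$ to pin down $\E N_j$. The assumption $\vol(S)\in\insquare{\nfrac{k}{10\beta R},\,\nfrac{k}{\beta R}}$ gives exactly $\E N_j=\tfrac Rk\,p=\tfrac Rk\,\beta\,\vol(S)\in\insquare{\nfrac1{10},1}$, so each $N_j$ has a mean that is an absolute constant bounded away from $0$ and $\infty$. Moreover $p=\tfrac kR\,\E N_j\le\nfrac12$ once $R\ge2k$ (and if $R=k$ then $N_j$ is a single Bernoulli($p$) with $p=\E N_j\ge\nfrac1{10}$, which only makes the bound below easier). Using $\ln(1-p)\ge-\nfrac{p}{1-p}\ge-2p$ for $p\le\nfrac12$ together with $\E N_j\le1$, we obtain
\[ \Pr[N_j=1]\ =\ \tfrac Rk\,p\,(1-p)^{R/k-1}\ =\ \E N_j\cdot(1-p)^{R/k-1}\ \ge\ \E N_j\cdot e^{-2\E N_j}\ \ge\ \tfrac{1}{10}e^{-2}. \]
Hence there is an absolute constant $c>0$ with $\Pr[N_j=1]\ge c$ for every $j\in[k]$.

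Finally, by independence of $N_1,\dots,N_k$,
\[ \Prob[(A,x,z)\sim V_H]{\card{A(W_j)\cap S}\neq1\ \text{ for all }j\in[k]}\ =\ \prod_{j=1}^k\Pr[N_j\neq1]\ \le\ (1-c)^k\ =\ 2^{-\Omega(k)}, \]
which is the claim. I do not expect any genuine obstacle here; the only points that need care are (i) that the per-block counts are really i.i.d.\ binomials, which rests on the product structure of the measure on $V_H$ and on the event depending only on $A$ and $z$ (not on $x$), and (ii) that the two-sided bound on $\vol(S)$ in the lemma's hypothesis is precisely what confines $\E N_j$ to the window $[\nfrac1{10},1]$, so that $\Pr[N_j=1]$ remains bounded below by a constant rather than tending to $0$.
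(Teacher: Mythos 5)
Your proof is correct and follows essentially the same route as the paper: both decompose the event into $k$ independent per-block binomial counts $N_j\sim\mathrm{Binomial}(R/k,\beta\,\vol(S))$, use the hypothesis $\vol(S)\in[\nfrac{k}{10\beta R},\nfrac{k}{\beta R}]$ to pin $\E N_j\in[\nfrac1{10},1]$, lower-bound $\Pr[N_j=1]$ by an absolute constant, and multiply. The only cosmetic difference is the elementary inequality used for the lower bound (you use $\ln(1-p)\ge -2p$; the paper plugs in $(1-\nfrac kR)^{R/k}>\nfrac13$ under the extra assumption $R/k\ge 4$), and you explicitly note that the event depends only on $(A,z)$, a point the paper leaves implicit.
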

\begin{proof}
Note that over the choice of a random $(A,x,z) \in V_H$, the intersection sizes 
$\card{W_1 \cap S}, \ldots, \card{W_k \cap S}$ are independent random 
variables distributed as $\text{Binomial}(\mu \beta, \nfrac{R}{k})$.
The probability that all of them are not equal to 1, can then be bounded as
\begin{eqnarray*}
\Prob[(A,x,z) \sim V_H]{\card{A(W_j) \cap S} \neq 1 ~\forall j \in [k]}
&=& \inparen{1-\frac{R}{k} \cdot \mu \beta \cdot (1-\mu \beta)^{\nfrac{R}{k}-1}}^{k}\\
&\leq& \inparen{1-\frac{R}{k} \cdot \frac{k}{10R} \cdot (1-\nfrac{k}{R})^{\nfrac{R}{k}}}^{k}
~\leq~ \inparen{1-\frac{1}{30}}^k
\mper
\end{eqnarray*}
The last inequality assumes that $R/k \geq 4$ so that $(1-\nfrac{k}{R})^{\nfrac{R}{k}} > 1/3$. 
\end{proof}

We now bound the expansion of these sets. A random edge is between two tuples of the form
$\inparen{\pi_B(B', x_B',z_B)}$ and $\inparen{\pi_C(C',x_C',z_C)}$, where $\pi_B$ and $\pi_C$ 
are two random permutations in $\Pi_k$ and $B',C'$ 
are generated from $G^{\tensor R}$ as in \pref{fig:ssereduction}. For a fixed $l \in [q]$, 
the expansion of $S_l$ is equal to the following
probability taken over the choice of a random edge
\begin{equation*}
\Prob{\inparen{\pi_C(C',x_C',z_C)} \notin S_l \given 
\inparen{\pi_B(B', x_B',z_B)} \in S_l} \\
= \Prob{\inparen{C',x_C',z_C} \notin S_l \given \inparen{B', x_B',z_B} \in S_l}.
\end{equation*}
Here we used the fact that the membership in a set $S_l$ is invariant under permutations
in $\Pi_k$. The following claim analyzes the above probability.

\begin{claim}
$\Prob{\inparen{C',x_C',z_C} \notin S_l \given \inparen{B', x_B',z_B} \in S_l}
~\leq~ 2(1-\rho^2 + 2\epsilon_V + \eta) + (1-\rho^2 + 2\epsilon_V + \eta) ^2 + \frac{(1-\rho^2) \beta}{\rho^2}$.
\end{claim}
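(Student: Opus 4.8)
The plan is to bound the conditional crossing probability by tracking, step by step, the events that can cause the two endpoints $(B',x_B',z_B)$ and $(C',x_C',z_C)$ to be assigned to different sets $S_l$. Recall that membership of $(A,x,z)$ in some $S_l$ is determined entirely by: (i) which block $W_{j^*}$ is the first to contain exactly one $S$-vertex among $A(W_{j^*})$, and (ii) the value $x_{i^*}$ at that unique coordinate $i^*$. So conditioned on $(B',x_B',z_B)\in S_l$, there is a well-defined first good block $j^*$, a unique coordinate $i^*\in W_{j^*}$ with $B'_{i^*}\in S$, and $x_{B,i^*}'=l$. The edge crosses $S_l$ precisely when, relative to $C$, either the first good block differs, or it is the same block but the unique $S$-coordinate moves, or it is the same block and coordinate but the label $x'_{C,i^*}\neq l$. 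I would isolate these as three failure events and union-bound.

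First I would reduce everything to the relevant coordinates. Since only coordinates with $z_i\neq\bot$ matter for the assignment, and since on those coordinates $M_{z_B}$ (resp.\ $M_{z_C}$) copies $\tB$ (resp.\ $\tC$) unchanged, we have $B'_i=\tB_i$ and $x'_{B,i}=x_{B,i}$ whenever $z_{B,i}=\top$; likewise for $C$. Next, I would couple $B$ and $C$: both are obtained from a common $A$ by one step of $T_V G^{\tensor R}$, so on each coordinate, with probability at least $1-2\e_V$ neither coordinate was resampled by $T_V$, in which case $B_i$ and $C_i$ are the two endpoints of an edge of $G^{\tensor R}$, i.e.\ a random edge of $G$ in coordinate $i$. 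The key single-coordinate estimate is then: conditioned on $\tB_i\in S$ (and $z_{B,i}=\top$), the probability that $\tC_i\notin S$ is at most $\Phi_G(S)+2\e_V\le\eta+2\e_V$ — the $\Phi_G(S)$ from the expansion of $S$ in $G$ (using regularity so that a random edge out of a random $S$-vertex leaves $S$ with probability $\Phi_G(S)$), plus $2\e_V$ for the resampling slack. Similarly, on coordinates where $z_{B,i}=\top$ and $z_{C,i}=\top$, the labels agree ($x'_{B,i}=x'_{C,i}$) unless $T_\Omega$ resampled that coordinate on the $B$-side or $C$-side; the probability $z_{C,i}=\bot$ given $z_{B,i}=\top$, or that the $[q]$-part got resampled, is controlled by $1-\rho$ (and the conditioning on $z_{B,i}=\top$ shifts the bias, contributing the $\tfrac{(1-\rho^2)\beta}{\rho^2}$ term).

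Now I would assemble the three failure events. Let $p \defeq 1-\rho^2 + 2\e_V + \eta$. For the coordinate $i^*$ (the unique $S$-coordinate of the first good block on the $B$-side): the probability that $C'_{i^*}\notin S$ or $z_{C,i^*}=\bot$ is at most $p$; and the probability that, even when $C'_{i^*}\in S$ and $z_{C,i^*}=\top$, some \emph{other} coordinate of that same block $W_{j^*}^{(C)}$ also lands in $S$ (so that the block is no longer "good" on the $C$-side, or the unique coordinate shifts) is again at most $p$ by the same single-coordinate bound applied to the remaining coordinates — giving the $2p$ from a union bound over "$i^*$ itself fails" and "some other coordinate of the block newly enters $S$", plus the $p^2$ correction for the joint event (this is where the $+p^2$ in the claim comes from: it is the second-order term in the inclusion–exclusion / product bound $1-(1-p)(1-p) = 2p - p^2$, or more precisely appears as a genuine product of two independent-ish failures across the two sides). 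Finally, conditioned on the block and coordinate surviving, the label equality $x'_{C,i^*}=l$ fails only if $T_\Omega$ disturbed the $[q]$-coordinate, contributing $\tfrac{(1-\rho^2)\beta}{\rho^2}$ after accounting for the $z_{B,i^*}=\top$ conditioning. Summing, $\Pr[\text{cross}]\le 2p + p^2 + \tfrac{(1-\rho^2)\beta}{\rho^2}$, which is exactly the claimed bound.

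The main obstacle I anticipate is the bookkeeping around the conditioning "$(B',x_B',z_B)\in S_l$": this is not a product event — it selects $j^*$ as the \emph{first} good block, so it correlates the earlier blocks (forced to have $\neq 1$ intersection with $S$) and skews the distribution of $z$ on the selected coordinate toward $\top$. I would handle this by conditioning on $j^*$ and on the identity of $i^*$, arguing that conditioned on these the remaining randomness in $C$ (the $G$-edges out of the relevant coordinates, the $T_\Omega$ noise, the $M_{z_C}$ rerandomization) is still "fresh enough" that the single-coordinate estimates apply, and then taking expectation over $j^*, i^*$. The subtle point — keeping the $2\e_V$, $1-\rho^2$, and $\beta/\rho^2$ slacks from double-counting across blocks — is where care is needed, but each is a union bound over at most $R/k$ coordinates of a single block times a per-coordinate probability, and the per-coordinate probabilities were defined precisely so these telescope into the stated constants.
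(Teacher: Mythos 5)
Your proposal follows essentially the same route as the paper: condition on the first good block $j_B^*$ and the unique $S$-coordinate $i_B^*$ determined by $(B',x_B',z_B)$, decompose the crossing event into (first block changes) / (block same, unique coordinate changes) / (block and coordinate same, label mismatches), and bound each using the per-coordinate slack $1-\rho^2$ from the $z$-component and $\eta+2\e_V$ from the $G$-edge plus $T_V$-resampling.

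One place where your bookkeeping drifts from the paper's (and where, as you yourself flag, the care is needed): the paper's $2p$ term, with $p = 1-\rho^2+\eta+2\e_V$, arises as $\Pr[j_C^* > j_B^*] + \Pr[j_C^* < j_B^*]$, each bounded by $p$. Your version of the $2p$ is ``$i^*$ itself fails on the $C$-side'' plus ``some other coordinate of that same block $W_{j^*}^{(C)}$ newly lands in $S$.'' These two events jointly cover only the cases $j_C^* > j_B^*$ and ($j_C^* = j_B^*$ with $i_C^* \neq i_B^*$ within the block); they do not obviously cover $j_C^* < j_B^*$, i.e.\ some earlier block that was bad on the $B$-side becoming good on the $C$-side. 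In the paper that case is handled separately and bounded by $p$ by the symmetric-coupling argument you allude to. Relatedly, your explanation of the $+p^2$ wavers: the inclusion--exclusion identity $1-(1-p)^2 = 2p-p^2$ has the wrong sign, so that reading cannot be right; the correct reading (which you arrive at second) is that $\Pr[j_C^*=j_B^* \wedge i_C^*\neq i_B^*]$ forces \emph{two} independent per-coordinate failures (one for $(i_B^*, W^{(C)}_{j^*})$ and one for $(i_C^*, W^{(B)}_{j^*})$), giving $p\cdot p$ as a genuine product. Tightening these two points would bring the proposal in line with the paper's proof; the final bound and the per-coordinate estimates you use are the same as the paper's.
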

\begin{proof}
Let  $\inbrace{W_{j}^{(B)}}_{j \in [k]}$ denote the multisets 
$W_j^{(B)} = \inbrace{i \in \inbrace{\nfrac{(j-1)k}{R+1},\ldots,\nfrac{jk}{R}} \suchthat (z_B)_i
\neq \bot }$ and let $\inbrace{W_j^{(C)}}_{j \in [k]}$ be defined similarly. Define  
$j_B^* = \inf\inbrace{j \suchthat \card{B'(W_j^{(B)}) \cap S} = 1}$ 
when the set on the right is non-empty and $k+1$
otherwise. Let $j_C^*$ be the analogous quantity for $C'$. 
In the cases when $j^*_B, j_C^* \leq k$, let 
$W_{j_B^*}^{(B)} \cap S = \inbrace{B'_{i_B^*}}$ and  
$W_{j_C^*}^{(C)} \cap S = \inbrace{C'_{i_C^*}}$.
We can bound the required probability by the probability that either $j_B^* \neq j_C^*$
or $i_B^* \neq i_C^*$ or $(x_C')_{i_C^*} \neq l$.
\begin{align*}
\Prob{\inparen{C',x_C',z_C} \notin S_l \given \inparen{B', x_B',z_B} \in S_l} \leq~~
&\Prob{j_C^* \neq j_B^* \given \inparen{B', x_B',z_B} \in S_l}\\
+&\Prob{\inparen{j_C^* = j_B^*} \wedge \inparen{i_C^* \neq i_B^*} \given \inparen{B', x_B',z_B} \in
  S_l}\\
+ &\Prob{\inparen{j_C^* = j_B^*} \wedge \inparen{i_C^* = i_B^*} \wedge \inparen{(x_C')_{i_C^*} \neq l} \given \inparen{B', x_B',z_B} \in
  S_l} \\
\leq~~ & \Prob{j_C^* \neq j_B^* \given j_B^* \leq k}
~+~ \Prob{i_C^* \neq i_B^* \given \inparen{j_B^* = j_C^*} \wedge \inparen{j_B^* \leq k}}\\
+ &\Prob{(x_C')_{i_C^*} \neq l \given \inparen{(x_B')_{i_B^*} = l} \wedge \inparen{i_B^* = i_C^*}}
\end{align*}
In the second inequality above, we drop conditionings that are irrelevant and use 
$\Prob{A \wedge B} \leq \Prob{A \given B}$. We now analyze each of the above terms separately.

The first term can be further split as
\begin{equation*}
\Prob{j_C^* \neq j_B^* \given j_B^* \leq k} 
~=~ \Prob{j_C^* > j_B^* \given j_B^* \leq k} + \Prob{j_C^* < j_B^* \given j_B^* \leq k} \mper
\end{equation*}
To have $j_C^* > j_B^*$,  it must be the case that $\card{W^{(C)}_{j_B^*} \cap S} \neq 1$, while
we also have $\card{W^{(B)}_{j_B^*} \cap S} = 1$ by definition of $j_B^*$. If $i_B^* \notin
W^{(C)}_{j_B^*}$, this must be because $(z_{C})_{i_B^*} = \bot$ or $C_{i_B^*}' \notin S$. The former
happens with probability $1-\rho^2$ as we already have that $(z_{B})_{i_B^*} = \top$. The latter
even happens with probability at most $\eta + 2\e_V$ as it could be due to the edge 
$(B'_{i_B^*},C'_{i_B^*})$ going out of $S$ or one of the vertices being perturbed by
$T_V$. Combining, we get a bound of $(1-\rho^2 + \eta + 2\e_V)$ for the case when $i'$ such that
$C'_{i'} \in S$ and the the events above must happen for $W_{j_{B}^*}^{(B)}$ and $i'$, giving again
a bound of $(1-\rho^2 + \eta + 2\e_V)$. The term $\Prob{j_C^* < j_B^* \given j_B^* \leq k}$ can
be bound identically. We then get
\begin{equation*}
\Prob{j_C^* \neq j_B^* \given j_B^* \leq k}  
~=~ 2 (1-\rho^2 + \eta + 2\e_V) 
\mper
\end{equation*}

We now consider the term 
$\Prob{i_C^* \neq i_B^* \given \inparen{j_B^* = j_C^*} \wedge \inparen{j_B^* \leq k}}$.
For this to happen, the above events must occur for \emph{both} the pairs 
$\inparen{i_B^*, W^{(C)}_{j_C^*}}$ and $\inparen{i_C^*, W^{(B)}_{j_B^*}}$. This gives
\begin{equation*}
\Prob{i_C^* \neq i_B^* \given \inparen{j_B^* = j_C^*} \wedge \inparen{j_B^* \leq k}}
~\leq~ (1-\rho^2 + \eta + 2\e_V)^2
\end{equation*}

Finally, given $j_B^* = j_C^*$ and $i_B^* = i_C^*$, the probability that
$(x'_B)_{i_B^*} \neq (x'_C)_{i_C^*}$ is  at most $\tfrac{1-\rho^2}{\rho^2} \beta$.
This is true because for any $i$, $((x'_B)_i,(z_B)_i)$ and $((x'_C)_i,(z_C)_i)$
are same with probability $\rho^2$ and uniform in $\Omega$ with probability
$1-\rho^2$. Also, for any $i$,  $i = i^*_B = i^*_C$ in particular means that
$(z_B)_i = (z_C)_i = \top$. Conditioned on this, we can bound the probability
$(x'_B)_{i_B^*} \neq (x'_C)_{i_C^*}$ as
\begin{equation*}
\Prob{(x'_B)_{i} \neq (x'_C)_{i} \given (z_B)_i = (z_C)_i = \top}
~=~ \frac{(1-\rho^2) \cdot (1-\nfrac{1}{q}) \cdot \beta^2}{ (1-\rho^2) \cdot \beta^2 + \rho^2 \cdot
  \beta}
~\leq~ \frac{(1-\rho^2) \beta}{\rho^2}
\end{equation*}
Combining the bounds for the three terms proves the claim.
\end{proof}

It remains to partition the vertices not assigned to any of the sets $S_1,\ldots,S_q$. We simply
assign any such vertex $(A,x,z)$ to the set $S_{x_1}$. It is easy to see that $S_1,\ldots,S_q$ still
satisfy the first property. Since the measure of the extra vertices added to each set is
$\tfrac{2^{-\Omega(k)}}{q}$, the expansion of each set increases by at most $2^{-\Omega(k)}$.
\end{proof}

%%%%%%%%%%%%%%%%%%%%%%%%%%%%%%%%%%%%%%%%%%%%%%%%%%%%%%%%%%%%%%%%%%%%%%%%%%%%%%%%%%%%%%%%%%%%%
\subsection{Soundness} \label{sec:soundness}

Let $G$ be a graph with vertex set $V$ and stationary measure $\mu$.
Let $H$ be the graph obtained from the reduction in
\pref{fig:ssereduction}.
The vertex set of $H$ is $V^R\times \Omega^R$.
Recall that $\Omega=[q]\times\set{\bot,\top}_\beta$.
Let $f\from V^R \times\Omega^R\to [0,1]$. 
We think of $f$ as a cut in $H$ (or convex combination thereof).

We define two symmetrizations of $f$ as follows
\begin{displaymath}
  \bar f(A,x,z)
  = \E_{\pi \sim \Pi_k} f(\pi(A,x,z))
  \quad\text{and}\quad
   \bar f'(A,x,z)
  = \E_{(A',x')\sim M_z(A,x)} \bar f(\pi(A',x',z))
  \mper
\end{displaymath}
By the symmetries of the graph,
\begin{displaymath}
  \iprod{f,Hf} 
  = \iprod{\bar f,H \bar f} 
  = \iprod{\bar f',H\bar f'}
  \mper
\end{displaymath}
We write $\bar f'_A(x,z)=\bar f'(A,x,z)$ and consider the average (with noise)
of $\bar f'_B$ over the neighbors $B$ of a vertex $A$ in $G^R$,
\begin{displaymath}
  g_{A} = \E_{B\sim G^R(A)} \E_{\tB\sim T_V(B)} \bar f'_{\tB}
  \mper
\end{displaymath}
We will apply the techniques of \cite{KhotKMO07} to analyze the functions $g_A$.
We first express the fraction of edges that stay within the cut defined by $f$ in terms of the functions
$g_A$.
\begin{lemma}
  \label{lem:graph-form}
  \begin{displaymath}
    \iprod{f, H f}  = \E_{A\sim V^R}
    \snorm{T_\Omega g_{A} } \mper
  \end{displaymath}
\end{lemma}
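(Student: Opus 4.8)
The plan is to evaluate $\iprod{f,Hf}$ by following the edge-generating process of \pref{fig:ssereduction} step by step, using the symmetrizations $\bar f,\bar f'$ to dispose of the two steps of the reduction that are \emph{not} coordinate-wise noise steps — the block permutations $\pi_B,\pi_C$ of Step~4 and the resampling graphs $M_{z_B},M_{z_C}$ of Step~7 — and then conditioning on the first-phase data $(A,x_A,z_A)$ so that the two endpoints of a random edge become conditionally independent. At that point the bilinear form becomes the square of a single conditional expectation, which we will recognize as $T_\Omega g_A$ evaluated at $(x_A,z_A)$.

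Concretely, I would start from the already-recorded identity $\iprod{f,Hf}=\iprod{\bar f',H\bar f'}$ and write the right-hand side out over the process as $\E\bigl[\bar f'(\pi_B(B',x'_B,z_B))\,\bar f'(\pi_C(C',x'_C,z_C))\bigr]$. Two invariances of $\bar f'$ then let me strip the last two phases. First, $\bar f'$ is constant on the orbits of $\Pi_k$: $\bar f$ is by construction, and since the law of $M_{\pi(z)}(\pi(A,x))$ is the law of $\pi$ applied to a sample of $M_z(A,x)$, the invariance passes to $\bar f'$; hence $\pi_B,\pi_C$ may be dropped. Second, for each fixed $z$ the function $\bar f'(\cdot,\cdot,z)$ is the image of $\bar f(\cdot,\cdot,z)$ under the operator $M_z$ (which resamples the coordinates where $z=\bot$), and $M_z$ is idempotent, so $\E_{(A',x')\sim M_z(A,x)}\bar f'(A',x',z)=\bar f'(A,x,z)$; since in Step~7 the $B$- and $C$-resamplings are independent given all earlier randomness, taking their conditional expectation replaces $(B',x'_B)$ by $(\tB,x_B)$ and $(C',x'_C)$ by $(\tC,x_C)$. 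This leaves
\[
  \iprod{f,Hf}
  = \E\bigl[\,\bar f'(\tB,x_B,z_B)\,\bar f'(\tC,x_C,z_C)\,\bigr]\mcom
\]
the expectation now over $A\sim V^R$, two independent neighbours $B,C\sim G^{\tensor R}(A)$, $\tB\sim T_V(B)$, $\tC\sim T_V(C)$, $(x_A,z_A)\sim\Omega^R$, and two independent samples $(x_B,z_B),(x_C,z_C)\sim T_\Omega(x_A,z_A)$.

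Next I would condition on $(A,x_A,z_A)$. Given these, $\tB$ and $\tC$ are i.i.d.\ $T_VG^{\tensor R}(A)$, the pairs $(x_B,z_B)$ and $(x_C,z_C)$ are i.i.d.\ $T_\Omega(x_A,z_A)$, and the $V^R$-side randomness is independent of the $\Omega^R$-side randomness, so the two triples $(\tB,x_B,z_B)$ and $(\tC,x_C,z_C)$ are conditionally i.i.d.; hence the conditional expectation of the product is the square of $\E\bigl[\bar f'(\tB,x_B,z_B)\mid A,x_A,z_A\bigr]$. Using once more that $\tB$ depends only on the $V^R$-side and $(x_B,z_B)$ only on the $\Omega^R$-side, this conditional expectation equals $\E_{(x_B,z_B)\sim T_\Omega(x_A,z_A)}\bigl[\E_{\tB\sim T_VG^{\tensor R}(A)}\bar f'_{\tB}(x_B,z_B)\bigr]$, and the inner expectation is exactly $g_A(x_B,z_B)$ by the definition of $g_A$, so the whole quantity is $(T_\Omega g_A)(x_A,z_A)$. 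Taking the outer expectation over $A\sim V^R$ and $(x_A,z_A)\sim\Omega^R$ — the product measure defining the norm on $L_2(\Omega^R)$ — yields $\iprod{f,Hf}=\E_{A\sim V^R}\E_{(x_A,z_A)\sim\Omega^R}\bigl[(T_\Omega g_A)(x_A,z_A)^2\bigr]=\E_{A\sim V^R}\snorm{T_\Omega g_A}$.

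The calculation is otherwise routine; the two points I would be most careful about — and the likeliest source of an error — are (a) checking that $\bar f'$ is genuinely invariant both under $\Pi_k$ and under the resampling operators $M_z$, since this is precisely what licenses throwing away the permutation and $M_z$ steps of the reduction, and (b) keeping the independence structure of the process straight, so that after conditioning on $(A,x_A,z_A)$ the two endpoints of a random edge of $H$ really are conditionally independent and identically distributed — the step that collapses the bilinear form into a square.
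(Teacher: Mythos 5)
Your proof is correct and takes essentially the same route as the paper: the paper compresses the two invariance checks and the conditioning step into the single phrase ``Using the construction of $H$ and the symmetry of $\bar f'$,'' and your proposal simply makes those steps explicit before collapsing the bilinear form into $\E_A\snorm{T_\Omega g_A}$. One cosmetic note: the paper's displayed first line writes $\bar f'_B$ where your (correct) $\bar f'_{\tB}$ is meant, so your version is if anything the cleaner rendering of the same argument.
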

\begin{proof}
  Using the construction of $H$ and the symmetry of $\bar f'$, we get
  \begin{align*}
    \iprod{f,Hf} &= \E_{A\sim V^R} \E_{(x,z)\sim \Omega^R }
    \Bigparen{\E_{B\sim G^R(A)}~ \E_{\tB\sim T_V(B)}~
      \E_{\substack{(x_B,z_B)\sim T_\Omega(x,z)}}~ \bar
      f'_{B}(x_B,z_B) }^2
    \\
    &=\E_{A\sim V^R} \E_{(x,z)\sim \Omega^R } \Bigparen{ %
      \E_{B\sim G^R(A)}~ \E_{\tB\sim T_V(B)}~ %
      T_\Omega \bar f'_{B}(x,z) }^2
    \\
    &=\E_{A\sim V^R} \E_{(x,z)\sim \Omega^R } \Bigparen{ T_\Omega
      g_A(x,z) }^2
    \\
    % &=\E_i \E_{A\sim V^R} \E_{x_A,y_A,z_A}
    % \Bigparen{\E_{B\sim G^R(A)}~
    %   \E_{\substack{(x_B,y_B,z_B)\sim T_\Omega(x,y,z)}}~
    %   g_{A}(x_B,y_B,z_B)
    % }^2
    % \\
    % %
    % & =\E_i \E_{A\sim V^R} \E_{x,y,z}
    % \Bigparen{ T_\Omega g_{A,i}(x,y,z) }^2
    % \\
    % %
    & = \E_{A\sim V^R} \snorm{T_\Omega g_{A} } \mper
    \qedhere    
  \end{align*}
\end{proof}

We now show that for most tuples $A$, the functions $g_A$ have the same expectation as $\E{f}$. To
this end, we show that $\E_A{\inparen{\E{g_A}}^2} \approx \inparen{\E{f}}^2$.
\begin{lemma}
\label{lem:variance-bound}
  % Let $\delta = \snorm{f}$.
  % %
  % Then
  \begin{displaymath}
    \E_{A\sim V^R}\Bigparen{\E_{\Omega^R} g_A }^2 
    \le     (\E f)^2 + \beta \snorm{f}
    %(\beta+\delta)\delta
    \mper
  \end{displaymath}
\end{lemma}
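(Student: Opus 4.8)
The plan is to reduce, via two applications of Jensen's inequality, to an upper bound on $\snorm h$ where $h(B)\defeq\E_{(x,z)\sim\Omega^R}\bar f'(B,x,z)$, and then to recognise $h$ as $M^*\Psi$ for a suitable leaky-noise operator $M$, so that the singular-value bounds already proved in \pref{lem:leaky-random-walk} and \pref{lem:leaky-noise-graph} apply directly.

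\emph{Step 1: reduce to $\snorm h$.} Starting from $g_A=\E_{B\sim G^{\tensor R}(A)}\E_{\tB\sim T_V(B)}\bar f'_{\tB}$, one gets $\E_{\Omega^R}g_A=\E_{B\sim G^{\tensor R}(A)}\E_{\tB\sim T_V(B)}h(\tB)$. Jensen's inequality gives $\bigparen{\E_{\Omega^R}g_A}^2\le\E_{B\sim G^{\tensor R}(A)}\E_{\tB\sim T_V(B)}h(\tB)^2$, and averaging over $A\sim V^R$, using that $G^{\tensor R}$ and $T_V$ both preserve the stationary measure $\mu^R$ (so that $\tB$ is marginally distributed as $\mu^R$), yields $\E_{A\sim V^R}\bigparen{\E_{\Omega^R}g_A}^2\le\snorm h$.

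\emph{Step 2: identify $h$ with $M^*\Psi$.} Set $\Psi(B,z)\defeq\E_{x\sim[q]^R}\bar f(B,x,z)$, viewed as a function on $V^R\times\betabias^R$, and let $N_z$ be the operator on $L_2(V^R)$ that re-randomises (w.r.t.\ $\mu$) the coordinates in which $z$ equals $\bot$. Since applying $M_z$ to $(B,x)$ with $x$ uniform on $[q]^R$ leaves the $[q]^R$-component uniform (irrespective of $z$) and re-randomises exactly the $\bot$-coordinates of $B$, expanding the definition of $\bar f'$ and averaging over $x$ gives $h(B)=\E_{z}\E_{B'\sim N_z(B)}\Psi(B',z)$. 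Now let $M$ be the leaky-noise operator of \pref{lem:leaky-noise-graph} built over the probability space $(V,\mu)$, i.e.\ $MF(B,z)=\E_{z'}\E_{B'\sim N_z(B)}F(B',z')$. Computing $M^*$ exactly as in the proof of \pref{lem:leaky-random-walk} (each $N_z$ is self-adjoint and they all share the stationary measure $\mu^R$) gives $M^*F(B,z)=\E_{z'}\E_{B'\sim N_{z'}(B)}F(B',z')$, whose right-hand side does not depend on $z$. Comparing this with the formula for $h$ shows $M^*\Psi(B,z)=h(B)$ for all $z$, hence $\snorm h=\snorm{M^*\Psi}=\iprod{\Psi,MM^*\Psi}$.

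\emph{Step 3: apply the second-eigenvalue bound.} The operator $MM^*$ is self-adjoint, stochastic, and fixes the constants. By \pref{lem:leaky-random-walk} with $\cZ=\betabias^R$ and $G_z=N_z$, its non-zero eigenvalues coincide with those of $\E_z N_z^2=\E_z N_z=T_{\beta,V}^{\tensor R}$ (using $N_z^2=N_z$), so its second largest eigenvalue is $\beta$. Then \pref{fact:second-eigenvalue} yields $\iprod{\Psi,MM^*\Psi}\le(\E\Psi)^2+\beta\bigparen{\snorm\Psi-(\E\Psi)^2}\le(\E\Psi)^2+\beta\snorm\Psi$. Finally $\E\Psi=\E\bar f=\E f$, and since both averaging over $[q]^R$ and symmetrising over $\Pi_k$ are measure-preserving, Jensen gives $\snorm\Psi\le\snorm{\bar f}\le\snorm f$. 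Combining with Step 1 gives $\E_{A\sim V^R}\bigparen{\E_{\Omega^R}g_A}^2\le(\E f)^2+\beta\snorm f$.

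I expect the only delicate point to be Step 2 --- realising that averaging out the uniform $[q]^R$-coordinates collapses $\bar f'$ into $M^*$ applied to $\Psi$, so the leaky-noise machinery applies. One must keep track of the fact that the adjoint $M^*$ re-randomises the $V$-part using a \emph{fresh} copy of the leaked string $z$ (not the one that is output), which is precisely what makes $M^*\Psi$ independent of its $z$-argument and therefore equal to $h$. Everything else is routine Jensen and stationarity bookkeeping.
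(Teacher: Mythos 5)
Your proof is correct, and it rests on the same underlying mechanism as the paper's: reduce the variance expression to a bilinear form involving a leaky-noise Markov operator whose second largest eigenvalue is $\beta$, and then invoke the spectral bound. The route you take, though, is genuinely different in two respects. The paper's Cauchy--Schwarz pulls $\E_x$ outside the square and keeps only $\E_z$ inside, landing on $\E_{(A,x)\sim\tOmega^R}\bigparen{\E_z\bar f'(A,x,z)}^2=\snorm{M\bar f}$ for the leaky-noise operator $M$ on $L_2(\tOmega^R\times\betabot)$ with $\tOmega=V\times[q]$, after which it cites \pref{lem:leaky-noise-graph} directly. Your Jensen is more conservative, keeping both $\E_x$ and $\E_z$ inside the square and reducing to $\snorm h$ with $h(B)=\E_{(x,z)}\bar f'(B,x,z)$. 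To handle $h$, you make an observation that does not appear in the paper's proof: since $x$ is uniform on $[q]^R$, the action of $M_z$ on $(B,x)$ factors into $N_z$ on the $V$-part tensored with an independent fresh uniform draw of $x'$, so averaging out $x$ collapses the $[q]$-coordinate entirely. This lets you write $h=M^*\Psi$ for a smaller leaky-noise operator $M$ living on $L_2(V^R\times\betabias^R)$ and then appeal to \pref{lem:leaky-random-walk} (together with $N_z^2=N_z$) to get the second eigenvalue $\beta$ of $MM^*$. Both routes produce the same final bound; your version collapses the $[q]$-coordinate earlier, trades a citation of \pref{lem:leaky-noise-graph} for a direct use of \pref{lem:leaky-random-walk} plus \pref{fact:second-eigenvalue}, and gives a marginally sharper intermediate quantity since $\snorm h\le\snorm{M\bar f}$ by an additional Jensen in $x$.
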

\begin{proof}
  Let $\tOmega=V\times [q]$.  We have
  \begin{align}
    \E_{A\sim V^R}\Bigparen{\E_{\Omega^R} g_A }^2 & %
    = \E_{A\sim V^R} \Bigparen{ %
      \E_{B\sim G^R(A)} \E_{\tB\sim T_V(B)}~ \E_{(x,z)\sim\Omega^R}
      \bar f'(B,x,z)}^2
    \notag
    \\
    & \le \E_{A\sim V^R} \E_{B\sim G^R(A)} \E_{\tB\sim T_V(B)}~
    \E_{x}\Bigparen{ %
      \E_{z} \bar f'(B,x,z)}^2 \qquad \text{(Cauchy--Schwarz)}
    \notag
    \\
    & = \E_{(A,x)\sim \tOmega^R}\Bigparen{ %
      \E_{z} \bar f'(A,x,z)}^2 \mper
    \label{eq:variance-bound}
  \end{align}
  Let $M$ be the following stochastic operator on $L_2(\tOmega^R\times
  \betabot)$,
  \begin{displaymath}
    M h(A,x,z_0) = \E_{z\sim\betabot} \E_{\vbig (A',x')\sim M_z(A,x)}
    h(A',x',z)
    \mper
  \end{displaymath}
  Recall that
  \begin{math}
    \bar f'(A,x,z) = \E_{(A',x')\sim M_z(A,x)} \bar f(A',x',z).
  \end{math}
  With this notation, the right-hand side of \pref{eq:variance-bound}
  simplifies to $\snorm{M \bar f}$.
  Therefore, 
  \begin{displaymath}
    \E_{A\sim V^R}\Bigparen{\E_{\Omega^R} g_A }^2  
    \le \snorm{M\bar f}
    \le (\E \bar f)^2 + \beta \snorm{\bar f}
    \le (\E f )^2 + \beta \snorm{f}
    \mper
  \end{displaymath}
  The second inequality uses that $M ^*M$ has second largest
  eigenvalue $\beta$ (see \pref{lem:leaky-noise-graph}).
  The last inequality uses that $\bar f$ is obtained by applying a
  stochastic operator on $f$.  \qedhere
\end{proof}

The following lemma is an immediate consequence of the previous lemma
(\pref{lem:variance-bound}) and Chebyshev's inequality.

\begin{lemma}
  \label{lem:probability-bound}
  For every $\gamma>0$,
  \begin{displaymath}
    \Prob[A\sim V^R]{\vbig \E g_A \ge \E f + \gamma \sqrt{\E f} } 
    \le \nfrac \beta {\gamma^2} \cdot  \tfrac{\snorm{f} }{ \E f}
    \le \nfrac \beta{\gamma^2}
    \mper
  \end{displaymath}
\end{lemma}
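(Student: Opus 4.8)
The plan is to read $\E g_A$ as a real-valued random variable in the random tuple $A\sim V^R$ and to control its upper tail by a second-moment (Chebyshev) argument. \pref{lem:variance-bound} already supplies the bound on the second moment, $\E_{A\sim V^R}\bigparen{\E_{\Omega^R} g_A}^2 \le (\E f)^2 + \beta\snorm f$, so the only missing ingredient is to pin down the first moment and then invoke Chebyshev's inequality.

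First I would check that $\E_{A\sim V^R}\E_{\Omega^R} g_A = \E f$. This holds because every operation used to build $g_A$ from $f$ preserves the expectation: the symmetrization $\bar f$ averages $f$ over the coordinate permutations in $\Pi_k$, under which the product measure on $V^R\times\Omega^R$ is invariant, so $\E\bar f=\E f$; the symmetrization $\bar f'$ applies the stochastic operator $M_z$, which is reversible with respect to $\mu$, so $\E\bar f'=\E\bar f$; and finally $g_A = \E_{B\sim G^R(A)}\E_{\tB\sim T_V(B)}\bar f'_{\tB}$ is an average of $\bar f'$ along the random walk $T_VG^{\tensor R}$, so averaging additionally over $A\sim V^R$ (the stationary measure of $G$) returns $\E\bar f'=\E f$. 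Consequently $\E_{A\sim V^R}\bigparen{\E_{\Omega^R} g_A}^2 - (\E f)^2 \le \beta\snorm f$ by \pref{lem:variance-bound}; this is exactly the variance of the random variable $X \defeq \E_{\Omega^R} g_A$.

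Next I would apply Chebyshev's inequality to $X$, which has mean $\E f$: for any threshold $t>0$,
\[
\Prob[A\sim V^R]{X \ge \E f + t} \le \Prob[A\sim V^R]{\abs{X-\E f}\ge t} \le \frac{\beta\snorm f}{t^2}\mper
\]
Choosing $t=\gamma\sqrt{\E f}$ gives $\Prob[A\sim V^R]{\E g_A \ge \E f+\gamma\sqrt{\E f}} \le \beta\snorm f/(\gamma^2\,\E f)$, the first claimed inequality. Finally, since $f$ takes values in $[0,1]$ we have $f^2\le f$ pointwise, hence $\snorm f = \E f^2 \le \E f$, so $\snorm f/\E f\le 1$ and the bound simplifies to $\beta/\gamma^2$, which is the second inequality.

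There is no genuine obstacle here — the argument is short and routine once \pref{lem:variance-bound} is in hand. The one point that deserves a line of care is the expectation-preservation identity $\E_{A}\E_{\Omega^R} g_A=\E f$, since it relies on the stationarity of $\mu$ under $G$ and the reversibility of $M_z$, both already established in \Sref{sec:preliminaries} and in the analysis leading to \pref{lem:leaky-noise-graph}.
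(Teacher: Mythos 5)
Your proof is correct and follows essentially the same route as the paper: invoke \pref{lem:variance-bound} for a second-moment bound, note that $\E_A\E g_A=\E f$, and apply Chebyshev's inequality, finishing with $\snorm f\le \E f$ since $f$ is $[0,1]$-valued. The only difference is that you spell out the expectation-preservation step, which the paper leaves implicit.
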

\begin{proof}
  \pref{lem:variance-bound} shows that $\E_{A} (g_A - \E f)^2\le \beta
  \snorm{f}$.
  Hence,
  \begin{math}
    \Prob[A]{\abs{g_A-\E f}>\gamma \sqrt{\E f}} %
    \le \beta \snorm f / (\gamma^2 \E f).
  \end{math}
\end{proof}

\subsubsection{Decoding a \uniquegames assignment}
\label{sec:decod-uniq-assignm}

The goal is decode from $f$ an assignment $F\from V^R \to [R]$ that
maximizes the probability
\begin{equation}
  \label{eq:ug-successprobability}
  \E_{A\sim V^R, B\sim G^R(A)}
  \E_{\tA\sim T_V(A)}\E_{\tB\sim T_V(B)}
  \Prob[\pi_A,\pi_B \sim \Pi_k]{
    \inv{\pi_A}\paren{F\paren{\pi(A)}}
    =\inv{\pi_B}\paren{F\paren{\pi_B(B)}} 
  }\mper
\end{equation}
(The expression above is roughly the success probability of the
assignment $F$ for the \uniquegames instance obtained by applying the
reduction from \cite{RaghavendraS10} on $G$.)

As usual, we decode according to influential coordinates of $f$ (after
symmetrization).
More precisely, we generate a assignment $F$ by the following
probabilistic process:
For every $A\in V^R$, with probability $\half$, choose a random
coordinate in $\set{ i\in [R]\mid \Inf_i (T_{1-\delta} g_A)>\tau}$ and
with probability $\half$, choose a random coordinate in $\set{ i\in
  [R]\mid \Inf_i (T_{1-\delta} \bar f'_A)>\tau}$.
If the sets of influential coordinates are empty, we choose a
uniformly random coordinate in $[R]$.

\Mnote{Replacing $\delta$ in $T_{1-\delta}$ below by $\e$, and $\e$ in 
the error by $\nu$.}
The following lemma follows immediately from the techniques in
\cite{KhotKMO07}.
The reason is that \pref{eq:ug-successprobability} is the success
probability of the assignment $F$ for a \uniquegames instance defined
on $V^R$.
For $A\in V^R$, the function $g_A$ is the average over bounded
functions $f'_B\from \Omega^R\to[0,1]$, where $B$ is a random neighbor
of $A$ in the \uniquegames instance and where input coordinates of
$f'_B$ are permuted according to the constraint between $A$ and $B$.
More precisely, 
\begin{displaymath}
  g_A(x,z) = \E_{B\sim G^R(A)}\E_{\tB\sim T_V(B)} \E_{\pi_B} 
  f'_{\pi_B(\tB)}(\pi_B(x,z))
  \qquad\text{where}~f'_B(x,z) = \E_{(B',x')\sim M_z(B,x)} f(B',x',z)
  \mper
\end{displaymath}

\begin{lemma}
  \label{lem:influence-decoding}
  For every $\tau,\delta>0$, there exists a constant $c>0$ such that
  \begin{displaymath}
    \E_{F} \E_{A\sim V^R, B\sim G^R(A)}
    \E_{\tA\sim T_V(A)}\E_{\tB\sim T_V(B)}
    \Prob[\pi_A,\pi_B \sim Pi_k]{
    \inv{\pi_A}\paren{F\paren{\pi_A(\tA)}}
    =\inv{\pi_B}\paren{F\paren{\pi_B(\tB)}} 
  } 
  > c \Prob[A\sim V^R]{\exists i.~\Inf_i(T_{1-\delta}g_A)>\tau}
  \mper
  \end{displaymath}
\end{lemma}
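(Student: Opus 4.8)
The statement is the standard ``influence-decoding'' step of the \cite{KhotKMO07} framework, applied to the \uniquegames instance on $V^R$ produced by the reduction of \cite{RaghavendraS10}, whose value on the assignment $F$ is exactly the quantity in \pref{eq:ug-successprobability}. I would prove it by following that template.

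\emph{Preliminaries.} First I would record three routine facts. (i) For any $h\from\Omega^R\to[0,1]$ the total smoothed influence is bounded: $\sum_{i\in[R]}\Inf_i(T_{1-\delta}h)=\sum_S|S|(1-\delta)^{2|S|}\hat h(S)^2\le \max_{m\ge 0}m(1-\delta)^{2m}=O(1/\delta)$, so the \emph{decoding list} $L(A)\seteq\set{i\in[R]\mid \Inf_i(T_{1-\delta}g_A)>\tau}$ has size at most $d\seteq O(1/(\tau\delta))$, a constant depending only on $\tau,\delta$, and likewise for the lists $L'(A)\seteq\set{i\mid \Inf_i(T_{1-\delta}\bar f'_A)>\tau/2}$. (ii) The map $h\mapsto \Inf_i(T_{1-\delta}h)$ is convex, being a nonnegative quadratic form in the Fourier coefficients of $h$; hence the influence of an average of functions is at most the average of the influences. (iii) For a permutation $\pi$ of $[R]$ one has $\Inf_i(h\circ\pi)=\Inf_{\pi(i)}(h)$, and the operators $M_z$, $T_V$, $G^{\tensor R}$ all commute with coordinate permutations, so that $\bar f'$ is invariant under the diagonal action of $\Pi_k$ and $g_{\pi(A)}=g_A\circ\inv\pi$ for $\pi\in\Pi_k$.

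\emph{Transfer of influence.} Using the displayed identity $g_A(x,z)=\E_{B\sim G^R(A)}\E_{\tB\sim T_V(B)}\E_{\pi_B}f'_{\pi_B(\tB)}(\pi_B(x,z))$ together with (ii) and (iii), for every coordinate $i$,
\begin{displaymath}
  \Inf_i(T_{1-\delta}g_A)\ \le\ \E_{B\sim G^R(A)}\,\E_{\tB\sim T_V(B)}\,\E_{\pi_B}\ \Inf_{\pi_B(i)}\!\Paren{T_{1-\delta}f'_{\pi_B(\tB)}}
  \mper
\end{displaymath}
Hence if $i\in L(A)$ then with probability at least $\tau/2$ over the choice of a neighbor $\tB$ of $A$ and the permutation $\pi_B$, the coordinate $\pi_B(i)$ is $(\tau/2)$-influential for the inner function at the endpoint $\pi_B(\tB)$; passing from $f'$ to its $\Pi_k$-symmetrization $\bar f'$ (which only averages influences over $\Pi_k$) keeps a comparable coordinate in the list $L'$. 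I would carry out this deduction simultaneously on the two sides of a constraint generated from $A$, which are sampled independently given $A$.

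\emph{Decoding succeeds, and averaging.} Consider the probabilistic assignment $F$ from the statement. Fix $A$ with $L(A)\ne\eset$ and fix $i\in L(A)$. By the previous step, with probability at least $(\tau/2)^2$ over the two independent branches the two endpoints $X$ and $Y$ of the resulting constraint each carry the appropriately permuted copy of $i$ in their list $L'$; the decoder picks the $\bar f'$-branch for both endpoints with probability $\nfrac14$ and then selects exactly those coordinates with probability at least $\nfrac1{d^2}$, and tracking the block permutations $\pi_B,\pi_C\in\Pi_k$ through the constraint $\inv{\pi_B}(F(X))=\inv{\pi_C}(F(Y))$ shows that these choices satisfy it. Averaging over $A$,
\begin{displaymath}
  \E_F\brac{\text{value of }F}\ \ge\ \Paren{\tfrac\tau2}^{2}\cdot\tfrac14\cdot\tfrac1{d^2}\cdot\Prob[A\sim V^R]{L(A)\ne\eset}
  \ =\ c\cdot\Prob[A\sim V^R]{\exists i.~\Inf_i(T_{1-\delta}g_A)>\tau}
\end{displaymath}
for a constant $c=c(\tau,\delta)>0$, which is the assertion.

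\emph{Main obstacle.} The Fourier estimates and the final averaging are harmless; the delicate part is the bookkeeping in the transfer step --- propagating an influential coordinate of the \emph{outer} aggregated function $g_A$ down to influential coordinates of the \emph{inner} functions $\bar f'$ at \emph{both} endpoints of a constraint, through the block permutations $\Pi_k$, the leaky-noise operators $M_z$ hidden inside $f'$ and $\bar f'$, and the extra $T_V$-noise. The commutation and symmetry facts in (iii) make each individual step routine (and taking $\e_V$ small ensures $T_V$ does not destroy influences), but they must be combined with care; this is precisely why the decoding uses both $g_A$ and $\bar f'_A$ and why one decodes from the symmetrized functions rather than from $f$ itself.
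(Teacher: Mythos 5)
Your proposal follows essentially the same route the paper intends: the paper's own ``proof'' consists of the single sentence that the lemma ``follows immediately from the techniques in \cite{KhotKMO07},'' and the authors' (suppressed) proof sketch cites exactly the facts you list --- the formula $g_A=\E_{B\sim G^R(A)}\E_{\tB\sim T_V(B)}\bar f'_{\tB}$, $\Pi_k$-equivariance of $\bar f'$, convexity of influences, and the $O(1/(\tau\delta))$ bound on the number of $\tau$-influential coordinates of $T_{1-\delta}h$. The two-list decoder, the Markov step from $\E_{B,\tB}\Inf_i(\bar f'_{\tB})>\tau$ to $\Pr[\Inf_i(\bar f'_{\tB})>\tau/2]\ge\tau/2$, and the final $c=c(\tau,\delta)$ bookkeeping all match the intended argument.

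One place where you should be more careful, though: in the decoding step you assert that ``with probability at least $(\tau/2)^2$ over the two independent branches the two endpoints $X$ and $Y$ of the resulting constraint each carry the appropriately permuted copy of $i$ in their list $L'$.'' Your transfer inequality $\Inf_i(T_{1-\delta}g_A)\le \E_{B,\tB,\pi_B}\Inf_{\pi_B(i)}(T_{1-\delta}\bar f'_{\pi_B(\tB)})$ gives this cleanly for the $\tB$-side, because $\tB\sim T_V G^{\tensor R}(A)$ is \emph{exactly} the distribution over which $g_A$ averages. But the other endpoint of a constraint in \pref{eq:ug-successprobability} is $\tA\sim T_V(A)$, which is \emph{not} the averaging distribution of $g_A$; so the same inequality does not directly say anything about $\Inf_i(T_{1-\delta}\bar f'_{\tA})$, and the claimed symmetric $(\tau/2)^2$ bound is not justified as written. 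This asymmetry is precisely why the decoder uses \emph{both} the $g_A$-list and the $\bar f'_A$-list (rather than only $\bar f'_A$), and a careful version of the argument has to combine these two lists across the edge (e.g.\ decode one endpoint from $g_{\tA}$ and the other from $\bar f'_{\tB}$, and relate $\Inf_i(T_{1-\delta}g_{\tA})$ to $\Inf_i(T_{1-\delta}g_A)$ through the $T_V$-perturbation using the commutation of $T_V$ with $G^{\tensor R}$). You flag the $T_V$-noise and the mixed decoder as the delicate part, which is the right instinct, but the bookkeeping you wrote down implicitly treats the two endpoints as if they were sampled from the same conditional distribution given $A$, and this needs to be repaired before the $(\tau/2)^2\cdot\tfrac14\cdot\tfrac1{d^2}$ bound is justified.
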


\Dcomment{\begin{proof}
  Using following facts: 
  \begin{itemize}\item 
    $g_A = \E_{B\sim G^R(A)}~\E_{\tB\sim T_V(B)}~ \bar f'_A$,
  \item $\bar f'_A(\pi(x,y,z)) = \bar f'_A(x,y,z)$ for all
    permutations $\pi\from[R]\to [R]$,
  \item convexity of influences,
  \item a function $T_{1-\delta} f$ can have at most $\snorm
    f/\poly(\tau\delta)$ coordinates with influence at least $\tau$.
  \end{itemize}
\end{proof}
}

\newcommand{\trho}{{\tilde\rho}}

\begin{lemma}
  \label{lem:expansion-vs-influence}
  For every $\nu, \beta, \gamma > 0$, $q\in\N$, and $\rho\in(0,1)$, there exist
  $\tau,\delta > 0$ such that
  \begin{displaymath}   
    % < ??
    % \quad\Rightarrow \quad
    \iprod{f,Hf} < \Gamma_{\rho^2}\bigparen{\E f} + 2\gamma + \nu
    +\nfrac \beta {\gamma^2} 
    + \Prob[A\sim V^R]{\exists i.~\Inf_i(T_{1-\delta}g_A)>\tau}
    \mper
%    ??
 \end{displaymath}%
\end{lemma}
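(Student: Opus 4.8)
The plan is to combine the ``graph form'' identity of \pref{lem:graph-form}, the concentration estimate of \pref{lem:probability-bound}, and the invariance principle \pref{thm:noisestability}. First, since $T_\Omega=T_{\rho,\Omega}^{\tensor R}$ is self-adjoint and $T_{\rho,\Omega}^2=T_{\rho^2,\Omega}$ (the eigenvalues $1,\rho$ simply get squared), \pref{lem:graph-form} rewrites $\iprod{f,Hf}=\E_{A\sim V^R}\snorm{T_\Omega g_A}=\E_{A\sim V^R}\iprod{g_A,\,T_{\rho^2,\Omega}^{\tensor R} g_A}$. Each $g_A\from\Omega^R\to[0,1]$ is bounded, being an average of the bounded functions $\bar f'_{\tB}$, so the invariance principle applies to it coordinatewise over the finite space $\Omega=[q]\times\betabias$.

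Next I would invoke \pref{thm:noisestability} with finite space $\Omega$, correlation parameter $\rho^2\in(0,1)$, and error parameter $\nu$, obtaining $\tau,\delta>0$ (depending only on $\nu,\rho,q,\beta$, hence admissible). For every $A$ the theorem yields the dichotomy: either (i) $\iprod{g_A,T_{\rho^2}g_A}\le\Gamma_{\rho^2}(\E g_A)+\nu$, or (ii) $\max_i\Inf_i(T_{1-\delta}g_A)>\tau$. Call $A$ \emph{exceptional} if (ii) holds or if $\E g_A>\E f+\gamma\sqrt{\E f}$. By \pref{lem:probability-bound} the second kind of exceptional vertex has probability at most $\beta/\gamma^2$, so $\Pr_A[A\text{ exceptional}]\le \beta/\gamma^2+\Pr_A[\exists i.\ \Inf_i(T_{1-\delta}g_A)>\tau]$. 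On exceptional $A$ we bound crudely: since $T_{\rho^2}$ is stochastic and $g_A\in[0,1]$, $\iprod{g_A,T_{\rho^2}g_A}\le\E g_A\le 1$.

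For non-exceptional $A$ case (i) applies and $\E g_A\le\E f+\gamma\sqrt{\E f}$. The subtlety is that $\Gamma_{\rho^2}$ is not concave, so we cannot just take expectations; instead we use that $\Gamma_{\rho^2}$ is nondecreasing and Lipschitz with constant at most $2$ on $[0,1]$ (its derivative equals $2\Pr[Y\ge t\mid X=t]$ for $\rho^2$-correlated standard Gaussians $X,Y$, where $\Pr[X\ge t]=\mu$, which lies in $[0,2]$). Monotonicity gives $\Gamma_{\rho^2}(\E g_A)\le\Gamma_{\rho^2}(\E f+\gamma\sqrt{\E f})$, and the Lipschitz bound gives $\Gamma_{\rho^2}(\E f+\gamma\sqrt{\E f})\le\Gamma_{\rho^2}(\E f)+2\gamma\sqrt{\E f}\le\Gamma_{\rho^2}(\E f)+2\gamma$. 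Averaging over $A$ and separating off the exceptional vertices then yields $\iprod{f,Hf}\le\Gamma_{\rho^2}(\E f)+2\gamma+\nu+\beta/\gamma^2+\Pr_A[\exists i.\ \Inf_i(T_{1-\delta}g_A)>\tau]$, which is the claim.

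The main obstacle is precisely the non-concave profile $\Gamma_{\rho^2}$: a naive application of the invariance principle only gives $\iprod{f,Hf}\le\E_A\Gamma_{\rho^2}(\E g_A)$, which can be far larger than $\Gamma_{\rho^2}(\E f)$ when the values $\E g_A$ are spread out — this is exactly the failure point of the candidate reduction discussed in \pref{sec:cand-reduct-from}. It is resolved here not by any convexity but by the variance bound of \pref{lem:variance-bound}/\pref{lem:probability-bound}, which forces $\E g_A$ to concentrate near $\E f$, combined with the one-sided monotonicity/Lipschitz control on $\Gamma_{\rho^2}$. The only routine points to verify are that $(T_{\rho,\Omega}^{\tensor R})^2=T_{\rho^2,\Omega}^{\tensor R}$ and that the Lipschitz constant of $\Gamma_{\rho^2}$ over the full range $[0,1]$ of $\E f$ is at most $2$; the latter is where the constant $2$ in $2\gamma$ originates.
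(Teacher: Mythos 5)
Your proof is correct and follows exactly the same route as the paper's: apply \pref{lem:graph-form} together with the observation that $T_\Omega^2 = T_{\rho^2,\Omega}^{\tensor R}$, invoke the invariance principle \pref{thm:noisestability} at correlation $\rho^2$ to get the dichotomy for each $g_A$, use \pref{lem:probability-bound} to bound the measure of $A$'s with $\E g_A > \E f + \gamma\sqrt{\E f}$, absorb the exceptional $A$'s by the trivial bound $\snorm{T_\Omega g_A}\le 1$, and finally convert $\Gamma_{\rho^2}(\E f + \gamma\sqrt{\E f})$ to $\Gamma_{\rho^2}(\E f) + 2\gamma$ via the $2$-Lipschitz property. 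Your explicit derivation that $\Gamma_{\rho^2}'(\mu) = 2\Pr[Y\ge t\mid X=t]\in[0,2]$ is a small addition of detail that the paper asserts without proof; otherwise the two arguments coincide.
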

\begin{proof}
  Recall that \pref{lem:graph-form} shows
  \begin{math}
    \iprod{f, H f} = \E_A \snorm{T_\Omega g_A} 
    \mper
  \end{math}
  The operator $T_\Omega$ is an $R$-fold tensor operator with second
  largest eigenvalue $\rho$. 
  The invariance principle (\pref{thm:noisestability}) asserts that there
  exist $\tau,\delta > 0$ such that $\snorm{T_\Omega g_A}\le
  \Gamma_{\rho^2}(\E g_A) +\nu$ if $\Inf_i(T_{1-\delta} g_A)\le \tau$
  for all coordinates $i\in[R]$.
  Together with \pref{lem:probability-bound}, we get
  \begin{align*}
    \iprod{f, H f} = \E_A \snorm{T_\Omega g_A} 
    &~\le  \Gamma_{\rho^2}\inparen{\E f + \gamma\sqrt{\E f}} + \nu + \nfrac\beta {\gamma^2}
    + \Prob[A\sim V^R]{\exists  i.~\Inf_i(T_{1-\delta}g_A)>\tau}
    \mper    \\
&~\le  \Gamma_{\rho^2}(\E f) + 2\gamma + \nu + \nfrac\beta {\gamma^{2}} 
    + \Prob[A\sim V^R]{\exists  i.~\Inf_i(T_{1-\delta}g_A)>\tau}
  \end{align*}
The second inequality above used that $\Gamma_{\rho^2}(\cdot)$ is 2-Lipschitz.
\end{proof}

Putting together \pref{lem:influence-decoding} and
\pref{lem:expansion-vs-influence}, we get the following lemma as
immediate corollary.

\begin{lemma}
  \label{lem:uniquegame-decoding}
  For every $\beta>0$, $q\in\N$, and $\rho\in(0,1)$, there
  exists $\zeta>0$ such that either
  \begin{displaymath}
    \iprod{f,Hf} 
    < \Gamma_{\rho^2}\bigparen{\E f} 
    +5\beta^{1/3} \mcom
  \end{displaymath}
  or there exists an assignment $F\from V^R\to [R]$ such that the
  probability \pref{eq:ug-successprobability} is at least $\zeta$.
\end{lemma}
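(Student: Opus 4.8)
The plan is to obtain this as an immediate corollary of \pref{lem:expansion-vs-influence} and \pref{lem:influence-decoding}, balancing all the additive error terms against a single small power of $\beta$. First I would fix $\beta>0$, $q\in\N$, and $\rho\in(0,1)$ once and for all, and invoke \pref{lem:expansion-vs-influence} with the particular choice $\nu=\gamma=\beta^{1/3}$. This yields thresholds $\tau,\delta>0$ (depending only on $\beta,q,\rho$) for which every $f$ satisfies
\[
  \iprod{f,Hf} < \Gamma_{\rho^2}(\E f) + 2\gamma + \nu + \nfrac{\beta}{\gamma^2} + p
  = \Gamma_{\rho^2}(\E f) + 4\beta^{1/3} + p,
\]
where I write $p \defeq \Prob[A\sim V^R]{\exists i.~\Inf_i(T_{1-\delta}g_A)>\tau}$ and use that $\nfrac{\beta}{\gamma^2} = \beta^{1/3}$.

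Next I would feed these same $\tau,\delta$ into \pref{lem:influence-decoding} to obtain a constant $c>0$ such that the expectation over the randomized decoding $F$ of the \uniquegames success probability \pref{eq:ug-successprobability} is at least $c\cdot p$. I then set $\zeta \defeq c\,\beta^{1/3}$ and split on the size of $p$. If $p \le \beta^{1/3}$, the displayed inequality gives $\iprod{f,Hf} < \Gamma_{\rho^2}(\E f) + 5\beta^{1/3}$, which is the first alternative. If instead $p > \beta^{1/3}$, then the expectation of \pref{eq:ug-successprobability} over the random $F$ exceeds $c\,\beta^{1/3} = \zeta$, so some fixed assignment $F\from V^R\to[R]$ attains a value of at least $\zeta$, which is the second alternative.

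The only point needing care is the ordering of quantifiers: $\tau,\delta$ come out of \pref{lem:expansion-vs-influence} as functions of $(\beta,q,\rho)$, then $c$ comes out of \pref{lem:influence-decoding} as a function of $(\tau,\delta)$ and hence of $(\beta,q,\rho)$, so $\zeta$ legitimately depends on $(\beta,q,\rho)$ only, as the statement requires. One should also confirm that the quantity \pref{eq:ug-successprobability} bounded below in \pref{lem:influence-decoding} is exactly the one in the statement — it is, since both carry the $T_V$-noise on $A$ and on $B$ and the $\Pi_k$-averaged permutations. Beyond this bookkeeping there is no real obstacle: all the analytic content — the invariance principle, the passage from noise stability to $\Gamma_{\rho^2}$, and the influence-decoding argument — has already been absorbed into the two input lemmas, so this step is purely a matter of choosing $\nu,\gamma$ as small powers of $\beta$ and performing the case analysis.
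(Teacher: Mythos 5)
Your proposal is correct and follows precisely the paper's own argument: set $\nu=\gamma=\beta^{1/3}$ in \pref{lem:expansion-vs-influence}, obtain $\tau,\delta$ and a constant $c$ from \pref{lem:influence-decoding}, take $\zeta = c\,\beta^{1/3}$, and split on whether the influential-coordinate probability exceeds $\beta^{1/3}$. You simply spell out the case analysis that the paper leaves implicit.
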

\begin{proof}
Choosing $\gamma = \nu = \beta^{1/3}$ in \pref{lem:expansion-vs-influence},
we get that for some $\tau,\delta > 0$,
\begin{align*}
    \iprod{f, H f} = \E_A \snorm{T_\Omega g_A} 
&~\le  \Gamma_{\rho^2}(\E f) + 4\beta^{1/3} + 
\Prob[A\sim V^R]{\exists  i.~\Inf_i(T_{1-\delta}g_A)>\tau} \mper
\end{align*}
Taking $\zeta = c\beta^{1/3}$ for the constant $c$ in 
Lemma \pref{lem:influence-decoding} then proves the claim.
\end{proof}

\subsubsection{Decoding a very small non-expanding set in \texorpdfstring{$G$}{G}}

The following lemma is a slight adaptation of a result in
\cite{RaghavendraS10} (a reduction from \SSE to \uniquegames).
We present a sketch of the proof in \pref{sec:sse-to-ug}.

\begin{lemma}
  \torestate{
 \label{lem:smallset-soundness}
Let $G$ be graph with vertex set $V$. 
Let a distribution on pairs of tuples $(\tA,\tB)$ be defined by choosing
$A \sim V^R, B \sim G^{\tensor R}(A)$ and then $\tA \sim T_V(A), \tB \sim T_V(B)$. Let $F \from V^R \to [R]$
be a function such that over the choice of random tuples and two random permutations 
$\pi_A,\pi_B \in \Pi_k$,
\[ \Pr_{(\tA,\tB)} \Prob[\pi_A,\pi_B \sim \Pi_k]{\inv{\pi_A}(F(\pi_A(\tA))) = \inv{\pi_B}(F(\pi_B(\tB)))}
~~\geq~~ \zeta.\]
Then there exists a set $S \subseteq V$ with $\vol(S) \in \insquare{\frac{\zeta}{16R}, \frac{3k}{\e_V R}}$ satisfying
$\Phi_G(S) \leq 1- \tfrac{\zeta}{16k}$.
}
\end{lemma}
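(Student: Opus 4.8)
The plan is to follow the soundness analysis of the reduction from \SSE to \uniquegames in \cite{RaghavendraS10}, modified to accommodate the block-preserving permutation family $\Pi_k$ in place of the full symmetric group; the factor $k$ in the conclusion is the price of this modification. First I would reformulate the hypothesis in terms of coordinate distributions. For each tuple $A\in V^R$ let $f_A$ be the distribution on $[R]$ given by $f_A(i)=\Prob[\pi\sim\Pi_k]{\inv{\pi}(F(\pi(A)))=i}$. Expanding the inner probability in the hypothesis over the two independent permutations $\pi_A,\pi_B$ turns it into $\E_{(\tA,\tB)}\iprod{f_{\tA},f_{\tB}}\ge\zeta$, i.e.\ if $i,i'$ are drawn independently from $f_{\tA},f_{\tB}$ then $i=i'$ with probability at least $\zeta$ over the whole experiment. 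Two consequences of $\Pi_k$ are used: (i) for a transposition $\tau$ of two coordinates in a common block, $f_{\tau A}(i)=f_A(\tau(i))$, hence $\E_A[f_A(i)]$ is constant on each of the $k$ blocks; and (ii) writing the agreement probability as $\sum_{j=1}^{k}\Prob{i=i'\text{ and }i\in\text{block }j}$, the block $j^*$ maximizing the summand satisfies $\Prob{i=i'\text{ and }i\in j^*}\ge\zeta/k$, and since this is at most $\sum_{i\in j^*}\E_A[f_A(i)]=(R/k)\,\E_A[f_A(i_0)]$ for any $i_0\in j^*$, also $\E_A[f_A(i_0)]\ge\zeta/R$.

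Next I would decode a set in $G$. Consider the experiment: sample $A\sim V^R$, $B\sim G^{\tensor R}(A)$, $\tA\sim T_V(A)$, $\tB\sim T_V(B)$, $i\sim f_{\tA}$, $i'\sim f_{\tB}$, together with the record of which coordinates $T_V$ resampled. For $i_0\in j^*$ put $r(v)\defeq\E_{\tA}\brac{f_{\tA}(i_0)\cdot\Ind[\tA_{i_0}=v]}$. By the transposition symmetry $r$ does not depend on the choice of $i_0\in j^*$; moreover $r(v)\le\mu(v)$ since $f_{\tA}(i_0)\le1$ and $\tA_{i_0}\sim\mu$, and $\sum_v r(v)=\E_A[f_A(i_0)]\ge\zeta/R$. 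A useful auxiliary fact is that, conditioned on $\tA$, the set of resampled coordinates is a fresh $\e_V$-random subset of $[R]$; in particular the event ``$i_0$ is not resampled'' is independent of the value $i$ drawn from $f_{\tA}$, and whenever $i_0$ is resampled the vertex $\tA_{i_0}$ is uniform, which keeps $r$ from being too concentrated. The set $S$ is then an appropriate level set $S=\set{v:r(v)\ge t}$: choosing $t$ of order $\e_V/\card V$ gives $\card S\le(\sum_v r(v))/t$ and hence $\vol(S)\le 3k/(\e_V R)$, while $r(v)\le\mu(v)$ and $\sum_v r(v)\ge\zeta/R$ force $\vol(S)\ge\zeta/(16R)$ once the $r$-mass lost below the threshold is accounted for (this is the step that needs $\e_V$ small relative to $\zeta/k$).

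Finally I would bound the expansion of $S$. With probability at least $\zeta/k$ the coordinates agree, $i=i'=i^\star\in j^*$, and then the two decoded vertices $\tA_{i^\star},\tB_{i^\star}$ are, coordinate-wise, the endpoints of a single edge of the perturbed graph $\hat G=(1-\e_V)^2\,G+\bigparen{1-(1-\e_V)^2}K$, where $K$ is the complete graph contributed by resampled coordinates. Feeding this into the level-set identities for $r$ — a Cheeger-type averaging exactly as in \cite{RaghavendraS10} — produces a single level set $S$ with $\hat G(S,S)\ge\tfrac{\zeta}{O(k)}\vol(S)$, and since $\hat G\sge(1-\e_V)^2G$ and $\vol(S)$ is tiny compared with $\zeta/k$ in the parameter regime of interest, this passes to $G(S,S)\ge\tfrac{\zeta}{16k}\vol(S)$, i.e.\ $\Phi_G(S)\le 1-\zeta/(16k)$. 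The main obstacle I anticipate is the interaction flagged above: the coordinate $i$ is drawn from $f_{\tA}$, which already depends on the noisy tuple $\tA$, so a priori the decoder could systematically favor or avoid resampled coordinates, or correlate with their values; the resolution, as in \cite{RaghavendraS10}, is that conditioned on $\tA$ the resampling pattern is an independent $\e_V$-random set — decoupling ``clean'' from ``selected'' wherever it is needed — supplemented by a short averaging argument controlling the residual correlation between a freshly resampled coordinate value and the event that it is selected.
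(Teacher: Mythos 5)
Your reduction of the block constraint to a single heavy block $j^*$ and a representative coordinate $i_0$ via the transposition symmetry of $\Pi_k$ is a clean and valid way to absorb the factor $k$, and is arguably more direct than the paper's route (which instead reduces to the fully-random-permutation setting via \pref{clm:total-to-partial}, embedding an $R/k$-tuple into a random block of a longer tuple). That is not where the problem lies.

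The gap is in the decoding step. Your $r(v)=\E_{\tA}\brac{f_{\tA}(i_0)\Ind[\tA_{i_0}=v]}$ marginalizes over the \emph{context} --- the other $R-1$ coordinates of $\tA$ --- and then you want to extract a non-expanding level set of $r$. But $f_{\tA}(i_0)$ depends on all of $\tA$, not just $\tA_{i_0}$, and the tuples $\tA,\tB$ are correlated across all shared coordinates. What the hypothesis yields (after your block argument) is $\E_{(\tA,\tB)}\brac{f_{\tA}(i_0)f_{\tB}(i_0)}\ge\zeta/R$, and this quantity does \emph{not} factor through $r$: it is not controlled by $\E_{(v,v')\sim G}\brac{\tilde r(v)\tilde r(v')}$ with $\tilde r(v)=r(v)/\mu(v)$. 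In effect you would be trying to deduce $\E_{(v,v')\sim G}\brac{\E[X\mid \tA_{i_0}=v]\,\E[Y\mid \tB_{i_0}=v']}$ is large from the fact that $\E[XY]$ is large, where $X=f_{\tA}(i_0)$ and $Y=f_{\tB}(i_0)$; that implication is false in general precisely because $X,Y$ can correlate through the shared context rather than through $\tA_{i_0},\tB_{i_0}$. So the level set of $r$ can be essentially uniform (proportional to $\mu$) even when the agreement probability is as large as the hypothesis asserts, and then no threshold produces a non-expanding set. The ``auxiliary fact'' you flag --- independence of the resampling pattern --- addresses a different, milder issue and does not repair this.

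The paper avoids the problem by never averaging the context out: it defines a two-argument function $F_U(v)$ (``how often is the inserted vertex $v$ decoded, in context $U\in V^{R/k-1}$''), establishes the analogue of your three facts as \pref{prop:properties}, and then crucially applies the \glorifiedmarkov (\pref{lem:glorifiedmarkov}) to \emph{fix} a specific pair of contexts $(U^*,W^*)$ for which the edge correlation $\E_{(v_1,v_2)\sim E}F_{U^*}(v_1)F_{W^*}(v_2)$ is simultaneously large in absolute terms and large relative to $\E F_{U^*}+\E F_{W^*}$. Only after this conditioning does the rounding (including each $v$ independently with probability $\tfrac12(F_{U^*}(v)+F_{W^*}(v))$, rather than a level set) give both the volume window and $\Phi_G(S)\le 1-\zeta/(16k)$. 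Your proposal omits this context-fixing step entirely, and the final ``Cheeger-type averaging'' you gesture at cannot substitute for it, since the step it would be averaging over is exactly the one that must be pinned down.
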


Putting together \pref{lem:smallset-soundness} and
\pref{lem:uniquegame-decoding}, we get the following lemma --- the
main lemma for the soundness of the reduction.

\begin{lemma}
  \label{lem:soundness}
  Let $G$ be a graph with vertex set $V$.
  Let $H$ be the reduction in \pref{fig:ssereduction} applied to $G$
  with parameters $R,q\in\N$, $\e_V,\beta>0$ and $\rho\in(0,1)$.
  The vertex set of $H$ is $V^R\times \Omega^R$, where
  $\Omega=[q]\times \set{\bot,\top}_\beta$.
  Then there exists
  $\zeta=\zeta(\beta,q,\rho)>0$ such that either 
  \begin{displaymath}
    \forall f\from V^R\times \Omega^R\to [0,1].~\forall \gamma>0.\quad
    \iprod{f,Hf} 
    < \Gamma_{\rho^2}\bigparen{\E f} 
    +5\beta^{1/3} \mcom
  \end{displaymath}
  or there exists a vertex set $S\sse V$ with $\mu(S)\in [\tfrac \zeta
  {R}, \tfrac {3k}{\e_V R}]$ and $\cond_G(S)\le 1-\zeta/k$.
\end{lemma}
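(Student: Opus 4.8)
The plan is to assemble \pref{lem:soundness} directly from \pref{lem:uniquegame-decoding} and \pref{lem:smallset-soundness}, which between them already carry all of the technical weight (the invariance principle together with the influence-decoding argument in the former, and the analysis of the \cite{RaghavendraS10} reduction from \SSE to \uniquegames in the latter). First I would apply \pref{lem:uniquegame-decoding} to the parameters $\beta,q,\rho$ to obtain a constant $\zeta_0=\zeta_0(\beta,q,\rho)>0$, and then fix $\zeta\defeq\zeta_0/16$ once and for all; this is the $\zeta$ claimed in the statement (note it depends only on $\beta,q,\rho$, as required, not on $k$).

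With $\zeta$ fixed, suppose the first alternative of \pref{lem:soundness} fails, i.e.\ there is some $f\from V^R\times\Omega^R\to[0,1]$ with $\iprod{f,Hf}\ge \Gamma_{\rho^2}(\E f)+5\beta^{1/3}$ (the universally quantified $\gamma$ plays no role, since it does not appear in the displayed inequality). Then the first alternative of \pref{lem:uniquegame-decoding} fails for this particular $f$, so its second alternative must hold: there is an assignment $F\from V^R\to[R]$ for which the quantity in \pref{eq:ug-successprobability} is at least $\zeta_0$. I would then observe that \pref{eq:ug-successprobability} is exactly the left-hand side of the hypothesis of \pref{lem:smallset-soundness} for this $F$ --- the pair $(\tA,\tB)$ generated by $A\sim V^R$, $B\sim G^{\tensor R}(A)$, $\tA\sim T_V(A)$, $\tB\sim T_V(B)$, and the agreement probability over $\pi_A,\pi_B\sim\Pi_k$. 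Hence \pref{lem:smallset-soundness}, applied with threshold $\zeta_0$, produces a set $S\sse V$ with $\mu(S)\in[\zeta_0/(16R),\,3k/(\e_V R)]$ and $\cond_G(S)\le 1-\zeta_0/(16k)$. Since $\zeta_0/16=\zeta$, this is precisely a vertex set with $\mu(S)\in[\zeta/R,\,3k/(\e_V R)]$ and $\cond_G(S)\le 1-\zeta/k$, the second alternative of the statement; so one of the two alternatives always holds.

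I do not expect a genuine obstacle in this final assembly. The only points requiring care are the constant bookkeeping (choosing $\zeta=\zeta_0/16$ so that both the volume interval and the expansion bound match the outputs of \pref{lem:smallset-soundness}) and checking that the success-probability expression returned by the second alternative of \pref{lem:uniquegame-decoding} is literally the hypothesis consumed by \pref{lem:smallset-soundness}. All the substantive difficulty is already packaged inside those two lemmas and their proofs, in particular the influence-decoding step behind \pref{lem:uniquegame-decoding} and the adaptation of the \cite{RaghavendraS10} reduction behind \pref{lem:smallset-soundness}.
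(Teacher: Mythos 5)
Your proposal is correct and is exactly the paper's own derivation: the paper states \pref{lem:soundness} as an immediate corollary of \pref{lem:uniquegame-decoding} and \pref{lem:smallset-soundness} with no further argument, and your assembly (fixing $\zeta=\zeta_0/16$ so the volume interval $[\zeta_0/(16R),\,3k/(\e_V R)]$ and the expansion bound $1-\zeta_0/(16k)$ from \pref{lem:smallset-soundness} match the stated $[\zeta/R,\,3k/(\e_V R)]$ and $1-\zeta/k$) supplies the only bookkeeping needed. Your remark that the universally quantified $\gamma$ is vacuous in the displayed inequality is also right; it is a leftover from the intermediate \pref{lem:expansion-vs-influence} where $\gamma$ was already specialized to $\beta^{1/3}$ in the proof of \pref{lem:uniquegame-decoding}.
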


\subsection{Putting things together}
\label{sec:putt-things-together}

\restatetheorem{thm:general-sse}

\begin{proof}
The follows by proper choice of parameters for the reduction in
\pref{fig:ssereduction}. Given $q,\e,\gamma$, we choose the various 
parameters in the reduction as below:

\begin{itemize}
\item $\rho = \sqrt{1-\tfrac\e2}$.
\item $\beta = \min\inbrace{\tfrac{\gamma^3}{200}, \e}$, so that the error 
$5\beta^{1/3} < \gamma$ in \pref{lem:soundness} and the error 
$\tfrac{1-\rho^2}{\rho^2} \beta = O(\e^2)$ in \pref{lem:completeness}.
\item $k = \Omega(\log(\nfrac 1 \e))$, so that the $2^{-\Omega(k)}$ error
term in \pref{lem:completeness} is $O(\e^2)$.
\item $\e_V = \e^2$ and $\eta = \min\inbrace{\e^2,\tfrac{\zeta}{k}}$.
Here, $\zeta = \zeta(\beta,q,\rho)$ is the constant given by \pref{lem:soundness}.
The above choices ensure that the error term $\e_V + \eta$ in \pref{lem:completeness}
are $O(\e^2)$ and $\eta \leq \tfrac{\zeta}{k}$ for applying \pref{lem:soundness}.
\item $M = \max \inbrace{\tfrac{k}{\beta\zeta}, \tfrac{3\beta}{\e_V}}$.
\item $R = \tfrac{k}{\beta \delta}$, where $\delta \in (0,\eta)$ is the one for which we
intend to show a reduction from \smallsetexpansion($\eta,\delta,M$).
\end{itemize}

Given and instance $G=(V,E)$ of \smallsetexpansion($\eta,\delta,M$),
let $H$ be the graph obtained from the reduction in
\pref{fig:ssereduction} with these parameters.
  
  From \pref{lem:completeness}, we get that the {\yes} case of
  \smallsetexpansion($\eta,\delta,M$) implies the {\yes} case of the above
  problem. On the other hand \pref{lem:soundness} gives that a
  contradiction to the {\no} case of the above problem produces a
  set $S$ in $G$ with measure between $\tfrac{\zeta}{R}$ and
  $\tfrac{3k}{\e_V R}$ with $\cond_G(S) \leq 1-\tfrac{\zeta}{k}$.
  By our choice of parameters, this is a set of measure between $\tfrac{\delta}{M}$ and $M\delta$
with expansion $1-\tfrac{\zeta}{k} \leq 1-\eta$. This contradicts the {\no} case of
  \smallsetexpansion($\eta,\delta,M$).
\end{proof}

\restatetheorem{thm:expanding-unique-games}
\begin{proof}
Let all the parameters for the reduction in \pref{fig:ssereduction} be chosen as in 
the proof for \pref{thm:general-sse}, replacing $\e$ by $2\e$. Let $H$ be the
graph generated by the reduction starting from an instance $G$ of \smallsetexpansion($\eta,\delta,M$).
Let $\uginst$ be the \uniquegames instance defined on the graph $H/[q]$, as described 
in \pref{rem:expanding-ug-reduction}.

We claim that any partition $S_1,\ldots,S_q$ of the vertices in $H$, satisfying the first property
in \pref{lem:completeness}, corresponds to an assignment to the vertices in $H/[q]$ and vice-versa.
A partition is simply a function $F \from V_H \to [q]$. Restricting the function to the
representatives of each equivalence class gives an assignment for the vertices in $H/[q]$.
Note that here $F$ also satisfies that 
$(A,x,z) = \rep{(A,x,z)} + l \implies F((A,x,z)) = F\inparen{\rep{(A,x,z)}} + l$.
Similarly, given an assignment $F$, we can extend it to all the vertices in $H$ by \emph{defining}
$F((A,x,z)) = F\inparen{\rep{(A,x,z)}} + l$ if $(A,x,z) = \rep{(A,x,z)} + l$.

In the \yes case, we construct an assignment to the \uniquegames instance $\uginst$ using the
partition $S_1,\ldots, S_q$. The fraction of edges  $\inparen{\pi_B(B',x_B',z_B),
  \pi_C(C',x_C',z_C)}$ that are not satisfied is exactly the probability that
$F\inparen{\pi_B(B',x_B',z_B)} \neq F\inparen{\pi_C(C',x_C',z_C)}$ for a random edge.
However, this is exactly $\E_{l \in [q]}{\Phi_H(S_l)}$ which is at most $2\e+o(\e)$ by
\pref{thm:general-sse}.

In the \no case, we note that we can construct a partition $S_1,\ldots,S_q$ from any assignment
$F$. The fraction of unsatisfied edges is again $\E_{l \in [q]} \Phi_H(S_l) \geq 1 -
q\inparen{\Gamma_{1-\e}(\nfrac 1 q) + \gamma}$ by \pref{thm:general-sse}. Also, any set $S$
in $H/[q]$ corresponds to a set $\tilde{S}$ in $H$ with $\mu(\tilde{S}) = \mu(S)$, where
$\tilde{S}$ contains all the vertices for each class in $S$. The edges leaving $S$ and $\tilde{S}$
are the same and hence their expansion is identical.
\end{proof}

\subsection*{Acknowledgments}

We are grateful to Subhash Khot for suggesting
that our techniques should also show that \uniquegames is \ssehard on
graph with high (small-set) expansion
(\pref{thm:expanding-unique-games}).
We also thank Oded Regev for insightful discussions and Boaz Barak 
for helpful comments on this manuscript.

\phantomsection
\addcontentsline{toc}{section}{References}
\bibliographystyle{amsalpha}
\bibliography{ssereductions}

\appendix

\section{Further Proofs}
\subsection{Reduction from \SSE to \uniquegames}
\label{sec:sse-to-ug}

In this section, we sketch a proof of the following slight adaption of
a result in \cite{RaghavendraS10}.

\restatelemma{lem:smallset-soundness}

\Mnote{Creating a small claim below for the reduction from partial to total unique games.}
Let $R' = R/k$ and let $\tA_{R'},\tB_{R'}$ denote tuples of length $R'$ generated by a
a process similar to the used for generating $\tA,\tB$ (which have length $R$).
Using the reduction from partial to total unique games in
\cite{RaghavendraS10}, we can show the following for completely random permutations 
(instead of block-wise random) permutations $\pi_A', \pi_B' \from [R'] \to [R']$.

\begin{claim}\label{clm:total-to-partial}
Given a function $F \from V^R \to [R]$ as above, there exists a function 
$F' \from V^{R'} \to [R']$ such that
\begin{displaymath}
   \Pr_{(\tA_{R'},\tB_{R'})} \Prob[\pi_A',\pi_B' \sim S_{R'}]{\inv{\pi_A'}(F'(\pi_A'(\tA_{R'}))) = \inv{\pi_B'}(F'(\pi_B'(\tB_{R'})))}
~~\geq~~ \zeta/k.
\end{displaymath}
\end{claim}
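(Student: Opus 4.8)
The plan is to descend from the block-respecting instance on $V^R$ to the fully random instance on $V^{R'}$ by restricting the assignment to a single block, paying an overall factor of $k$.

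First I would set up the block decomposition. Identify $[R]$ with $[k]\times[R']$ (a coordinate is a pair: block index, position inside the block). Under this identification a permutation $\pi\in\Pi_k$ is exactly a $k$-tuple $(\sigma_1,\dots,\sigma_k)$ with $\sigma_j\in S_{R'}$ acting on the $j$-th block, and a tuple $X\in V^R$ is a $k$-tuple of blocks $X=(X^{(1)},\dots,X^{(k)})$ with $X^{(j)}\in V^{R'}$. Since $A\sim V^R$ has i.i.d.\ coordinates and $B\sim G^{\tensor R}(A)$, $\tA\sim T_V(A)$, $\tB\sim T_V(B)$ all act coordinate-wise, the pair $(\tA,\tB)$ is distributed as $k$ independent copies of the length-$R'$ pair $(\tA_{R'},\tB_{R'})$, arranged one per block; in particular block $j$ of $(\tA,\tB)$ has exactly the distribution of $(\tA_{R'},\tB_{R'})$, and the $2k$ permutations $\sigma^A_1,\dots,\sigma^A_k,\sigma^B_1,\dots,\sigma^B_k$ are i.i.d.\ uniform on $S_{R'}$.

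Next I would rewrite the constraint in block coordinates. Writing $F(\pi_A(\tA))=(\beta_A,\rho_A)$ and $F(\pi_B(\tB))=(\beta_B,\rho_B)$, and using that each $\pi\in\Pi_k$ maps every block to itself, one gets $\inv{\pi_A}\bigparen{F(\pi_A(\tA))}=\bigparen{\beta_A,\inv{(\sigma^A_{\beta_A})}(\rho_A)}$ and similarly for $B$, so the unique-games constraint is satisfied exactly when $\beta_A=\beta_B$ and $\inv{(\sigma^A_{\beta_A})}(\rho_A)=\inv{(\sigma^B_{\beta_B})}(\rho_B)$. Summing this disjoint event over the common value $j=\beta_A=\beta_B$ and averaging, there is an \emph{active block} $j^\ast\in[k]$ with
\[
  \Pr\insquare{\,\beta_A=\beta_B=j^\ast\ \wedge\ \inv{(\sigma^A_{j^\ast})}(\rho_A)=\inv{(\sigma^B_{j^\ast})}(\rho_B)\,}\ \ge\ \zeta/k .
\]
I would then restrict $F$ to block $j^\ast$: for a context $\vec w=(w^{(j)})_{j\ne j^\ast}\in(V^{R'})^{k-1}$ let $F_{\vec w}(u)$ be the value of $F$ on the tuple with $u$ in block $j^\ast$ and $w^{(j)}$ in block $j\ne j^\ast$, and let $F'_{\vec w}(u)\in[R']$ be the within-block coordinate of $F_{\vec w}(u)$ when $F_{\vec w}(u)$ lies in block $j^\ast$, and $1$ otherwise. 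With $\pi'_A=\sigma^A_{j^\ast}$, $\pi'_B=\sigma^B_{j^\ast}$, $\tA_{R'}=\tA^{(j^\ast)}$, $\tB_{R'}=\tB^{(j^\ast)}$, the event above is — conditioned on the non-$j^\ast$ blocks having effective context equal to $\vec w$ on \emph{both} endpoints — precisely the success event of $F'_{\vec w}$ on the length-$R'$ instance with fully random permutations. So it would remain to choose a context whose $F'_{\vec w}$ succeeds with probability at least $\zeta/k$, and output $F'=F'_{\vec w}$.

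The step that needs care — and the reason the claim explicitly appeals to the partial-to-total reduction of \cite{RaghavendraS10} — is that in the original instance the effective contexts seen on the two endpoints of a constraint are $\bigparen{\sigma^A_j(\tA^{(j)})}_{j\ne j^\ast}$ and $\bigparen{\sigma^B_j(\tB^{(j)})}_{j\ne j^\ast}$, which are correlated but not equal, whereas $F'_{\vec w}$ hardwires a single context used on both sides; in addition $F$ need not land in block $j^\ast$ at all, so the restriction is only a \emph{partial} $[R']$-assignment. Both of these are exactly what the partial-to-total reduction of \cite{RaghavendraS10} absorbs: since the block permutations $\sigma^A_j,\sigma^B_j$ for $j\ne j^\ast$ are independent and uniform, the restricted object is a partial assignment of the form handled there, and running that reduction yields a \emph{total} function $F'\from V^{R'}\to[R']$ whose success probability on the length-$R'$ instance with fully random permutations is at least $\zeta/k$. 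I expect this context-reconciliation — not the block bookkeeping, which is routine — to be the only real obstacle; the factor $k$ in the conclusion is just the loss from the union bound over the active block.
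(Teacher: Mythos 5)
Your block decomposition and your identification of an ``active block'' $j^\ast$ via a union bound over $[k]$ are both sound and match the skeleton of the paper's argument. You also correctly pinpoint the real obstacle: after restricting $F$ to block $j^\ast$, the two endpoints of a test constraint see non-active-block contexts $\bigparen{\sigma^A_j(\tA^{(j)})}_{j\ne j^\ast}$ and $\bigparen{\sigma^B_j(\tB^{(j)})}_{j\ne j^\ast}$ that are \emph{correlated but not equal}, so hard-wiring a single context $\vec{w}$ into $F'_{\vec{w}}$ yields a test with the wrong joint distribution.

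The gap is in how you propose to close this. Appealing to the partial-to-total reduction of \cite{RaghavendraS10} does not resolve the context mismatch: that reduction lifts a labeling that is defined on only a subset of vertices to a total labeling, with a quantitative loss. It says nothing about replacing two correlated-but-distinct contexts by a single shared one. In your setup, the partial-assignment issue is the minor one — $F_{\vec w}(u)$ might not land in block $j^\ast$, and this is easily handled by defaulting to an arbitrary label, as the paper does. The major issue is that the pair of full $R$-tuples you would feed $F$, namely $(\vec w, \pi_A'(\tA_{R'}))$ and $(\vec w, \pi_B'(\tB_{R'}))$, has a different joint law from $(\pi_A(\tA),\pi_B(\tB))$ conditioned on block-$j^\ast$ agreement, because in the latter the non-active-block parts form a correlated pair, not two copies of the same tuple. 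No partial-to-total lifting repairs that.

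The paper's actual fix is a specific coupling trick that your proposal lacks. Fix once and for all a single correlated context pair $(\tA_{R-R'},\tB_{R-R'})$ drawn from the correct $(R-R')$-coordinate distribution, and define a \emph{randomized} $F'$: for each input $R'$-tuple \emph{independently}, pick $T\in\{\tA_{R-R'},\tB_{R-R'}\}$ uniformly together with a fresh block-respecting permutation $\pi^{(j)}$, embed the input as block $j$ of $\pi^{(j)}(T)$, and read off $F$'s answer in block $j$ (or default). When $F'$ is applied to the two endpoints of a constraint, the two independent choices land on the ``correctly paired'' contexts $(T_A,T_B)=(\tA_{R-R'},\tB_{R-R'})$ or its swap with probability $\tfrac12$, and conditioned on that event the composed test has \emph{exactly} the joint law of the original block-$j$ test. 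This costs a factor $2$ on top of the factor $k$, giving $\zeta/2k$ (which is in fact what the paper uses downstream, setting $\zeta'=\zeta/2k$, even though the claim statement quotes $\zeta/k$), and an averaging step then derandomizes $F'$. This independent-per-input randomized embedding is the missing idea; as you say, the block bookkeeping itself is routine.
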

\begin{proof}
We construct a randomized function $F'$ which given an $R'$-tuple,
embeds it as one of the blocks (of size $R'$) in a random $R$-tuple,
and then outputs a value according to the value of $F$ on the $R$-tuple.

Formally, let $\tA_{R-R'}, \tB_{R-R'}$ denote tuples of size $R-R'$ generated
by independently picking each pair of coordinates to be an edge in $G$ with 
noise $\e_V$. For $j \in [k]$, let $\tA_{R-R'} +_j \tA_{R'}$ denote an $R$-tuple
generated by inserting $\tA_{R'}$ after the $(j-1)^{th}$ block in $\tA_{R-R'}$.
Let $\pi_A^{(j)}$ be a random permutation in $\Pi_k$ which is equal to
identity on the $j^{th}$ block. To define $F'$, we then generate a random
$j$, $\tA_{R-R'}, \tB_{R-R'}$ and fix it {\em globally}. For each input
$\tA_{R'}$, we then {\em independently} choose $T_A \in \{\tA_{R-R'}, \tB_{R-R'}\}$,
a permutation $\pi_A^{(j)}$, and define $F'(\tA_{R'})$ as:
\[
F'(\tA_{R'}) = \left\{\begin{array}{cl}
\inv{(\pi_A^{(j)})}\inparen{F\inparen{\pi_A^{(j)}(T_A) +_j \tA_{R'}}}- \nfrac{(j-1)R}{k} & 
\quad \inv{(\pi_A^{(j)})}\inparen{F\inparen{\pi_A^{(j)}(T_A) +_j \tA_{R'}}} \in (\nfrac{(j-1)R}{k}, \nfrac{jR}{k}] \\
~\\
1 & \quad\text{otherwise}
\end{array}\right.
\]
Let $\pi_A$ be the permutation which is $\pi_A'$ on the $j^{th}$ block and $\pi_A^{(j)}$ elsewhere.
Define $\pi_B$ similarly. Note that both $\pi_A,\pi_B$ are distributed as random elements of $\Pi_k$.
Conditioned on $T_A = \tA_{R-R'}, T_B = \tB_{R-R'}$ (or vice-versa), the required probability is at least
\begin{align*}
& \Pr_j \Pr_{(\tA_{R-R'},\tB_{R-R'})} \Pr_{(\tA_{R'},\tB_{R'})} 
\Prob[\pi_A,\pi_B \sim \Pi_k]{\inv{\pi_A}(F(\pi_A(\tA_{R-R'} +_j \tA_{R'}))) = 
\inv{\pi_B}(F(\pi_B(\tB_{R-R'} +_j \tB_{R'})))} \\
=~~ & \frac{1}{k} \cdot \Pr_{(\tA,\tB)} 
\Prob[\pi_A,\pi_B \sim \Pi_k]{\inv{\pi_A}(F(\pi_A(\tA))) = \inv{\pi_B}(F(\pi_B(\tB)))} 
~=~ \frac{\zeta}{k}\mper
\end{align*}
Since we have $T_A = \tA_{R-R'}, T_B = \tB_{R-R'}$ or vice-versa with probability 1/2, 
the required probability is at least $\zeta/2k$.
\end{proof}

Let $\zeta'$ denote $\zeta/2k$.

To construct the set $S\sse V$, we proceed as in \cite{RaghavendraS10} by defining
the influence of a 
single vertex on the output of $F$. For $U \in V^{{R'}-1}$ and $v \in V$, let $\tU \sim T_V(U)$
and $\tv \sim T_V(v)$. For $i \in {R'}$, we use $\tU +_i \tv$ to denote the tuple 
$(\tU_1, \ldots, \tU_{i-1}, v, \tU_i, \ldots, \tU_{R'}) \in V^{{R'}}$ in which $v$ is inserted at the
$i$th position. We define the function $F_U(v)$, which measures how often is the index of $v$
chosen by $F$, when applied to a random permutation $\pi$ of $\tU +_i \tv$.

\[ F_U(v) ~\seteq~ \E_{\tU \sim T_V(U)} \E_{\tv \sim T_V(v)} 
\Pr_{i \in [{R'}]} \Prob[\pi \in S_{R'}]{F(\pi(\tU +_i \tv)) = \pi(i)} \]

We shall need the following (slight variants of) statements proved in \cite{RaghavendraS10}. 
We include the proofs in the appendix for completeness.

\Mnote{Need to include proofs of these with modified notation.}
\begin{lemma}[\glorifiedmarkov]
  \label{lem:glorifiedmarkov}
  Let $\Omega$ be a probability space and let $X,Y\from \Omega\to
  \R_+$ be two jointly distributed non-negative random variables over
  $\Omega\,$.
  Suppose $\E X \le \gamma\E Y$.
  Then, there exists $\omega\in \Omega$ such that $X(\omega)\le
  2\gamma Y(\omega)$ and $Y(\omega)\ge \E Y/2$.
\end{lemma}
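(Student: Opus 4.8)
The plan is a short first-moment argument. Set $A=\{\omega\in\Omega: X(\omega)\le 2\gamma Y(\omega)\}$ and $B=\{\omega\in\Omega: Y(\omega)\ge \E Y/2\}$; since the conclusion asks for a point in $A\cap B$, it suffices to prove $A\cap B\ne\emptyset$, i.e.\ that the two ``bad'' events $\{X>2\gamma Y\}$ and $\{Y<\E Y/2\}$ do not together cover $\Omega$.

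First I would dispose of the degenerate case $\E Y=0$: then $Y=0$ almost everywhere, so $\E X\le\gamma\E Y=0$ forces $X=0$ almost everywhere, and any $\omega$ in the full-measure (hence nonempty) set $\{X=0\}$ satisfies $X(\omega)\le 2\gamma Y(\omega)$ and $Y(\omega)\ge 0=\E Y/2$. So assume $\E Y>0$. Now suppose toward a contradiction that $A\cap B=\emptyset$, i.e.\ $B\subseteq\{X>2\gamma Y\}$. Splitting $\E Y=\E[\,Y\Ind_{B}\,]+\E[\,Y\Ind_{B^{c}}\,]$ and using $Y<\E Y/2$ on $B^{c}$ gives $\E[\,Y\Ind_{B^{c}}\,]\le(\E Y/2)\,\Pr[B^{c}]\le \E Y/2$, whence $\E[\,Y\Ind_{B}\,]\ge \E Y/2>0$ and in particular $\Pr[B]>0$. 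Then, since $X>2\gamma Y$ holds pointwise on the positive-measure set $B$,
\[
\E X \;\ge\; \E[\,X\Ind_{B}\,]\;>\;2\gamma\,\E[\,Y\Ind_{B}\,]\;\ge\;\gamma\,\E Y\;\ge\;\E X,
\]
using the hypothesis $\E X\le\gamma\E Y$ in the last step. This contradiction shows $A\cap B\ne\emptyset$, which is exactly the claim.

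There is no real obstacle here; the only points needing a little care are the degenerate case $\E Y=0$ and the strictness of the middle inequality in the displayed chain, for which one uses $\Pr[B]>0$ (itself guaranteed by $\E[\,Y\Ind_{B}\,]\ge \E Y/2>0$). Conceptually this is just Markov's inequality applied under the probability measure on $\Omega$ with density proportional to $Y$ (under which $\E[X/Y]=\E X/\E Y\le\gamma$ and $Y$ is rarely smaller than $\E Y/2$), but the direct estimate above avoids introducing that auxiliary measure.
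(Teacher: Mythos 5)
Your argument is correct, and there is in fact no proof in the paper to compare it against: despite the surrounding text promising to include proofs of the statements imported from \cite{RaghavendraS10} ``for completeness,'' the appendix only proves \pref{clm:total-to-partial} and then completes the proof of \pref{lem:smallset-soundness} \emph{assuming} \pref{lem:glorifiedmarkov} and \pref{prop:properties}, without ever writing out a proof of \pref{lem:glorifiedmarkov} itself. As for the argument: the contradiction route is the natural one, you correctly set aside the degenerate case $\E Y=0$, and the one point that genuinely needs care, the strictness of $\E[X\Ind_B] > 2\gamma\,\E[Y\Ind_B]$, is justified because $X-2\gamma Y>0$ holds pointwise on $B$ and you have established $\Pr[B]>0$ from $\E[Y\Ind_B]\ge\E Y/2>0$. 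The reweighted-measure reading you sketch at the end (Markov for $X/Y$ under the probability measure with density proportional to $Y$, together with the bound $\Pr_{\tilde\mu}[Y<\E Y/2]<\nfrac12$, then a union bound) is an equally valid and slightly more conceptual route to the same conclusion, and either would be a reasonable thing to insert into the paper's appendix.
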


\begin{proposition}
\label{prop:properties}
Let $F \from V^{R'} \to [{R'}]$ satisfy
$\Pr_{(\tA,\tB)} \Prob[\pi_A,\pi_B]{\inv{\pi_A}(F(\pi_A(\tA))) = \inv{\pi_B}(F(\pi_B(\tB)))}\geq \zeta'$,
and the functions $F_U \from V \to [0,1]$ be defined as above. Then,
  \begin{enumerate}
  \item 
    \begin{equation}
      \E_{U \sim V^{{R'}-1}} \E_{v \sim V} F_U(v)
      = \tfrac1{R'}
      \mcom
      \label{eq:soundness-property-1}
    \end{equation}
  \item for all $U\in E^{{R'}-1}$,
    \begin{equation}
      \E_{v\sim V} F_U(v)
      \le \tfrac{2} {\e_V {R'}}
      \mcom
      \label{eq:soundness-property-2}
    \end{equation}
  \item 
    \begin{equation}
      \E_{(U,W) \in E^{{R'}-1}} %
      ~\Ex[(v_1,v_2) \in E]{F_U(v_1)F_W(v_2)} ~\geq~ \tfrac{\zeta'}{{R'}}
      \mper
      \label{eq:soundness-property-3}
    \end{equation}
  \end{enumerate}
\end{proposition}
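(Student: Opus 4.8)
The plan is to prove all three parts by unfolding the definition of $F_U$ and using two elementary observations. For $i\in[R']$ write
\[
  \phi_U(v,i)\defeq \E_{\tU\sim T_V(U)}\,\E_{\tv\sim T_V(v)}\,\Pr_{\pi\sim S_{R'}}\bigl[F(\pi(\tU+_i\tv))=\pi(i)\bigr],
\]
so that $F_U(v)=\E_{i\sim[R']}\phi_U(v,i)$. The first observation is that $\phi_U(v,i)$ does not depend on $i$: for any $i,i'$ there is a fixed permutation $\sigma$ (depending only on $i$ and $i'$, not on $\tU,\tv$) with $\sigma(X+_i y)=X+_{i'}y$ for all $X\in V^{R'-1}$, $y\in V$, and with $\sigma(i)=i'$, so replacing $\pi$ by $\pi\sigma$ shows $\phi_U(v,i)=\phi_U(v,i')$; hence $\phi_U(v,i)=F_U(v)$ for every $i$. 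The second observation is that if $U\sim V^{R'-1}$ then $\tU\sim T_V(U)$ has uniform marginal on $V^{R'-1}$, and likewise for $\tv$, so averaging over $U$ or over $v$ "erases" the corresponding noise. Part~(1) is then immediate: taking $i=1$ and erasing both noises, $\E_U\E_v F_U(v)=\E_{T\sim V^{R'}}\Pr_\pi[F(\pi(T))=\pi(1)]$, and substituting the uniform tuple $Z=\pi(T)$ (a measure-preserving change of variables for fixed $\pi$) this equals $\E_{Z\sim V^{R'}}\Pr_\pi[F(Z)=\pi(1)]=1/R'$, since $\pi(1)$ is uniform on $[R']$ and independent of $Z$.

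For part~(2), fix $U$, take $i=1$, and erase the noise on $\tv$ only (replacing it by a uniform vertex $w$ independent of $\tU$ and $\pi$), so $\E_v F_U(v)=\E_{\tU\sim T_V(U)}\,\E_{w\sim V}\,\Pr_\pi[F(\pi(\tU+_1 w))=\pi(1)]$. Let $m$ be the number of coordinates of $\tU$ that $T_V$ resampled, so $m\sim\text{Bin}(R'-1,\e_V)$; put $Z=\pi(\tU+_1 w)$, $p=\pi(1)$, and let $S\subseteq[R']$ be the size-$(m+1)$ set consisting of $p$ together with the positions to which $\pi$ sends the resampled coordinates of $\tU$. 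Conditioned on $m$ and on $S$, the entries $\{Z_s:s\in S\}$ are i.i.d.\ uniform, the distinguished position $p$ is uniform on $S$, and $p$ is independent of $Z$; hence $\Pr[F(Z)=p\mid S,m]\le 1/(m+1)$, so $\E_v F_U(v)\le\E_m[1/(m+1)]=\frac{1-(1-\e_V)^{R'}}{\e_V R'}\le\frac{2}{\e_V R'}$.

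For part~(3), for $i\in[R']$ let $\mathcal E_i$ be the event (over the joint choice of $\tA,\tB$ and of $\pi_A,\pi_B\in S_{R'}$) that $F(\pi_A(\tA))=\pi_A(i)$ and $F(\pi_B(\tB))=\pi_B(i)$. Since $\pi_A$ and $\pi_B$ are bijections, the events $\mathcal E_1,\dots,\mathcal E_{R'}$ are pairwise disjoint and their union is exactly $\{\inv{\pi_A}(F(\pi_A(\tA)))=\inv{\pi_B}(F(\pi_B(\tB)))\}$, so $\sum_i\Pr[\mathcal E_i]\ge\zeta'$ by hypothesis. Fix $i$ and condition on $A\sim V^{R'}$, $B\sim G^{\tensor R'}(A)$: writing $U=A_{-i}$, $v_1=A_i$, $W=B_{-i}$, $v_2=B_i$, the pair $(U,W)$ is distributed as $R'-1$ independent random edges of $G$ and $(v_1,v_2)$ as one more, all independent of $i$; moreover $\tA=\tU+_i\tv_1$ (with $\tU\sim T_V(U)$, $\tv_1\sim T_V(v_1)$) and $\tB=\tW+_i\tv_2$, together with $\pi_A,\pi_B$, are mutually independent given $A,B,i$. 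Hence
\begin{align*}
\Pr[\mathcal E_i]
&=\E_{(U,W)\in E^{R'-1},\,(v_1,v_2)\in E}\bigl[\phi_U(v_1,i)\,\phi_W(v_2,i)\bigr]\\
&=\E_{(U,W)\in E^{R'-1},\,(v_1,v_2)\in E}\bigl[F_U(v_1)\,F_W(v_2)\bigr]
\end{align*}
using the $i$-independence of $\phi$; since the right-hand side does not depend on $i$, it equals $\tfrac1{R'}\sum_i\Pr[\mathcal E_i]\ge\zeta'/R'$, which is part~(3).

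The step I expect to be the main obstacle is part~(2): a single fresh uniform coordinate of $Z$ only yields the trivial bound, so one must genuinely exploit the $\approx\e_V R'$ resampled coordinates of $\tU$, and the delicate point is verifying that after conditioning on $m$ and on the random "fresh" set $S$, the distinguished position $p$ is simultaneously uniform on $S$ and independent of $Z$ — which in turn requires checking that $(p,S)$ is independent of \emph{which} coordinates of $U$ survived. Parts~(1) and (3) are routine once the $i$-independence of $\phi_U(v,i)$ and the noise-erasing observation are in hand.
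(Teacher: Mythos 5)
The paper itself does not actually supply a proof of Proposition~\ref{prop:properties}: the text defers to \cite{RaghavendraS10} and an author note (``Need to include proofs of these with modified notation'') was left unresolved, so there is no paper-internal argument to compare against. Your proof is correct, and the three ingredients you use --- the $i$-independence of $\phi_U(v,i)$ via a fixed conjugating permutation $\sigma$ satisfying $\sigma(X+_iy)=X+_{i'}y$ and $\sigma(i)=i'$, noise erasure when averaging over the uniform marginal, and the exchangeability of the ``fresh'' coordinates in part~(2) --- are precisely the ones used in \cite{RaghavendraS10} for the analogous statements.

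One small remark on part~(2): the $(p,S,Z)$-conditioning is fine, and your worry about checking that $p$ is independent of $Z$ given $S,m$ is legitimate (one verifies that, conditioned further on which coordinates $R$ were resampled, the law of $Z$ given $(m,S,R,p)$ does not depend on $p$, and then averages over $R$). However, you can sidestep the conditioning entirely with a disjointness argument that is perhaps cleaner: for a fixed realization of the fresh-position set $\cI=\{1\}\cup(R+1)\subseteq[R']$, the events $\{F(\pi(T))=\pi(i)\}_{i\in\cI}$ are pairwise disjoint, so $\sum_{i\in\cI}\Pr_\pi[F(\pi(T))=\pi(i)]\le 1$ for every $T$; averaging also over the i.i.d.\ fresh values, exchangeability within $\cI$ (replace $T$ by $\tau(T)$ for a transposition $\tau$ of two indices in $\cI$ and $\pi$ by $\pi\tau$) makes all $|\cI|=m+1$ summands equal, giving $\E\Pr_\pi[F(\pi(T))=\pi(1)]\le 1/(m+1)$ directly. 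This yields the same $\frac{1-(1-\e_V)^{R'}}{\e_V R'}\le\frac{2}{\e_V R'}$ bound. Parts~(1) and~(3) are exactly right as written.
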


Assuming Lemma \ref{lem:glorifiedmarkov} and Proposition \ref{prop:properties}, 
we can now complete the proof of Lemma \ref{lem:smallset-soundness}.

\begin{proof}
By (\ref{eq:soundness-property-3}) and (\ref{eq:soundness-property-1}), we have that
\begin{equation*}
\E_{(U,W) \in E^{{R'}-1}} \Ex[(v_1,v_2) \in E]{F_U(v_1)F_W(v_2)} ~~\geq~~ \tfrac{\zeta'}{{R'}}
~~=~~ \tfrac{\zeta'}{2} \cdot \E_{(U,W) \in E^{{R'}-1}} \Ex[(v_1,v_2) \in E]{F_U(v_1) + F_W(v_2)}
\end{equation*}
Using Lemma \ref{lem:glorifiedmarkov}, this gives that there exist $(U^*,W^*) \in E^{{R'}-1}$ such that
%%%%%%%%%%%%%%%%%%
\newcommand{\fu}{F_{U^*}}
\newcommand{\fw}{F_{W^*}}
%%%%%%%%%%%%%%%%%%
\begin{equation*}
\Ex[(v_1,v_2) \in E]{\fu(v_1)\fw(v_2)} \geq \tfrac{\zeta'}{2{R'}} \qquad \text{and}
\qquad \Ex[(v_1,v_2) \in E]{\fu(v_1)\fw(v_2)} \geq \tfrac{\zeta'}{4} \cdot \inparen{\E \fu + \E \fw}
\end{equation*}
We now construct the set $S$ randomly, by choosing each $v \in V$ to be in
$S$ with probability $(\fu(v)+\fw(v))/2$. We first check that the expected
volume of the set is large.
\begin{eqnarray*}
\E \vol(S) ~=~ \Ex[v \sim V]{\frac{\fu(v)+\fw(v)}{2}}
&=& \Ex[(v_1,v_2) \in E]{\frac{\fu(v_1)+\fw(v_2)}{2}}\\
&\geq& \Ex[(v_1,v_2) \in E]{\frac{\fu(v_1) \fw(v_2)}{2}} 
\qquad(\text{Using}~ a+b \geq ab ~\text{for}~a,b \in [0,1])\\
&\geq& \tfrac{\zeta'}{4{R'}}
\end{eqnarray*}
Combining this with (\ref{eq:soundness-property-2}), we get that 
$\E\vol(S) \in \insquare{\frac{\zeta'}{4{R'}},\frac{2}{\e_V {R'}}}$.
Also, by a Chernoff bound, we have that with probability
$1-\exp(-\Omega(|V|))$, $\vol(S) \in \insquare{\frac{\zeta'}{8{R'}},\frac{3}{\e_V {R'}}}$.

To show that the expansion of the set is bounded away from 1, we show a lower bound
on the expected number of edges that stay within the set, denoted by $G(S,S)$.
\begin{eqnarray*}
\E G(S,S) &=& 
\Ex[(v_1,v_2) \in E]{\inparen{\frac{\fu(v_1)+\fw(v_1)}{2}}\inparen{\frac{\fu(v_2)+\fw(v_2)}{2}}}\\
&=& \tfrac{1}{2} \cdot \E_{(v_1,v_2) \in E}{{\fu(v_1) \fw(v_2)}} ~+~ 
\tfrac{1}{4} \cdot \Ex[(v_1,v_2) \in E]{{\fu^2(v_1) + \fw^2(v_2)}}\\
&\geq& \tfrac{1}{2} \cdot \E_{(v_1,v_2) \in E}{{\fu(v_1) \fw(v_2)}}\\
&\geq& \tfrac{\zeta'}{4} \Ex[v \sim V]{(\fu + \fw)/2} ~~=~~ \tfrac{\zeta'}{4} \cdot \E\vol(S) 
\end{eqnarray*}
Thus, we have
\begin{equation*}
\E\insquare{G(S,S) - \tfrac{\zeta'}{8} \vol(S)} ~\geq~ \tfrac{\zeta'}{8}\E\vol(S) ~\geq~ \tfrac{\zeta'^2}{8{R'}}. 
\end{equation*}
In particular, we get that with probability at least $\tfrac{\zeta'^2}{16{R'}}$ over the choice of $S$,
$G(S,S) \geq \tfrac{\zeta'}{8} \cdot \vol(S)$. Hence, with probability 
$\tfrac{\zeta'^2}{16{R'}} - e^{-\Omega(|V|)}$, we have 
$\vol(S) \in \insquare{\frac{\zeta'}{8{R'}},\frac{3}{\e_V {R'}}}$ and
$G(S,S) \geq \tfrac{\zeta'}{8} \cdot \vol(S)$. For such a set we have
$\Phi_G(S) = 1 - (G(S,S)/\vol(S)) \leq 1 - \tfrac{\zeta'}{8}$, which proves the claim. 
\end{proof}

\subsection{Stronger Small-Set Expansion Hypothesis}
\label{sec:stronger-small-set}

\restateprop{prop:stronger-sse}

\begin{proof}
Let $\eta' = \tfrac{\eta}{M}$. The reduction is in fact, the trivial one which, 
given an instance $G=(V,E)$ of \smallsetexpansion($\eta',\delta$) 
treats as an instance of \smallsetexpansion($\eta,\delta,M$). 
If we are in the {\yes} case of \smallsetexpansion($\eta',\delta$), then there is a set $S$
with $\vol(S) = \delta$ and $\cond_G(S) \leq \eta' \leq \eta$. Hence, we are also in the {\yes} case of 
\smallsetexpansion($\eta,\delta,M$).

For the other direction, assume that we are {\em not} in the {\no} case of \smallsetexpansion($\eta,\delta,M$) and 
there exists a set $S$ with $\vol(S) \in \inparen{\tfrac{\delta}{M},\tfrac{\delta}{M}}$ and $\cond_G(S) \leq 1-\eta$.
Then the fraction of edges $G(S,S)$ stay inside $S$ is at least $\eta \cdot \vol(S)$. If $\vol(S) \geq \delta$, then we
randomly sample a subset $S'$ of $S$ with volume $\delta$. For each edge $(u,v) \sse S$, the chance that
$(u,v) \in S'$ is $\delta^2/\vol(S)^2$. Then
\[ \E\cond_G(S') = 1-\frac{\E G(S',S')}{\delta} \leq 1- \frac{(\delta^2/\vol(S)^2)\cdot \eta \cdot \vol(S)}{\delta} \leq 1-\tfrac{\eta}{M}.\]
Then, we cannot be in the {\no} case of \smallsetexpansion($\eta',\delta$). When $\vol(S) \leq \delta$, we simply create a set $S'$
by adding extra vertices to $S$ to increase its measure to $\delta$. Then,
\[ \cond_G(S') = 1-\frac{G(S',S')}{\delta} \leq 1- \frac{G(S,S)}{\delta} \leq 1 - \frac{\eta \cdot \vol(S)}{\delta} \leq 1- \tfrac{\eta}{M}.\]

\end{proof}

\subsection{Hardness of \mla and \BalancedSeparator} \label{app:mlabalanced}

\begin{corollary}[Hardness of \BalancedSeparator and \minbisection]
There is a constant $c$ such that for arbitrarily small $\epsilon > 0$, it is 
\ssehard to distinguish the following two cases 
for a given graph $G=(V,E)$:
\begin{description}
 \item[\yes:] There exists a cut $(S,V \setminus S)$ in $G$ such that 
$\vol(S) = \tfrac12$ and $\cond_G(S) \leq \epsilon + o(\epsilon)$.
 \item[\no: ] Every cut $(S,V\setminus S)$ in $G$, with 
$\vol(S) \in \inparen{\tfrac{1}{10},\tfrac12}$ satisfies 
$\vol_G(S) \geq c \sqrt{\epsilon} $.
\end{description}
\end{corollary}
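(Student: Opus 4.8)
The plan is to derive the corollary directly from \pref{thm:general-sse}, instantiated with $q=2$ and with the free parameter $\gamma$ chosen as a suitable function of $\epsilon$, via the trivial (identity) reduction on graphs. So fix a small $\epsilon>0$, let $H=(V_H,E_H)$ be the graph produced by \pref{thm:general-sse} with $q=2$, and take $G:=H$; it then remains to check that the \yes and \no cases of that theorem fall inside the \yes and \no cases claimed here. For the \yes case this is essentially immediate: \pref{thm:general-sse} with $q=2$ yields disjoint $S_1,S_2\sse V_H$ with $\vol(S_1)=\vol(S_2)=\tfrac12$ and $\cond_H(S_l)\le\epsilon+o(\epsilon)$. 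Since a random edge lands in a vertex set with probability one half its edge-measure, $\vol(S_1)+\vol(S_2)=\tfrac12\,\E_{e}\abs{e\cap(S_1\cup S_2)}$, and as this equals $1$ every edge has both endpoints in $S_1\cup S_2$; hence $V_H\setminus S_1=S_2$ up to edges of zero weight, so $(S_1,V_H\setminus S_1)$ is a cut with $\vol(S_1)=\tfrac12$ and $\cond_G(S_1)\le\epsilon+o(\epsilon)$, which is exactly the \yes case of the corollary.

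For the \no case, I would take the guarantee of \pref{thm:general-sse} — $\cond_H(S)\ge\cond_{\cG(1-\epsilon/2)}\bigparen{\vol(S)}-\nfrac{\gamma}{\vol(S)}$ for all $S$ — and specialize it to $S$ with $\vol(S)\in(\tfrac1{10},\tfrac12)$. The only quantitative input needed is a uniform lower bound on the Gaussian expansion profile on this range: there is an absolute constant $c_0>0$ such that $\cond_{\cG(1-\epsilon/2)}(\mu)\ge c_0\sqrt\epsilon$ for all $\mu\in[\tfrac1{10},\tfrac12]$ and all sufficiently small $\epsilon$. I would get this from $\cond_{\cG(\rho)}(\mu)=1-\Gamma_\rho(\mu)/\mu$ together with the first-order behaviour of Gaussian noise stability near $\rho=1$, namely $\Gamma_{1-\epsilon/2}(\mu)=\mu-\tfrac1{\sqrt{2\pi}}\,\varphi\bigparen{\Phi^{-1}(1-\mu)}\,\sqrt\epsilon+o(\sqrt\epsilon)$ with the error uniform over the compact $\mu$-interval (which follows by integrating $\tfrac{\partial}{\partial\rho}\Gamma_\rho(\mu)$, the bivariate Gaussian density restricted to the diagonal, from $\rho=1-\epsilon/2$ to $1$); consequently $\cond_{\cG(1-\epsilon/2)}(\mu)=\tfrac1{\sqrt{2\pi}}\cdot\tfrac{\varphi(\Phi^{-1}(1-\mu))}{\mu}\,\sqrt\epsilon+o(\sqrt\epsilon)$, and $\varphi(\Phi^{-1}(1-\mu))/\mu$ is bounded below by a positive constant on $[\tfrac1{10},\tfrac12]$. (As a sanity check, at $\mu=\tfrac12$ one has $\Gamma_\rho(\tfrac12)=\tfrac14+\tfrac{\arcsin\rho}{2\pi}$ and $\arcsin(1-\tfrac\epsilon2)=\tfrac\pi2-\sqrt\epsilon+o(\sqrt\epsilon)$, giving $\cond_{\cG(1-\epsilon/2)}(\tfrac12)=\tfrac{\sqrt\epsilon}{\pi}+o(\sqrt\epsilon)$, in agreement with the formula.) Given this, I would choose $\gamma:=\tfrac{c_0}{20}\sqrt\epsilon$ in \pref{thm:general-sse}; then for every $S$ with $\vol(S)\ge\tfrac1{10}$ we get $\nfrac{\gamma}{\vol(S)}<10\gamma=\tfrac{c_0}{2}\sqrt\epsilon$, hence $\cond_G(S)=\cond_H(S)\ge c_0\sqrt\epsilon-\tfrac{c_0}2\sqrt\epsilon=\tfrac{c_0}2\sqrt\epsilon$, so the corollary holds with $c:=c_0/2$.

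The main obstacle is really just the uniform Gaussian estimate $\cond_{\cG(1-\epsilon/2)}(\mu)=\Omega(\sqrt\epsilon)$ on $[\tfrac1{10},\tfrac12]$ from the previous paragraph; it is standard but should be stated so that the $o(\sqrt\epsilon)$ is uniform in $\mu$, and it is exactly what makes the gap $\epsilon$ versus $\Omega(\sqrt\epsilon)$ rather than $\epsilon$ versus $\epsilon$ (the quantitative phenomenon already noted in the remark following the corollary to \pref{thm:expanding-unique-games}). Everything else is bookkeeping: the volume count in the \yes case, the choice of $\gamma$, and the fact that the class of \ssehard problems is closed under reductions (in particular under the identity reduction above). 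Composing that identity reduction with the reduction from \smallsetexpansion furnished by \pref{thm:general-sse} then shows the promise problem of the corollary is \ssehard, and in particular \BalancedSeparator and \minbisection are \ssehard to approximate to any constant factor since $c\sqrt\epsilon/\epsilon=c/\sqrt\epsilon\to\infty$ as $\epsilon\to0$.
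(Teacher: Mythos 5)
Your argument is correct and follows the same route the paper takes: instantiate \pref{thm:general-sse} with $q=2$ and $\gamma=o(\sqrt\e)$, read off the \yes case from the two-part partition, and in the \no case bound $\cond_G(S)$ below by the Gaussian profile $\cond_{\cG(1-\e/2)}(\vol(S))=\Omega(\sqrt\e)$ uniformly over $\vol(S)\in[\nfrac1{10},\nfrac12]$ minus the $\gamma/\vol(S)$ error. The only difference is that you spell out the Gaussian asymptotics and the concrete choice $\gamma=\Theta(\sqrt\e)$ explicitly, whereas the paper's proof asserts these in one line.
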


\begin{proof}
The result follows immediately by applying \pref{thm:general-sse} with the given $\e$
and taking $q=2$,  $\gamma = o(\sqrt{\e})$. In the {\no} case we get that for all sets $S$ with
$\mu(S) \in \inparen{\tfrac{1}{10}, \tfrac{1}{2}}$,
$G(S,S) \leq \Gamma_{1-\e/2}(1/10) + o(\sqrt{\e}) \leq \vol(S)(1 - c\sqrt{\e})+o(\sqrt{\e})$ for some $c > 0$.
Thus $\cond_G(S) \geq c'\sqrt{\e}$ for some $c' > 0$.
\end{proof}

The following corollary uses the fact that in the \yes case of \pref{thm:general-sse}, we
actually partition the graph into many non-expanding sets instead of finding just one such set.

\begin{corollary}[Hardness of \mla] \label{cor:mla:app} 
It is \ssehard to approximate \mla to any fixed constant factor.  Formally,
there exists $c>0$ such that for every $\e > 0$, it is 
\ssehard to distinguish between the
following two cases for a given graph $G=(V,E)$, with $|V| = n$:
\begin{description}
\item[\yes:] There exists an ordering $\pi \from V \to [n]$ of the vertices 
such that $\Ex[(u,v) \sim E]{\abs{\pi(u) - \pi(v)}} \leq \e n$
\item[\no:] For all orderings $\pi \from V \to [n]$,  
$\Ex[(u,v) \sim E]{\abs{\pi(u) - \pi(v)}} \geq c \sqrt{\e} n$
\end{description}
\end{corollary}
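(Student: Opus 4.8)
The plan is to reduce from \smallsetexpansion\ through \pref{thm:general-sse}, using (essentially) the graph it produces as the \mla\ instance. Fix the target parameter $\e>0$, set $q=\lceil 4/\e\rceil$, let $\e_0=\e/4$ play the role of ``$\epsilon$'' in \pref{thm:general-sse}, and choose $\gamma=\gamma(\e)>0$ small enough that $\gamma=o(\sqrt{\e})$. Since the stationary measure of the graph $H$ output by \pref{fig:ssereduction} is a non-uniform product measure, the first step is a standard normalization: replace each vertex $v$ of $H$ by roughly $\vol_H(v)\cdot n$ identical copies (for a suitable $n$) and split the edge weight at $v$ evenly among them, obtaining a \emph{regular} graph $H'$ on $n$ vertices. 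This preserves the expansion profile up to negligible additive error --- a vertex set of $H'$ corresponds to a function $f\from V(H)\to[0,1]$ with $\E f$ equal to its $H'$-volume and $\iprod{f,Hf}$ equal to its internal edge mass, and \pref{lem:soundness} (the soundness of the reduction) is already phrased for such fractional $f$ --- and it carries the \yes-case partition of $V(H)$ to a partition of $V(H')$. Take $H'$ as the \mla\ instance. For an ordering $\pi\from V(H')\to[n]$ and sweep cut $S_t^\pi=\pi^{-1}(\{1,\dots,t\})$ one has $\vol_{H'}(S_t^\pi)=t/n$ by regularity, together with the identity
\[
  \Ex[(u,v)\sim E(H')]{\abs{\pi(u)-\pi(v)}}
  \;=\;\sum_{t=1}^{n-1}\frac{\bigabs{E_{H'}\bigparen{S_t^\pi,\,V(H')\sm S_t^\pi}}}{\abs{E(H')}}
  \;=\;2\sum_{t=1}^{n-1}\cond_{H'}(S_t^\pi)\,\vol_{H'}(S_t^\pi),
\]
since $\abs{\pi(u)-\pi(v)}$ counts exactly the sweep cuts that separate $u$ from $v$.

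For the \yes\ case I would use the partition $S_1,\dots,S_q$ of $V(H')$ with $\vol_{H'}(S_l)=1/q$ and $\cond_{H'}(S_l)\le\e_0+o(\e_0)$ from \pref{thm:general-sse}, and order the vertices block by block ($S_1$ first, then $S_2$, etc., arbitrarily within each block). An intra-block edge contributes at most $n/q$ and an inter-block edge at most $n$ to $\abs{\pi(u)-\pi(v)}$, while the fraction of inter-block edges is $\sum_l\cond_{H'}(S_l)\vol_{H'}(S_l)\le\e_0+o(\e_0)$. Hence $\Ex[(u,v)\sim E(H')]{\abs{\pi(u)-\pi(v)}}\le n/q+(\e_0+o(\e_0))n\le\e n$ for our choice of $q,\e_0$ (and $\e$ small enough).

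For the \no\ case, fix any ordering $\pi$. For the at least $\tfrac25 n$ values of $t$ with $t/n\in(\tfrac1{10},\tfrac12)$, \pref{thm:general-sse} gives
\[
  \cond_{H'}(S_t^\pi)\;\ge\;\cond_{\cG(1-\e_0/2)}\bigparen{t/n}-\frac{\gamma}{t/n}\;\ge\;c_1\sqrt{\e_0}-10\gamma\;\ge\;\tfrac12 c_1\sqrt{\e_0},
\]
where I use the Gaussian estimate $\Gamma_{1-\e_0/2}(\mu)\le\mu\bigparen{1-\Omega(\sqrt{\e_0})}$ for $\mu$ bounded away from $0$ and $\tfrac12$ (the same estimate behind \pref{cor:balancedseparator}) and $\gamma=o(\sqrt{\e_0})$. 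Each such $t$ then contributes at least $2\cdot\tfrac12 c_1\sqrt{\e_0}\cdot\tfrac1{10}$ to the sum in the identity above, so $\Ex[(u,v)\sim E(H')]{\abs{\pi(u)-\pi(v)}}\ge\tfrac25 n\cdot\tfrac{c_1}{10}\sqrt{\e_0}\ge c\sqrt{\e}\,n$ for an absolute constant $c>0$ (recall $\e_0=\e/4$). Since $q,\e_0,\gamma$ --- and hence all reduction parameters, chosen as in the proof of \pref{thm:general-sse} --- are constants, the reduction runs in polynomial time, and because the \yes\ case delivers a full partition into non-expanding sets this establishes \pref{cor:mla:app} and therefore \pref{cor:mla}.

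The main obstacle is the \no\ case, and specifically the interaction of the two error terms. One needs a constant fraction of the $n$ sweep positions to have volume bounded away from $0$ and $\tfrac12$; this is exactly why the regularizing normalization to $H'$ is essential, since in $H$ itself the heaviest vertex already carries far more than a $1/\abs{V(H)}$ fraction of the mass and the ``middle'' volume range could be crossed in few steps. One also needs the additive slack $\gamma/\vol(S)$ of \pref{thm:general-sse} to be dominated by $\cond_{\cG(1-\e_0/2)}(\mu)=\Omega(\sqrt{\e_0})$ for constant $\mu$, which is immediate from the choice $\gamma=o(\sqrt{\e})$. Both points are routine once \pref{thm:general-sse} and \pref{lem:soundness} are available.
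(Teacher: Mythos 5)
Your proposal is correct and follows the same high-level route as the paper's proof: apply \pref{thm:general-sse} with constant $q=\Theta(1/\e)$, order the partition blocks contiguously for the \yes\ case, and in the \no\ case use the fact that a constant fraction of sweep cuts must land in the ``middle'' volume range where the Gaussian bound forces expansion $\Omega(\sqrt{\e})$. The two presentational differences are worth noting. First, for the \no\ case the paper cites the observation of Devanur, Khot, Saket and Vishnoi (if every set with $\vol(S)\in(\tfrac13,\tfrac12)$ has $G(S,\bar S)\ge\theta$ then every ordering costs at least $\tfrac{\theta}{3}n$), whereas you derive the same conclusion from scratch via the sweep-cut identity $\E_{(u,v)}\abs{\pi(u)-\pi(v)}=2\sum_{t}\cond(S_t^\pi)\vol(S_t^\pi)$; the derivation is correct and makes the argument self-contained. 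Second, you explicitly perform the normalization from the weighted, non-uniform graph $H$ to a regular graph $H'$ on $n$ equally weighted vertices by splitting each vertex into $\approx\vol_H(v)\,n$ copies. This step is genuinely needed for the \mla\ objective (which uses a bijection $\pi\from V\to[n]$ and therefore implicitly uniform vertex weights), and the paper's proof elides it; your justification that \pref{lem:soundness} is stated for fractional indicators and so transfers to $H'$ is the right way to see that the expansion guarantees survive the blow-up. The parameter choices ($q=\lceil 4/\e\rceil$, $\e_0=\e/4$, $\gamma=o(\sqrt{\e})$ versus the paper's $q=\lceil 2/\e\rceil$, $\e'=\e/3$, $\gamma=\e$) differ only cosmetically and both yield the claimed constants.
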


\begin{proof}
Apply \pref{thm:general-sse} taking 
$q = \lceil 2/\e \rceil, \e' = \e/3$ and $\gamma = \e$.
In the {\yes} case, we pick an arbitrary ordering $\pi$ which orders elements in each of 
the sets $S_1,\ldots,S_q$ contiguously. For these sets, all edges in the set
have length at most $n/q$ and at most $\e' + o(\e)$ fraction of the edges leave the sets.
Thus, 
\[\Ex[(u,v) \sim E]{\abs{\pi(u) - \pi(v)}} ~\leq~ \tfrac{n}{q} + \e' n + o(\e n) ~\leq~ \e n\] 

The proof for the {\no} case follows from an observation of \cite{DevanurKSV06}, 
that for a graph $G$ if every set $S$ with $\vol(S) \in (\tfrac13,\tfrac12)$ has 
$G(S,V\setminus S) \geq \theta$, then for any ordering $\pi \from V \to [n]$,
$\Ex[(u,v) \sim E]{\abs{\pi(u) - \pi(v)}} \geq \tfrac{\theta}{3} \cdot n$ (else one can
obtain a contradiction by optimally ordering the points and cutting randomly between the
positions $n/3$ and $2n/3$). Here, $G(S,V\setminus S) \geq 1/3 - \Gamma_{1-\e/6}(1/3) \geq c'\sqrt{\e}$
for some $c' > 0$. Thus, $\Ex[(u,v) \sim E]{\abs{\pi(u) - \pi(v)}} \geq \tfrac{c'}{3} \cdot n$
for all $\pi \from V \to [n]$.
\end{proof}

\end{document}